\newtheorem{lemma}{Lemma}
\newtheorem{definition}{Definition}
\newtheorem{theorem}{Theorem}
\theoremstyle{remark}
\theoremstyle{problem}
\newcommand{\R}{\mathbb{R}}
\def \real    { \mathbb{R} }
\newcommand{\C}{\mathbb{C}}
\newcommand{\e}{\begin{equation}}
\newcommand{\ee}{\end{equation}}
\newcommand{\en}{\begin{equation*}}
\newcommand{\een}{\end{equation*}}
\newcommand{\eqn}{\begin{eqnarray}}
\newcommand{\eeqn}{\end{eqnarray}}
\newcommand{\bmat}{\begin{bmatrix}}
\newcommand{\emat}{\end{bmatrix}}
\renewcommand{\Re}[1]{\operatorname{Re}\left\{#1\right\}}
\renewcommand{\Im}[1]{\operatorname{Im}\left\{#1\right\}}
\DeclareMathAlphabet\mathbfcal{OMS}{cmsy}{b}{n}
\renewcommand{\P}[1]{\operatorname{\mathbb{P}}\left(#1\right)}
\newcommand{\E}{\operatorname{\mathbb{E}}}
\newcommand{\vct}[1]{\boldsymbol{#1}}
\newcommand{\mtx}[1]{\boldsymbol{#1}}
\newcommand{\<}{\langle}
\renewcommand{\>}{\rangle}
\renewcommand{\H}{\mathrm{H}}
\newcommand{\trace}{\operatorname{trace}}
\newcommand{\rank}{\operatorname{rank}}
\newcommand{\dist}{\operatorname{dist}}
\newcommand{\set}[1]{\mathbb{#1}}
\DeclareMathOperator*{\minimize}{\text{minimize}}
\DeclareMathOperator*{\argmin}{\text{arg~min}}
\def \st {\operatorname*{s.t.\ }}
\newcommand{\wh}{\widehat}
\newcommand{\norm}[2]{\left\| #1 \right\|_{#2}}
\newcommand{\abs}[1]{\left| #1 \right|}
\newcommand{\bracket}[1]{\left( #1 \right)}
\newcommand{\parans}[1]{\left(#1\right)}
\newcommand{\innerprod}[2]{\left\langle #1,  #2 \right\rangle}
\newcommand{\calA}{\mathcal{A}}
\newcommand{\calP}{\mathcal{P}}
\newcommand{\calU}{\mathcal{U}}
\newcommand{\va}{\vct{a}}
\newcommand{\vc}{\vct{c}}
\newcommand{\ve}{\vct{e}}
\newcommand{\vp}{\vct{p}}
\newcommand{\vx}{\vct{x}}
\newcommand{\vy}{\vct{y}}
\newcommand{\vrho}{\vct{\rho}}
\newcommand{\vzero}{\vct{0}}
\newcommand{\mA}{\mtx{A}}
\newcommand{\mB}{\mtx{B}}
\newcommand{\mC}{\mtx{C}}
\newcommand{\mD}{\mtx{D}}
\newcommand{\mE}{\mtx{E}}
\newcommand{\mP}{\mtx{P}}
\newcommand{\mQ}{\mtx{Q}}
\newcommand{\mR}{\mtx{R}}
\newcommand{\mU}{\mtx{U}}
\newcommand{\mV}{\mtx{V}}
\newcommand{\mW}{\mtx{W}}
\newcommand{\mX}{\mtx{X}}
\newcommand{\mY}{\mtx{Y}}
\newcommand{\mZ}{\mtx{Z}}
\newcommand{\mDelta}{\mtx{\Delta}}
\newcommand{\mLambda}{\mtx{\Lambda}}
\newcommand{\mSigma}{\mtx{\Sigma}}
\newcommand{\mId}{{\bf I}}
\newcommand{\setS}{\set{S}}
\newlength{\imgwidth}
\newcommand{\twoCol}[2]{\ifthenelse{\boolean{twoColVersion}} {#1} {#2} }
\newtheorem{Corollary}{Corollary}
\title{Optimal Allocation  of Pauli Measurements \\ for Low-rank  Quantum State Tomography}
\author{Zhen Qin, Casey Jameson, Zhexuan Gong,  Michael B. Wakin and Zhihui Zhu\thanks{ZQ (email: qin.660@osu.edu) and ZZ (email: zhu.3440@osu.edu)  are with the Department of Computer Science and Engineering, the Ohio State University; CJ (email: cwjameson@mines.edu) and ZG (email: gong@mines.edu) are with the Department of Physics, Colorado School of Mines; and MBW (email: mwakin@mines.edu) is with the Department of Electrical Engineering, Colorado School of Mines.}
}
\begin{document}

\maketitle

\begin{abstract}
The process of reconstructing quantum states from experimental measurements, accomplished through quantum state tomography (QST), plays a crucial role in verifying and benchmarking quantum devices.  A key challenge of QST is to find out how the accuracy of the reconstruction depends on the number of state copies used in the measurements. When multiple measurement settings are used, the total number of state copies is determined by multiplying the number of measurement settings with the number of repeated measurements for each setting. Due to statistical noise intrinsic to quantum measurements, a large number of repeated measurements is often used in practice. However, recent studies have shown that even with single-sample measurements--where only one measurement sample is obtained for each measurement setting--high accuracy QST can still be achieved with a sufficiently large number of different measurement settings. In this paper, we establish a theoretical understanding of the trade-off between the number of measurement settings and the number of repeated measurements per setting in QST. Our focus is primarily on low-rank density matrix recovery using Pauli measurements. We delve into the global landscape underlying the low-rank QST problem and demonstrate that the joint consideration of measurement settings and repeated measurements ensures a bounded recovery error for all second-order critical points, to which optimization algorithms tend to converge. This finding suggests the advantage of minimizing the number of repeated measurements per setting when the total number of state copies is held fixed. Additionally, we prove that the Wirtinger gradient descent algorithm can converge to the region of second-order critical points with a linear convergence rate. We have also performed numerical experiments to support our theoretical findings.
\end{abstract}
\begin{keywords}
Quantum state tomography, Pauli measurements, Nonconvex optimization.
\end{keywords}

\section{Introduction}
\label{intro}

In the realm of quantum information processing that encompasses quantum computation \cite{nielsen2002quantum, wang2012quantum}, quantum communication \cite{gisin2007quantum, cozzolino2019high}, quantum metrology \cite{braginsky1995quantum, brukner2017quantum}, and more, a pivotal task lies in acquiring knowledge about quantum states. The best way of achieving said task is known as quantum state tomography (QST) \cite{bertrand1987tomographic,vogel1989determination,leonhardt1995quantum,hradil1997quantum,qin2024quantum}, which allows one to fully reconstruct the experimental quantum state with high accuracy. For an $n$-qubit quantum system, its quantum state is represented by a density matrix $\vrho$ of dimension $2^n\times 2^n$. To estimate an unknown experimental quantum state $\vrho$, one typically needs to perform quantum measurements on multiple identical copies of the state. In the most general scenario, these quantum measurements are described by the so-called Positive Operator-Valued Measures (POVMs), which are collections of positive semi-definite (PSD) matrices or operators $\{\mE_1,\ldots,\mE_Q\}$ that sum up to the identity operator. Each operator $\mE_q$ ($q=1,\dots, Q$) in a POVM corresponds to a potential measurement outcome that occurs with a probability denoted by $p_q = \trace(\mE_q \vrho)$. Due to the inherent probabilistic nature of quantum measurements, a precise estimation of $\vrho$ usually requires repeating the measurements a considerable number of times (say $M$) using one or more sets of POVMs. This repetition allows one to obtain a statistically accurate estimate $\wh p_q$ of each $p_q$.

An important question in QST is how many state copies are needed for the measurements to reconstruct a generic quantum state within a given error threshold. The answer to this question depends critically on the choice of measurement settings (i.e., the POVM) used. Commonly studied measurement settings include Pauli measurements \cite{liu2011universal, guctua2020fast}, Clifford measurements \cite{francca2021fast}, and Haar-distributed unitary measurements \cite{voroninski2013quantum}, 4-design \cite{KuengACHA17}, etc. Unless the measurement setting provides information completeness, more than one measurement setting needs to be used. A simple example is the Pauli basis measurement used widely in various quantum experiments, where each qubit is measured in the Pauli $x,y,z$ bases respectively, resulting in $3^n$ different measurement bases/settings necessary for obtaining full information of the quantum state. In general, we shall denote $K$ as the number of measurement settings, and $M$ as the number of measurements for each setting. The total number of state copies used for QST is thus given by $KM$.
In most of the previous studies, $M$ is chosen to be large such that the empirical probabilities $\wh p_q$ are close to the true probabilities $p_q$. Conversely, if $M$ is small, the signal-to-noise ratio between the clean measurements $\{p_q\}$ and the statistical errors $\{\wh p_q - p_q\}$ could become substantially less than $1$, making accurate QST challenging. Surprisingly, recent works \cite{wang2020scalable, huang2020predicting} show that as long as $KM$ is large, even setting $M=1$ can lead to high-accuracy QST, which is somewhat counter-intuitive as each measurement gives little information of the true probability distribution for the particular measurement setting. When the total number of state copies $KM$ is held fixed, it appears that the use of a large number of different measurement settings $K$ can effectively compensate for the large statistical error associated with a low number of measurements per setting $M$. Consequently, the primary question we address in this paper is:

\smallskip
\begin{tcolorbox}[colback=white,left=1mm,top=1mm,bottom=1mm,right=1mm,boxrule=.3pt]
\centering {\bf Question}: What is the trade-off between the number of measurement settings ($K$)\\ and the number of repeated measurements per setting ($M$) in QST?
\end{tcolorbox}

\paragraph{Main results}
Our objective is to answer the main question posed above with a focus on reconstructing a quantum state represented by a low-rank density matrix using Pauli measurements, a topic that has been studied extensively \cite{liu2011universal,flammia2012quantum,kim2023fast,hsu2024quantum}. Specifically, for an $n$-qubit system, we reconstruct the target state from the experimental measurements of a set of $n$-body Pauli observables $\{\mW_k\}_{k=1}^{K}$. Each $\mW_k$ is represented by a tensor product of single-qubit Pauli matrices $\sigma_1$, $\sigma_2$, and $\sigma_3$, as well as the identity matrix $\sigma_0$ (see \eqref{Pauli basis in 2x2} for their explicit forms): $\mW_k = \sigma_{k_1} \otimes \sigma_{k_2}\cdots \sigma_{k_n}$, where $k_i=0,1,2,3$ for $i=1,2,\cdots,n$. Physically, each $\mW_k$ can be measured by measuring $\sigma_{k_i}$ for the $i$-th qubit for $i=1,2,\cdots,n$. Such measurement is routinely performed in most current quantum computing hardware platforms. In total, there are $4^n$ distinct $n$-body Pauli observables denoted by $\{\mW_i\}_{i=1}^{4^n}$. Our QST protocol randomly chooses $K$ of them with replacement and measures each of them $M$ times, thus consuming $KM$ total state copies. It is noteworthy that   some $\{\mW_k\}$ may be measured more than once. While for a generic quantum state, one needs to measure all elements of $\{\mW_i\}_{i=1}^{4^n}$ to gain full information of the state needed for accurate QST, for a low-rank density matrix, measuring only a small random set of $\{\mW_k\}_{k=1}^{K}$ is sufficient for recovering the target state with high precision \cite{liu2011universal,flammia2012quantum}.

Our main result is to show that for a fixed number of total state copies given by $KM$, choosing a larger $K$ (for a sufficiently large $K$) always leads to a smaller upper bound of the recovery error. Our numerical simulation results with random low-rank physical states further show that the actual recovery error also decreases as $K$ is increased, with $M=1$ generally giving the smallest recovery error. This suggests that the optimal Pauli observable measurement strategy is to maximize of the number of measurement settings. Mathematically, we can state our main result with the following theorem:

\begin{theorem} (informal version of \Cref{upper bound recovery error of QST for Pauli measurements} and Corollary~\ref{LocalConvergence_Conclusion_1})
\label{informal combination of two theorems}
Given an $n$-qubit density matrix $\vrho$ of rank $r$, randomly select $K$ $n$-body Pauli observables from the full set $\{\mW_i\}_{i=1}^{4^n}$ and perform measurements of each $M$ times.
For any $\epsilon>0$, if $K \geq \Omega(2^n r n^6)$, $M \leq O(K/n)$, and $KM \geq \Omega(g(K) 4^n n r/\epsilon^2)$, where $g(K)$ is $\Omega(1)$ but decreases monotonically as $K$ increases, then with high probability, every local minimum of a particular low-rank least squares estimator that involves the measurement results can recover the density matrix with an $\epsilon$-closeness in the Frobenius norm, and such minima can be achieved by employing a Wirtinger gradient descent algorithm with a good initialization.
\end{theorem}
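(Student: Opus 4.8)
The plan is to combine three ingredients: a restricted isometry property (RIP) for the randomly subsampled Pauli sampling operator, a global landscape analysis of the nonconvex factored least-squares objective, and a local convergence guarantee for Wirtinger gradient descent. First I would set up the factored formulation, writing $\vrho = \mX\mX^\H$ with $\mX \in \C^{2^n \times r}$ and forming the empirical measurement vector $\wh\vy$ whose $k$-th entry estimates $\trace(\mW_k \vrho)$ from the $M$ repeated samples. Decomposing $\wh\vy = \calA(\vrho) + \ve$, where $\calA$ is the appropriately normalized Pauli sampling operator and $\ve$ is statistical measurement noise, the estimator minimizes $f(\mX) = \tfrac12 \norm{\calA(\mX\mX^\H) - \wh\vy}{2}^2$. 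The PSD factorization $\mX\mX^\H$ removes the need for a balancing regularizer that would otherwise be required to tame the factorization ambiguity $\mX \mapsto \mX\mQ$ with $\mQ$ unitary.

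Second, I would control the two sources of randomness separately. For the sampling operator, I would invoke the Pauli RIP: once $K \gtrsim 2^n r n^6$ observables are drawn, the scaled operator $\calA$ satisfies the rank-$2r$ RIP with a small constant $\delta$ with high probability. For the noise, each setting yields a bounded random outcome whose variance is $O(1/M)$; aggregating across the $K$ settings and pushing through the adjoint, the relevant quantity $\norm{\calA^*(\ve)}{}$ concentrates at a level scaling like $\sqrt{g(K)\,4^n n r /(KM)}$, which is precisely the source of both the $KM \gtrsim g(K)4^n n r/\epsilon^2$ requirement and the monotone-in-$K$ trade-off. The hypothesis $M \le O(K/n)$ is what allows the RIP and the noise concentration to hold jointly under the same draw of observables.

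Third, with RIP in hand I would run the landscape argument. At any second-order critical point $\mX$ of $f$ (first-order stationarity $\nabla f(\mX)=\vzero$ together with positive semidefiniteness of the Wirtinger Hessian $\nabla^2 f(\mX) \succeq \vzero$), I would probe the curvature along the direction $\Delta = \mX - \mX^\star\mQ$, where $\mX^\star$ is a factor of $\vrho$ and $\mQ$ the optimally aligning unitary. Plugging $\Delta$ into the Hessian bilinear form and combining the resulting nonnegativity with the first-order condition and RIP's near-isometry on low-rank differences forces the residual $\mX\mX^\H - \vrho$ to be small, with a bound of the form $\norm{\mX\mX^\H - \vrho}{F} \lesssim \norm{\calA^*(\ve)}{}/(1-\delta)$, which by Step~2 is at most $\epsilon$. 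For the algorithmic claim I would establish local restricted strong convexity and smoothness of $f$ near the solution (again via RIP), so that Wirtinger gradient descent contracts geometrically, and verify that a spectral initialization from the leading eigenspace of $\calA^*(\wh\vy)$ lands in this basin of attraction with high probability.

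The main obstacle is the tight coupling of Steps~2 and~3: propagating the statistical noise $\ve$ --- whose effective magnitude depends jointly and non-uniformly on $K$ and $M$, since the per-setting variance depends on the unknown state through the true probabilities --- to a bound that holds \emph{uniformly} over all second-order critical points on the low-rank manifold. Making the dependence on $g(K)$ explicit, and tight enough to certify that for fixed $KM$ the error bound strictly decreases as $K$ grows, is the delicate part; it demands a careful matrix-concentration estimate for $\calA^*(\ve)$ that cleanly separates the sampling randomness from the measurement randomness while retaining the correct scaling in $n$, $r$, $K$, and $M$.
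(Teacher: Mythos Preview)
Your proposal is correct and follows essentially the same route as the paper: invoke the Pauli RIP of Liu for $K\gtrsim 2^n r n^6$, bound $\|\calA^*(\ve)\|$ by matrix Bernstein to get the $\sqrt{4^n n r/(KM)}$ scaling (with the $g(K)$ factor arising as $h(\delta_{2r})^2$ for $\delta_{2r}$ decreasing in $K$), probe the Wirtinger Hessian at a second-order critical point along $\Delta = \mX - \mX^\star\mQ$ and combine with first-order stationarity and RIP to force $\|\mX\mX^\H-\vrho\|_F$ small, and then establish local linear convergence of Wirtinger gradient descent from a spectral initialization. The only minor terminological difference is that the paper phrases the local analysis via a \emph{regularity condition} (a lower bound on $\Re\langle\nabla f,\mX-\mX^\star\mQ\rangle$) rather than restricted strong convexity/smoothness, but these serve the same purpose and your outline captures all the essential ideas.
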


Our results quantify the impact of the total number of state copies $KM$ on the recovery error. Importantly, since the function $g(K)$ and the failure probability are both inversely related to $K$, increasing $K$ while keeping $KM$ constant can substantially decrease the recovery error while boosting the success probability.
In addition, we propose a Wirtinger gradient-based factorization approach, which proves the existence of a method that can converge to the region of favorable local minima with a linear convergence rate when appropriately initialized. Our theoretical findings are substantiated by numerous experimental results.

\paragraph{Related works}

Numerous theoretical studies in the QST literature~\cite{flammia2012quantum,KuengACHA17,zhang2017efficient,kyrillidis2018provable,brandao2020fast,guctua2020fast,francca2021fast} have proposed methodologies involving the least-squares loss function to recover low-rank density matrices. It is noteworthy that \cite{kyrillidis2018provable,guctua2020fast} primarily investigate error and algorithmic aspects utilizing Pauli measurements. Drawing upon the projected least squares method and employing Pauli measurements, which include both Pauli observable and Pauli basis measurements, \cite{guctua2020fast}~examined the necessary number of total state copies ($KM$) to achieve accurate recovery in the trace norm. However, the trade-off between $K$ and $M$ was not explored. While \cite{kyrillidis2018provable} dealt with a loss function similar to ours, it did not address measurement noise, which arises due to the utilization of empirical Pauli measurements. Moreover, the algorithm analysis failed to establish the connection between recovery error and the total number of state copies ($KM$).

The QST problem can be viewed as complex matrix sensing, involving a specific type of measurement operator and inherently probabilistic measurements. In the last few decades, nonconvex optimization for matrix sensing has evolved in two main directions: (1) local geometry of two-stage algorithms \cite{jain2013low,zheng2015convergent,Tu16,charisopoulos2021low,tong2021accelerating,li2020nonconvex}, which rely on optimization algorithms with proper initialization; and (2) global landscape analysis and initialization-free algorithms \cite{bhojanapalli2016global,park2017non,wang2017unified,ge2017no,zhu2018global,li2019non,zhu2021global}. However, these prior works have predominantly focused on real matrix sensing in noiseless environments or with Gaussian noise, making them unsuitable for addressing QST involving complex matrices and measurement noise. Note that as a special case of matrix sensing, previous studies have explored the local geometry and global landscape of complex phase retrieval with a rank-one target matrix \cite{candes2015phase, sun2018geometric}. To solve this problem efficiently, the Wirtinger flow algorithm with a convergence guarantee has been provided in \cite{candes2015phase}. We consider an algorithm that is essentially the same as Wirtinger flow, but we highlight two important differences. First, we employ a different step size than that in~\cite{candes2015phase}, where it was prescribed to be inversely proportional to the magnitude of the initial guess. Second, whereas \cite{candes2015phase} studied the recovery of a rank-one matrix from rank-one measurements, our theoretical results extend these findings to the case of rank-$r$ complex matrix recovery with measurement noise.

\subsection{Notation}
\label{sec:notation}

We use bold capital letters (e.g., $\mX$) to denote matrices,  bold lowercase letters (e.g., $\vx$) to denote vectors, and italic letters (e.g., $x$) to denote scalar quantities. The calligraphic letter $\calA$ is reserved for the linear measurement map.  For a positive integer $K$, $[K]$ denotes the set $\{1,\dots, K \}$. The superscripts $(\cdot)^\top$ and $(\cdot)^{\rm H}$ denote the transpose and Hermitian transpose operators, respectively, while the superscript $(\cdot)^*$ denotes the complex conjugate. For two matrices $\mA,\mB$ of the same size, $\innerprod{\mA}{\mB} = \trace(\mA^{\rm H}\mB)$ denotes the inner product between them.
$\|\mA\|$ and $\|\mA\|_F$ respectively represent the spectral norm and Frobenius norm of $\mA$.
For a vector $\va$ of size $N\times 1$, its $l_n$-norm is defined as $||\va||_n=(\sum_{m=1}^{N}|a_m|^n)^\frac{1}{n}$.
For two positive quantities $a,b\in \real$, the inequality $b\lesssim a$ or $b = O(a)$ means $b\leq c a$ for some universal constant $c$; likewise, $b\gtrsim a$ or $b = \Omega(a)$ represents $b\ge ca$ for some universal constant $c$. $\calU_r$ is the set of unitary matrices of size $r$.

\section{Preliminary}
\label{Preliminary}
In this section, we provide a brief review of QST based on Pauli measurements, starting from basic concepts in quantum mechanics such that quantum state and quantum measurement.

\paragraph{Quantum state and measurements}
For a quantum system composed of $n$ qubits, its quantum state is represented by a density matrix $
\vrho\in\C^{2^n\times 2^n}$ that obeys two properties: $(i)$~ $\vrho \succeq \vzero$ is a positive semidefinite (PSD) matrix, and $(ii)$~$\trace(\vrho) = 1$. The objective of quantum state tomography is to construct or estimate the density matrix $\vrho$ of a physical quantum system by performing measurements on an ensemble of identical state copies.

The most general measurements on a quantum system are denoted by Positive Operator Valued Measures (POVMs)  \cite{nielsen2002quantum}. Specifically, a POVM is a set of PSD matrices
\begin{eqnarray}
\label{The defi of POVM}
\{\mE_1,\ldots,\mE_Q \}\in\C^{2^n\times 2^n}, \ \ \st  \sum_{q=1}^Q \mE_q = \mId_{2^n}.
\end{eqnarray}
Each POVM element $\mE_q$ is linked to a potential outcome of a quantum measurement. The probability $p_q$ of detecting the $q$-th outcome while measuring the density matrix $\vrho$ can be expressed as follows:
\begin{eqnarray}
\label{The defi of POVM 2}
p_q = \innerprod{\mE_q}{\vrho},
\end{eqnarray}
where $\sum_{q=1}^Qp_q=1$ due to \eqref{The defi of POVM} and the fact that $\trace(\vrho) = 1$. Given the probabilistic nature of quantum measurements, it is necessary to prepare multiple copies of the quantum state in the experiment. In other words, since it is not possible to directly obtain $\{p_q\}$ in a single physical experiment, one typically repeats the measurement process $M$ times. By taking the average of the statistically independent outcomes, one can obtain the empirical probabilities as follows:
\begin{equation}
\wh p_{q} = \frac{f_q}{M}, \  q \in[Q]:=\{1,\ldots,Q\},
\label{eq:empirical-prob}
\end{equation}
where $f_q$ denotes the number of times the $q$-th outcome is observed in the $M$ experiments. The empirical probabilities $\{\wh p_q\}$ can then be used to recover or estimate the unknown density operator $\vrho$.

Collectively, the random variables $f_1,\ldots,f_Q$ are characterized by the multinomial distribution $\operatorname{Multinomial}(M, \vp)$ \cite{severini2005elements} with parameters $M$ and $\vp = \begin{bmatrix} p_1 & \cdots & p_Q\end{bmatrix}^\top$, where $p_q$ is defined in \eqref{The defi of POVM 2}. It follows that the empirical probability $\wh p_q$ in \eqref{eq:empirical-prob} is an unbiased estimator of the true probability $p_q$.
For convenience, we call $\{p_q\}$ and $\{\wh p_q\}$ the population and empirical (linear) measurements, respectively.

\paragraph{Pauli observables}
Pauli observables/matrices are widely used in quantum measurements. For a single qubit, the Pauli observables are represented by the following $2\times2$ matrices:
\begin{equation}
\label{Pauli basis in 2x2}
\sigma_0 = \begin{bmatrix}1 & 0\\  0 & 1 \end{bmatrix},\ \sigma_1 = \sigma_x= \begin{bmatrix}0 & 1 \\ 1 & 0 \end{bmatrix},\ \sigma_2 = \sigma_y=\begin{bmatrix}0 & i \\ -i & 0 \end{bmatrix},\ \sigma_3 = \sigma_z=\begin{bmatrix}1 & 0 \\ 0 & -1 \end{bmatrix}.
\end{equation}
Here we include the identity matrix $\sigma_0$ which is technically not a Pauli matrix for the purpose of forming an orthonormal basis for $\C^{2\times 2}$ (with proper normalization). For $n$ qubits, letting $D = 2^n$, we can obtain $D$-dimensional Pauli matrices $\mW_k\in\C^{D\times D}$ by forming $n$-fold tensor products of $\sigma_0,\sigma_1,\sigma_2,\sigma_3$:
\begin{equation}
\mW_k =  \sigma_{k_1}\otimes \sigma_{k_2}\otimes \cdots \otimes \sigma_{k_n}, \ (k_1,\ldots,k_n)\in \{0,1,2,3\}^n, \ \forall k = 1, \ldots,D^2.
\label{eq:Pauli-matrix}
\end{equation}
Each $D$-dimensional Pauli matrix $\mW_k$ has full rank with eigenvalues $\pm 1$. Moreover, the properly normalized $D$-dimensional Pauli matrices, $\frac{1}{\sqrt{D}}\{\mW_1,\ldots,\mW_{D^2}\}$, form an orthonormal basis for $\C^{D\times D}$. Because of this, any density matrix $\vrho\in\C^{D\times D}$ is fully characterized by the inner products $\<\mW_1, \vrho \>, \dots, \<\mW_{D^2}, \vrho \> $, which are physically the expectation values of the $n$-qubit Pauli observable $\mW_k$ and are referred to as the {\it Pauli observable measurements}. The problem of QST---estimating $\vrho$---can therefore be formulated as the problem of estimating all $\{\<\mW_k, \vrho \>\}_{k=1}^{D^2}$. Experimentally, we can measure each $\mW_k$ using the following Pauli basis measurement that is routinely performed in various quantum computing platforms.

\paragraph{Pauli basis measurement}

Again let $D = 2^n$. Following \cite{flammia2012quantum,kim2023fast,hsu2024quantum},  we can expand each $\mW_k$ using linear combinations of local Pauli basis measurement operators.  Specifically, we can associate each two-dimensional Pauli basis matrix $\sigma_i$ with a two-outcome POVM $\{\mE_i^{\pm}\}$ corresponding to its eigenvectors. For example,
\begin{eqnarray}
\label{defi local Pauli measurement}
&&\sigma_1  = \begin{bmatrix}0 & 1 \\ 1 & 0 \end{bmatrix} = \underbrace{ \begin{bmatrix} \frac{1}{\sqrt{2}} \\ \frac{1}{\sqrt{2}}  \end{bmatrix}  \begin{bmatrix} \frac{1}{\sqrt{2}} & \frac{1}{\sqrt{2}}  \end{bmatrix} }_{\mE_1^+}
- \underbrace{ \begin{bmatrix} \frac{1}{\sqrt{2}} \\ - \frac{1}{\sqrt{2}}  \end{bmatrix}  \begin{bmatrix} \frac{1}{\sqrt{2}} & - \frac{1}{\sqrt{2}}  \end{bmatrix} }_{\mE_1^- },\\
&&\sigma_2  = \begin{bmatrix}0 & i \\ -i & 0 \end{bmatrix} = \underbrace{ \begin{bmatrix} \frac{1}{\sqrt{2}} \\ -\frac{i}{\sqrt{2}}  \end{bmatrix}  \begin{bmatrix} \frac{1}{\sqrt{2}} & \frac{i}{\sqrt{2}}  \end{bmatrix} }_{\mE_2^+}
- \underbrace{ \begin{bmatrix} \frac{1}{\sqrt{2}}\\  \frac{i}{\sqrt{2}}  \end{bmatrix}  \begin{bmatrix} \frac{1}{\sqrt{2}} &  -\frac{i}{\sqrt{2}}  \end{bmatrix} }_{\mE_2^- },\\
&&\sigma_3  = \begin{bmatrix}1 & 0 \\ 0 & -1 \end{bmatrix} = \underbrace{ \begin{bmatrix} 1 \\ 0  \end{bmatrix}  \begin{bmatrix} 1 & 0  \end{bmatrix} }_{\mE_3^+}
- \underbrace{ \begin{bmatrix} 0 \\ 1  \end{bmatrix}  \begin{bmatrix} 0 & 1  \end{bmatrix} }_{\mE_3^- }.
\end{eqnarray}
Here, each $\{\mE_i^{\pm}\}$ represents a rank-one orthonormal POVM\footnote{The measurement operators within a rank-one orthonormal POVM are projection operators onto orthonormal states of the Hilbert space} for $i = 1, 2, 3$. It is important to note that since $\sigma_0$ is an identity matrix, we can associate $\sigma_0$ with any of the rank-one orthonormal POVM elements $\{\mE_i^{\pm}\}$, specifically, $\sigma_0 = \mE_i^+ + \mE_i^-$ for all $i = 1, 2, 3$. To simplify the notation, we can set $\mE_0^\pm = \mE_3^\pm$. Thus, we can associate each Pauli basis matrix $\sigma_i$ with a two-outcome POVM $\{\mE_i^{\pm}\}$ in a unified way as
\begin{eqnarray}
\label{unified way of Pauli matrix sigma}
\sigma_i = \mE_i^+ + \wh\alpha_i \mE_i^-, \ \forall i = 0,1,2,3, \ \text{where} \ \wh\alpha_0 = 1, \wh\alpha_1 = \wh\alpha_2 = \wh\alpha_3 = -1.
\end{eqnarray}
Furthermore, we can rewrite the $k$-th $D$-dimensional Pauli matrix $\mW_k$ as
\begin{eqnarray}
\label{eq:Pauli-local-POVM}
\mW_k &\!\!\!\! =\!\!\!\!&  \sigma_{k_1}\otimes  \cdots \otimes \sigma_{k_n} = ( \mE_{k_1}^+  + \wh\alpha_{k_1} \mE_{k_1}^-) \otimes  \cdots \otimes (\mE_{k_n}^+  + \wh \alpha_{k_n} \mE_{k_n}^-) \nonumber\\
&\!\!\!\! =\!\!\!\!& \mE_{k_1}^+ \otimes \mE_{k_2}^+ \otimes \cdots \mE_{k_n}^+ + \wh\alpha_{k_1} \mE_{k_1}^- \otimes \mE_{k_2}^+ \otimes \cdots \otimes  \mE_{k_n}^+ \nonumber\\
&\!\!\!\! \!\!\!\!&  + \cdots + \wh\alpha_{k_1} \mE_{k_1}^-\otimes \wh\alpha_{k_2} \mE_{k_2}^- \otimes \cdots \otimes \wh \alpha_{k_n} \mE_{k_n}^- \nonumber\\
&\!\!\!\!=:\!\!\!\!& \sum_{j = 1}^D \alpha_{k,j} \mE_{k,j}, \quad \alpha_{k,1},\ldots, \alpha_{k,D} \in \pm 1,
\end{eqnarray}
which implies that we can define a Pauli basis-based $D$-outcome POVM:
\begin{eqnarray}
\label{defi of POVM in Pauli basis measurements}
\bigg\{\mE_{k,1},\ldots,\mE_{k,D}\bigg\} = \bigg\{\mE_{k_1}^{\pm} \otimes \cdots \otimes \mE_{k_n}^{\pm}\bigg\}.
\end{eqnarray}
Quantum measurements with this POVM will be characterized by $D$ probabilities (population measurements), which we refer to as the {\it Pauli basis measurements}:
\begin{eqnarray}
\label{defi of Pauli basis measurements}
\bigg\{ p_{k,1},\ldots,  p_{k,D}\bigg\} = \bigg\{\<\mE_{k,1}, \vrho\>, \ldots, \<\mE_{k,D}, \vrho\> \bigg\}.
\end{eqnarray}
Finally, the Pauli observable measurement with $\mW_k$ can be expressed using a linear combination of these probabilities:
\begin{eqnarray}
\label{association Pauli measurement and Pauli basis measurements}
\<\mW_k, \vrho \> = \sum_{j=1}^D \alpha_{k,j} \cdot  p_{k,j}.
\end{eqnarray}

From the discussion above, it follows that we can obtain empirical estimates of $\{\<\mW_k, \vrho \>\}$ by performing the Pauli basis measurements in \eqref{defi of Pauli basis measurements}. Specifically, for a fixed $k \in \{1,2,\dots,D^2\}$, we perform the experiments $M$ times and obtain the empirical probabilities $\{\wh p_{k,1},\ldots, \wh p_{k,D}\}$. Then, the $k$-th Pauli observable measurement
can be estimated as follows:
\begin{eqnarray}
\label{empirical Pauli basis measurements}
\<\mW_k, \vrho \> \approx \sum_{j=1}^D \alpha_{k,j} \cdot \wh p_{k,j},
\end{eqnarray}
where the right-hand side is also called an empirical Pauli observable measurement.

\paragraph{Statistical error in the empirical Pauli observables} Since the measurement of each Pauli observable $\mW_k$ can only be done experimentally for a finite number of times, the empirical estimate of each $\<\mW_k, \vrho \>$  is subject to statistical errors. Intuitively, one might expect that empirical Pauli observables
obtained from empirical Pauli basis measurements could exhibit large statistical errors, as they involve a great number
of empirical probabilities. However, according to \Cref{lem:MSE-empricial-pmfs} in {Appendix} \ref{sec: auxiliary materials}, the following result demonstrates that the statistical error for this case can be bounded by the order of $1/M$.
\begin{lemma}
For any Pauli matrix $\mW_k\in\C^{D\times D}$ as defined in \eqref{eq:Pauli-matrix}, the term $\sum_{j=1}^D \alpha_{k,j} \cdot \wh p_{k,j}$ in \eqref{empirical Pauli basis measurements} serves as an unbiased estimator for $\<\mW_k, \vrho \>$ with a statistical error bounded as:
\begin{align}
\E \left[\parans{\bigg(\sum_{j=1}^D \alpha_{k,j} \cdot \wh p_{k,j}\bigg) - \<\mW_k, \vrho \> }^2 \right]  \le \frac{2 - 2/D}{M} \le \frac{2}{M}.
\label{eq:MSE-empricial-Pauli}\end{align}
\label{lem:MSE-empricial-Pauli}\end{lemma}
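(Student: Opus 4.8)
The plan is to establish unbiasedness first and then identify the mean squared error on the left of \eqref{eq:MSE-empricial-Pauli} with a variance. Since the counts $f_{k,j}$ follow $\operatorname{Multinomial}(M,\vp_k)$, we have $\E[f_{k,j}] = M p_{k,j}$, so each $\wh p_{k,j} = f_{k,j}/M$ is unbiased for $p_{k,j}$; by linearity and the identity $\<\mW_k,\vrho\> = \sum_{j=1}^D \alpha_{k,j} p_{k,j}$ from \eqref{association Pauli measurement and Pauli basis measurements}, it follows that $\E[\sum_j \alpha_{k,j}\wh p_{k,j}] = \<\mW_k,\vrho\>$. Consequently the left-hand side of \eqref{eq:MSE-empricial-Pauli} is exactly $\var\parans{\sum_j \alpha_{k,j}\wh p_{k,j}}$, and the task reduces to bounding this variance by $(2-2/D)/M$.

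The cleanest route I would take bypasses the multinomial covariance bookkeeping by reducing to a scalar i.i.d.\ average. Each of the $M$ repetitions of the POVM $\{\mE_{k,1},\ldots,\mE_{k,D}\}$ produces an outcome index $J_m\in[D]$, i.i.d.\ with pmf $\vp_k$, and $\sum_j \alpha_{k,j}\wh p_{k,j} = \frac1M\sum_{m=1}^M \alpha_{k,J_m}$. Setting $Y_m := \alpha_{k,J_m}$, the crucial observation is that $\alpha_{k,j}\in\{\pm1\}$ by \eqref{eq:Pauli-local-POVM}, so $Y_m^2=1$ and the $Y_m$ are i.i.d.\ with $\E[Y_m]=\<\mW_k,\vrho\>$. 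Hence $\var\parans{\frac1M\sum_m Y_m} = \frac1M\parans{1-\<\mW_k,\vrho\>^2}$. Because $\mW_k$ is Hermitian with eigenvalues $\pm1$ and $\vrho$ is PSD with unit trace, $|\<\mW_k,\vrho\>|\le \|\mW_k\|\,\trace(\vrho)=1$, so this variance is at most $1/M$, which already implies the claim since $1\le 2-2/D$ for $D=2^n\ge 2$.

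To instead recover the constant $2-2/D$ through \Cref{lem:MSE-empricial-pmfs}, I would expand the variance using the multinomial moments $\var(f_{k,j}) = M p_{k,j}(1-p_{k,j})$ and $\operatorname{Cov}(f_{k,j},f_{k,j'}) = -M p_{k,j}p_{k,j'}$, which after $\alpha_{k,j}^2=1$ yields $\var\parans{\sum_j\alpha_{k,j}\wh p_{k,j}} = \frac1M\bracket{\sum_j p_{k,j}(1-p_{k,j}) - \sum_{j\ne j'}\alpha_{k,j}\alpha_{k,j'}p_{k,j}p_{k,j'}}$. The diagonal sum is the empirical-pmf mean squared error handled by \Cref{lem:MSE-empricial-pmfs}, giving $\sum_j p_{k,j}(1-p_{k,j}) = 1-\sum_j p_{k,j}^2 \le 1-1/D$ via $\sum_j p_{k,j}^2 \ge (\sum_j p_{k,j})^2/D = 1/D$; the off-diagonal sum is bounded in absolute value by $\sum_{j\ne j'}p_{k,j}p_{k,j'} = 1-\sum_j p_{k,j}^2 \le 1-1/D$. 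Adding the two contributions produces $(2-2/D)/M$.

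The only delicate point is the off-diagonal term: the signs $\alpha_{k,j}\alpha_{k,j'}$ are mixed, so it cannot simply be discarded, and a careless estimate could inflate the constant. The i.i.d.\ reduction sidesteps this entirely, since $Y_m\in\{\pm1\}$ forces $\E[Y_m^2]=1$ irrespective of the sign pattern; I would therefore present it as the main argument and use the covariance computation only to reproduce the stated $2-2/D$ constant. Either way, the conclusion is that the statistical error of the empirical Pauli observable decays as $O(1/M)$.
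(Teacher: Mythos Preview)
Your proposal is correct. Your secondary route---expanding the variance via the multinomial covariance $\Sigma_{j,j'}$, taking absolute values, and bounding each piece by $1-1/D$---is exactly the paper's proof (this is the content of \Cref{lem:MSE-empricial-pmfs}, which is what the paper invokes to obtain \Cref{lem:MSE-empricial-Pauli}). Your primary route, the i.i.d.\ reduction $\sum_j\alpha_{k,j}\wh p_{k,j}=\frac1M\sum_{m=1}^M Y_m$ with $Y_m=\alpha_{k,J_m}\in\{\pm1\}$, is genuinely different and in fact strictly sharper: it yields the exact variance $\tfrac1M(1-\langle\mW_k,\vrho\rangle^2)\le 1/M$, whereas the paper's absolute-value step loses a factor of roughly two to arrive at $(2-2/D)/M$. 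The i.i.d.\ argument also makes transparent \emph{why} the error does not grow with $D$---each shot is a single $\pm1$ sample of the observable---while the covariance bookkeeping obscures this behind cancellations. One small clarification: \Cref{lem:MSE-empricial-pmfs} already bounds the full variance, not just the diagonal part as you suggest, so your ``secondary'' computation is not invoking that lemma but rather reproducing its proof.
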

Note that the left hand side of \eqref{eq:MSE-empricial-Pauli} is the variance of the error in estimating $\<\mW_k, \vrho \> $.
The result in \eqref{eq:MSE-empricial-Pauli} indicates that the statistical error does not increase significantly with respect to $D$. To support this argument, we conduct a numerical experiment as follows: We randomly generate $100$ Pauli matrices by selecting $100$ indices from the set $\{1,\dots, D^2 \}$ with replacement and a density matrix $\vrho = \mU \mU^{\rm H}\in\C^{2^n\times 2^n}$, where $\mU = \frac{ \mA + i \cdot \mB}{\|\mA + i \cdot \mB\|_F}\in \C^{2^n\times r}$ with the entries of $\mA$ and $\mB$ being independent and identically distributed (i.i.d.)\ from the standard normal distribution. In Figures~\ref{n = 1 error bound}-\ref{n = 7 error bound}, we calculate $\E [(\sum_{j=1}^D \alpha_{k,j} \cdot \wh p_{k,j} - \<\mW_k, \vrho \> )^2 ]$  (averaged over 100 Monte Carlo trials) for various numbers of qubits $n$ and experiments $M$ and show that it generally scales as $\frac{1}{M}$, consistent with \Cref{lem:MSE-empricial-Pauli}.

\begin{figure}[!ht]
\centering
\subfigure[]{
\begin{minipage}[t]{0.4\textwidth}
\centering
\includegraphics[width=6.5cm]{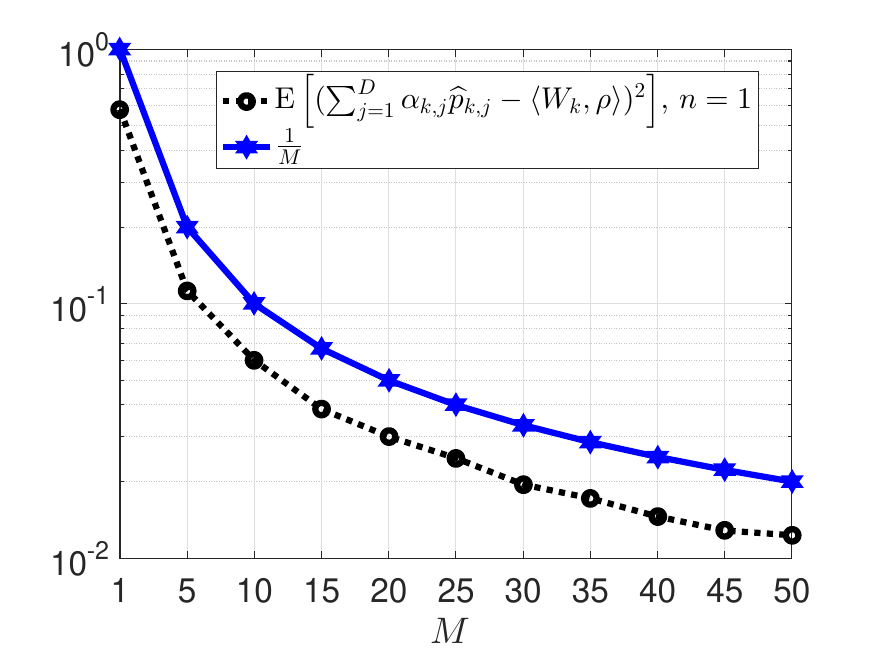}
\end{minipage}
\label{n = 1 error bound}
}
\subfigure[]{
\begin{minipage}[t]{0.4\textwidth}
\centering
\includegraphics[width=6.5cm]{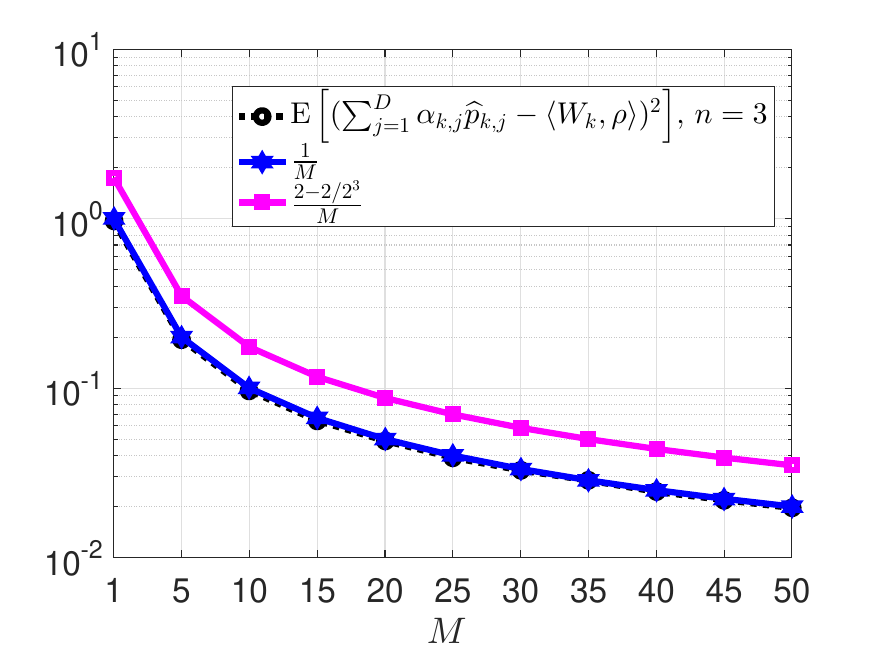}
\end{minipage}
\label{n = 3 error bound}
}
\subfigure[]{
\begin{minipage}[t]{0.4\textwidth}
\centering
\includegraphics[width=6.5cm]{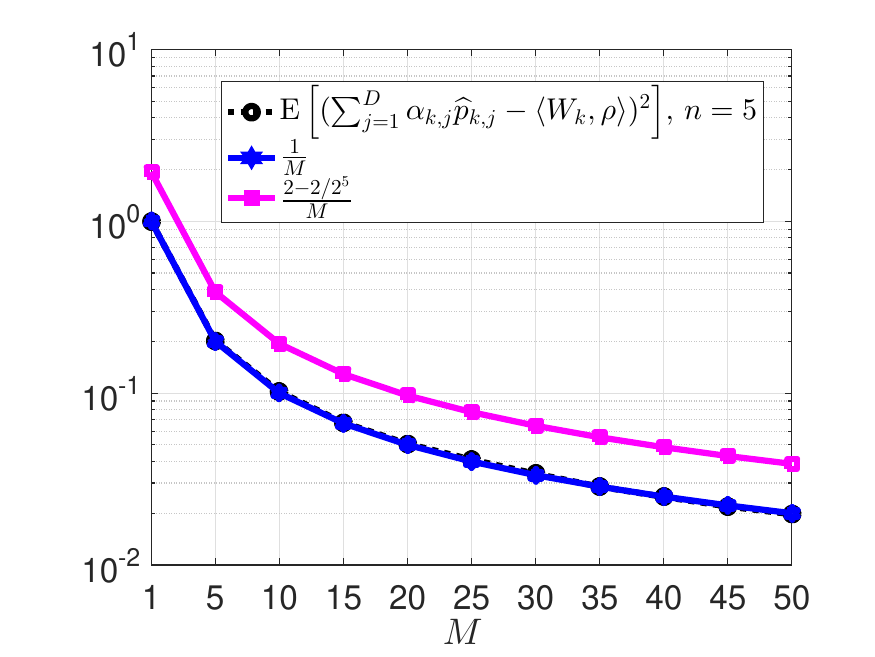}
\end{minipage}
\label{n = 5 error bound}
}
\subfigure[]{
\begin{minipage}[t]{0.4\textwidth}
\centering
\includegraphics[width=6.5cm]{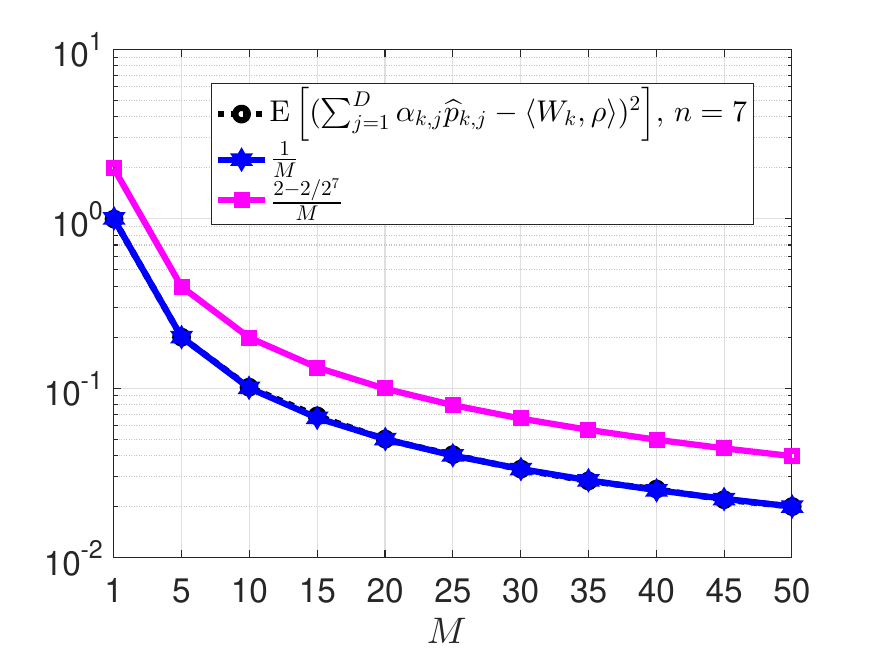}
\end{minipage}
\label{n = 7 error bound}
}
\caption{Numerical computation of $\E [(\sum_{j=1}^D \alpha_{k,j} \cdot \wh p_{k,j} - \<\mW_k, \vrho \> )^2 ]$ for different $n$ and $M$.}
\end{figure}

\section{Recovery Guarantee for Pauli Observable Measurements}
\label{Recovery guarantee for Pauli}

In this section, we will study the recovery of a ground truth density matrix $\vrho^\star$ from the above-mentioned Pauli observable measurements.

\subsection{Measurement setting and the matrix factorization approach}
\begin{definition}[Measurement setting]
    Suppose we draw $\mA_1,\ldots,\mA_K$ i.i.d.\ uniformly at random (with replacement) from the Pauli matrices $\{\mW_1,\ldots,\mW_{D^2}\}$ defined in \eqref{eq:Pauli-matrix}. We can generate $K$ Pauli observables for $\vrho^\star$ through the linear measurement operator $\calA: \C^{D\times D} \rightarrow \R^K$ as
\begin{eqnarray}
\label{The defi of Pauli measurement sec:3}
\vy = \calA(\vrho^\star) = \begin{bmatrix}
          y_1 \\
          \vdots \\
          y_K
        \end{bmatrix} = \begin{bmatrix}
          \<\mA_1, \vrho^\star \> \\
          \vdots \\
          \<\mA_K, \vrho^\star \>
        \end{bmatrix} = \begin{bmatrix}
          \sum_{j=1}^D \alpha_{1,j} \cdot  p_{1,j} \\
          \vdots \\
          \sum_{j=1}^D \alpha_{K,j} \cdot  p_{K,j}
        \end{bmatrix},
\end{eqnarray}
where $\{p_{k,1},\ldots,p_{k,D}\}$ are the Pauli basis measurements in \eqref{defi of Pauli basis measurements}.
Note that $K$ also denotes the number of POVMs. For each $D$-outcome POVM, suppose we repeat the measurement process $M$ times and take the average of the outcomes to generate empirical probabilities
\begin{align}
\wh p_{k,j} = \frac{f_{k,j}}{M}, \ k\in[K], \ j\in[D],
\label{eq:empirical-prob sec:3}\end{align}
where $f_{k,j}$ denotes the number of times the $j$-th output is observed when using the $k$-th POVM $M$ times. We further denote the empirical Pauli observable measurements as
\begin{eqnarray}
\label{The defi of empirical Pauli measurement sec:3}
\wh \vy =  \begin{bmatrix}
          \wh y_1 \\
          \vdots \\
          \wh y_K
        \end{bmatrix} =  \begin{bmatrix}
          \sum_{j=1}^D \alpha_{1,j} \cdot  \wh p_{1,j} \\
          \vdots \\
          \sum_{j=1}^D \alpha_{K,j} \cdot  \wh p_{K,j}
        \end{bmatrix}.
\end{eqnarray}
We denote by $\ve$ the error in the empirical Pauli observable measurements:
\begin{eqnarray}
\label{The measurement error in sec 3}
\ve =  \wh \vy - \vy \in \R^K.
\end{eqnarray}
\label{def:measurement-setting}
\end{definition}

Before introducing the loss function, it is helpful to incorporate a low-dimensional structure into the ground truth state $\vrho^\star$. The dimension of a quantum system grows exponentially with its components, such as the number of particles in the system. Consequently, the matrix size of $\vrho^\star$ becomes very large even for a moderately sized quantum system.
To overcome challenges related to the direct estimation of the density matrix, such as computational and storage costs, we can exploit the inherent structure of pure or nearly pure quantum states characterized by low entropy and represented as low-rank density matrices \cite{KuengACHA17,guctua2020fast,francca2021fast,voroninski2013quantum,haah2017sample}.
However, dealing with the rank constraint in the density matrix often requires an expensive singular value decomposition (SVD), leading to high computational complexity. To address this, we propose a Burer-Monteiro type decomposition \cite{burer2003nonlinear, burer2005local} to represent the density matrix as follows:
\begin{eqnarray}
\label{The low-rank density matrix}
\vrho = \mU \mU^{\rm H}, \ \text{where} \ \mU\in\C^{2^n\times r} \ \text{and} \ \| \mU\|_F = 1.
\end{eqnarray}
Based on this parameterization, we might consider a non-convex optimization problem as follows:
\begin{eqnarray}
\label{The loss function in terms of U only}
\minimize_{\mU \in \C^{2^n\times r}}  f(\mU) = \frac{D}{2K}\bigg\|\calA(\mU\mU^{\rm H}) -  \wh \vy \bigg\|_2^2,
\end{eqnarray}
where $\calA$ is the induced linear map as defined in \eqref{The defi of Pauli measurement sec:3}.

We note, however, that as defined, $f(\mU)$ is non-holomorphic in the complex variable domain, rendering the computation and analysis of gradients and Hessians cumbersome. For this reason, we consider instead the problem\footnote{Note that $\mU\mU^{\rm H}$ is consistently used as shorthand for $\mU {\mU^*}^\top$.}
\begin{eqnarray}
\label{The loss function in terms of U}
\minimize_{\mU,\mU^*\in\C^{D\times r}} f(\mU, \mU^*)   = \frac{D}{2K}\bigg\|\calA(\mU {\mU^*}^\top) -  \wh \vy \bigg\|_2^2,
\end{eqnarray}
where $\mU^*$, representing the complex conjugate of the matrix $\mU$, is considered a second variable in the optimization. The benefit of parameterizing the objective function by both $\mU$ and $\mU^*$ is that $f(\mU, \mU^*)$ remains holomorphic in $\mU$ for a fixed $\mU^*$, and vice versa, allowing the use of Wirtinger derivatives to compute gradients and Hessians. While one could alternatively parameterize $f(\mU)$ by $\Re{\mU}$ and $\Im{\mU}$, this would increase the complexity of the analysis. We refer readers to \cite[Chapter 3]{zhang2017matrix} for a comprehensive discussion on complex matrix analysis.

We also note that the solution set presented in \eqref{The loss function in terms of U} only comprises low-rank and PSD matrices, omitting the unit trace constraint. This omission arises because imposing the unit Frobenius norm does not notably diminish the magnitude of the recovery error.
We refer readers to the additional discussions after \Cref{upper bound recovery error of QST for Pauli measurements} and Corollary \ref{LocalConvergence_Conclusion_1}. Thus, we will exclusively concentrate on the set of low-rank and PSD matrices without the unit trace constraint.

Despite $\ell_2$ loss being a convex function, the bilinear form in \eqref{The low-rank density matrix} renders the objective function of \eqref{The loss function in terms of U} nonconvex. This indicates that \eqref{The loss function in terms of U} may contain many saddle points or spurious local minima at which algorithms like Wirtinger gradient descent could get stuck. Therefore, it is crucial to conduct a thorough analysis of the landscape of  \eqref{The loss function in terms of U}
to better understand the distribution and characteristics of saddle points and local minima. Specifically, through the following analysis under the restricted isometry property (RIP) for Pauli observable measurements, we divide the objective function landscape into two regions. In one region, any critical points must be strict saddle points, at which algorithms like Wirtinger gradient descent will not get stuck. This implies that any local minima in \eqref{The loss function in terms of U} must reside in the other region. Then, by delving further into the information gleaned from the local minima, we explore the relationship between the total number of state copies $KM$ and recovery error.

\paragraph{Restricted isometry property with Pauli observable measurements}
Our goal is to recover the low-rank density matrix $\vrho^\star$ from the underdetermined set of empirical Pauli observable measurements $\wh \vy$, where the number of coefficients $K$ is much smaller than the total number of entries in $\vrho^\star$, i.e., $K\ll D^2$. The key property necessary to achieve this is the restricted isometry property (RIP). The following results establish the RIP for $\calA$ over low-rank matrices.
\begin{theorem} (\cite[Theorem 2.1]{liu2011universal})
\label{thm:RIP-Pauli}
Let the linear map $\calA: \C^{D\times D} \rightarrow \R^K$ be defined in \eqref{The defi of Pauli measurement sec:3} and  $0\le \delta_r<1$. When the number of coefficients satisfies
\begin{eqnarray}
\label{The requirement in K of Pauli RIP}
K \geq C\cdot \frac{1}{\delta_r^2} Dr(\log D)^6
\end{eqnarray}
for some constant $C$, then, with overwhelming probability $1 - e^{-C}$ over matrices $\mA_1,\ldots,\mA_K$ selected i.i.d.\ uniformly from the set of Pauli matrices $\{\mW_1,\ldots,\mW_{D^2}\}$,  $\calA$ satisfies the $(r,\delta_r)$-RIP. That is, for all rank-$r$ $\vrho$, we have
\begin{eqnarray}
    \label{The definition of Unfolding}
(1-\delta_r)\norm{\vrho}{F}^2 \le \frac{D}{K}\|\calA(\vrho)\|_2^2 \le (1+\delta_r)\norm{\vrho}{F}^2.
\end{eqnarray}
\end{theorem}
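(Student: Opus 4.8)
The plan is to recast the RIP as a single uniform concentration statement for an empirical process indexed by the low-rank set, and then control that process by chaining. Let $\calT = \{\vrho : \rank(\vrho)\le r,\ \norm{\vrho}{F} = 1\}$ and introduce the random self-adjoint superoperator $\calR$ defined through the quadratic form $\innerprod{\vrho}{\calR(\vrho)} = \frac{D}{K}\norm{\calA(\vrho)}{2}^2 = \frac{D}{K}\sum_{k=1}^K \abs{\innerprod{\mA_k}{\vrho}}^2$, so that \eqref{The definition of Unfolding} is exactly the statement
\begin{equation}
\sup_{\vrho\in\calT}\abs{\innerprod{\vrho}{\calR(\vrho)} - \norm{\vrho}{F}^2} \le \delta_r.
\end{equation}
First I would verify unbiasedness: since $\{\tfrac{1}{\sqrt D}\mW_i\}_{i=1}^{D^2}$ is an orthonormal basis of $\C^{D\times D}$, a single uniformly drawn $\mA_k$ satisfies $\E\abs{\innerprod{\mA_k}{\vrho}}^2 = \frac{1}{D^2}\sum_{i=1}^{D^2}\abs{\innerprod{\mW_i}{\vrho}}^2 = \frac{1}{D}\norm{\vrho}{F}^2$ by Parseval, whence $\E\,\calR = \calI$. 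Thus the displayed quantity is the deviation of an average of $K$ i.i.d.\ bounded terms from its mean, uniformly over $\calT$.

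The two structural facts that make this tractable are (i) the orthonormality just used for unbiasedness, and (ii) the \emph{incoherence} of the Pauli basis: every $\mW_i$ has spectral norm $\nrm{\mW_i} = 1$, so by trace-duality between the spectral and nuclear norms together with $\rank(\vrho)\le r$,
\begin{equation}
\abs{\innerprod{\mW_i}{\vrho}} \le \nrm{\mW_i}\,\norm{\vrho}{*} \le \sqrt{r}\,\norm{\vrho}{F} = \sqrt{r}, \qquad \vrho\in\calT.
\end{equation}
Next I would bound $\E\sup_{\vrho\in\calT}\abs{\innerprod{\vrho}{(\calR-\calI)(\vrho)}}$ by a symmetrization step (replacing the centered summands by Rademacher-weighted copies) followed by Dudley's entropy integral. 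The metric entropy of $\calT$ is controlled by the covering number of the rank-$r$ manifold in $\C^{D\times D}$, which obeys $\log \mathcal{N}(\calT,\epsilon) \lesssim Dr\log(1/\epsilon)$; combining this with the range bound $\sqrt r$ from the incoherence estimate and the per-term variance $1/D$ from unbiasedness, the entropy integral yields a bound of the form
\begin{equation}
\E\sup_{\vrho\in\calT}\abs{\innerprod{\vrho}{(\calR-\calI)(\vrho)}} \lesssim \sqrt{\frac{Dr\,(\log D)^{c}}{K}}
\end{equation}
for a small power $c$.

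Finally, I would upgrade this bound on the mean to a high-probability bound by invoking a concentration inequality for the supremum of a sum of independent bounded random operators---either a functional Bernstein inequality or a bounded-differences argument anchored on the range $\sqrt r$. This produces the overwhelming probability $1 - e^{-C}$, and forcing the right-hand side above to be at most $\delta_r$ yields $K \gtrsim \frac{1}{\delta_r^2}Dr(\log D)^6$, matching \eqref{The requirement in K of Pauli RIP}. The crucial point is that the factor $D$ (rather than $D^2$) arises precisely from the ratio between the range $\sqrt r$ and the variance $1/D$.

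The main obstacle is the chaining/entropy estimate of the second step, where the noncommutative structure of the Pauli basis interacts with the geometry of the low-rank set: squeezing out the precise polylogarithmic power $(\log D)^6$ requires a careful multi-scale (Dudley- or golfing-type) decomposition rather than a single net, tracking the interplay between the range $\sqrt r$ and the small variance $1/D$ across scales. Everything else---unbiasedness, the incoherence bound, and the final concentration---is comparatively routine once this expectation bound is in hand.
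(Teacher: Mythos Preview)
The paper does not prove this theorem at all: it is quoted verbatim from \cite[Theorem~2.1]{liu2011universal} and used as a black box, so there is no ``paper's own proof'' to compare your proposal against. Your sketch is, in fact, a faithful high-level outline of Liu's original argument: unbiasedness from the orthonormality of the Pauli basis, the incoherence bound $\abs{\innerprod{\mW_i}{\vrho}}\le\sqrt{r}$ from $\nrm{\mW_i}=1$, symmetrization, and a Dudley-type chaining over the rank-$r$ set followed by concentration for the supremum. Your identification of the chaining step as the bottleneck is also accurate; Liu handles it with a careful entropy-integral argument that produces the $(\log D)^6$ factor, and a sharper analysis (e.g., via generic chaining or operator-Bernstein refinements) can reduce the polylog power, but that is beyond what the present paper needs or claims.
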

In accordance with equation \eqref{The definition of Unfolding}, when $\vrho$ has low rank, the RIP guarantees that the energy $\frac{D}{K}\|\calA(\vrho)\|_2^2$ is close to $\|\vrho\|_F^2$. The RIP can then be used to guarantee the recovery of $\vrho$ using only $O(Dr(\log D)^6)$ Pauli observables. For example, for any two distinct low-rank Hermitian matrices $\vrho_1, \vrho_2$ with rank $r$, the $(2r,\delta_{2r})$-RIP guarantees distinct measurements since
\begin{eqnarray}
    \label{RIP guarantee for any matrices}
    \frac{D}{K}\|\calA(\vrho_1)- \calA(\vrho_2)\|_2^2 = \frac{D}{K}\|\calA(\vrho_1 - \vrho_2)\|_2^2 \ge (1- \delta_{2 r}) \|\vrho_1 - \vrho_2\|_F^2.
\end{eqnarray}

\subsection{Global landscape analysis}
Before analyzing the recovery error, we need to characterize the locations of the critical points of $f(\mU, \mU^*)$. It is also necessary to characterize the second derivative behavior at these critical points. We commence by introducing the concepts of critical points, strict saddles, and the strict saddle property. Unlike in real matrix analysis, our consideration extends beyond $\mU$ to also encompass $\mU^*$ in the complex variable function, as $\mU$ and $\mU^*$ are regarded as two independent variables. For a comprehensive understanding of complex variable functions, please refer to \cite[Chapters 3-4]{zhang2017matrix}.

\begin{definition}(Critical points)
\label{defi of critical points}
We say $(\mU,\mU^*)$ is a critical point if the Wirtinger gradient at $(\mU,\mU^*)$ vanishes, i.e.,
\begin{eqnarray}
    \label{Gradient form sec:3}
    \nabla f(\mU,\mU^*)=\begin{bmatrix}
    \nabla_{\mU} f(\mU,\mU^*) \\
          \nabla_{\mU^*} f(\mU,\mU^*)
        \end{bmatrix}= {\bm 0}.
\end{eqnarray}
\end{definition}
Note that $(\nabla_{\mU} f(\mU,\mU^*))^* = \nabla_{\mU^*} f(\mU,\mU^*)$ always holds for a real-valued function $f(\mU,\mU^*)$.

\begin{definition}(Strict saddles)
\label{Strict saddles and second-order critical points}
A critical point $(\mU,\mU^*)$ is a strict saddle if the Wirtinger Hessian matrix evaluated at this point has a strictly negative eigenvalue, i.e.,
\begin{eqnarray}
    \label{Hessian form sec:3}
    \lambda_{\min}(\nabla^2 f(\mU,\mU^*)) = \lambda_{\min}\bigg(\begin{bmatrix}
    \frac{\partial^2 f(\mU,\mU^*)}{\partial{\text{vec}(\mU^*)}\partial{\text{vec}(\mU)^\top}} \frac{\partial^2 f(\mU,\mU^*)}{\partial{\text{vec}(\mU^*)}\partial{\text{vec}(\mU^*)^\top}} \\
          \frac{\partial^2 f(\mU,\mU^*)}{\partial{\text{vec}(\mU)}\partial{\text{vec}(\mU)^\top}} \frac{\partial^2 f(\mU,\mU^*)}{\partial{\text{vec}(\mU)}\partial{\text{vec}(\mU^*)^\top}}
        \end{bmatrix}\bigg) <0.
\end{eqnarray}
All other critical points that satisfy the second-order optimality condition, i.e., $\nabla^2 f(\mU,\mU^*) \succeq 0$, are called second-order critical points.
\end{definition}
Note that without distinguishing between local maxima and saddle points, we refer to a critical point as a strict saddle point only if its Wirtinger Hessian matrix has at least one strictly negative eigenvalue.
Next, following the analysis in Appendix \ref{Noisy Matrix Sensing}, we formally establish the following theorem.

\begin{theorem}
\label{Strict saddle points theorem of f}
Suppose that $\vrho^\star\in\C^{D\times D}$ is a target density matrix of rank $r$  and linear map $\calA$ satisfies the $(2r,\delta_{2r})$-RIP with $\delta_{2r} \le 0.09$. Then any second-order critical point $(\mU,\mU^*)$ in the optimization problem \eqref{The loss function in terms of U} satisfies
\begin{eqnarray}
\label{condition of all second-order critical points in the main paper}
\norm{\mU\mU^{\rm H} - \vrho^\star}{F} \leq D\sqrt{r} h(\delta_{2r}) \frac{\|\calA^*(\ve)\|}{K} ,
\end{eqnarray}
where $\ve$ is defined in \eqref{The measurement error in sec 3}, $\calA^*$ is the adjoint operator of $\calA$ with $\calA^*(\ve)=\sum_{k=1}^K e_k\mA_k$, and $h(\delta_{2r})$ is defined as
\begin{eqnarray}
\label{h-delta-function}
h(\delta_{2r}) = \frac{\sqrt{2}+\sqrt{82.55- 762.54\delta_{2r}- 843.09\delta_{2r}^2}}{1.5-15.7\delta_{2r}}.
\end{eqnarray}
\end{theorem}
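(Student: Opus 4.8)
The plan is to characterize the second-order critical points of \eqref{The loss function in terms of U} through the Wirtinger first- and second-order optimality conditions, and then to convert the resulting operator inequalities into a bound on $\norm{\mU\mU^{\rm H}-\vrho^\star}{F}$ using the RIP of \Cref{thm:RIP-Pauli}. First I would compute the Wirtinger gradient and Hessian of $f(\mU,\mU^*)$. Writing $\mX=\mU\mU^{\rm H}$ and the residual $g=\calA(\mX)-\wh\vy=\calA(\mX-\vrho^\star)-\ve$, the stationarity condition of \Cref{defi of critical points} takes the matrix form $\calA^*(g)\mU=\vzero$, while the PSD Hessian condition of \Cref{Strict saddles and second-order critical points} reduces to requiring that $\norm{\calA(\mU\mDelta^{\rm H}+\mDelta\mU^{\rm H})}{2}^2+2\innerprod{\calA^*(g)}{\mDelta\mDelta^{\rm H}}\ge 0$ for every $\mDelta\in\C^{D\times r}$.

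Next I would factor $\vrho^\star=\mU^\star{\mU^\star}^{\rm H}$ and introduce the best alignment $\mR=\argmin_{\mR\in\calU_r}\norm{\mU-\mU^\star\mR}{F}$, choosing the test direction $\mDelta=\mU-\mU^\star\mR$. The algebraic identity $\mU\mDelta^{\rm H}+\mDelta\mU^{\rm H}=(\mX-\vrho^\star)+\mDelta\mDelta^{\rm H}$ rewrites the linearized term, and the optimality of $\mR$ forces $\mU^{\rm H}\mU^\star\mR$ to be Hermitian PSD, which I use to control cross terms. Substituting this $\mDelta$ into the Hessian inequality and then using $\calA^*(g)\mU=\vzero$ to cancel every term carrying a factor of $\mU$, so that $\innerprod{\calA^*(g)}{\mDelta\mDelta^{\rm H}}=\innerprod{\calA^*(g)}{\vrho^\star}=-\innerprod{\calA^*(g)}{\mX-\vrho^\star}$, I expect to arrive at a core inequality of the form
\begin{equation*}
\norm{\calA(\mX-\vrho^\star)}{2}^2 \le 2\innerprod{\calA(\mX-\vrho^\star)}{\calA(\mDelta\mDelta^{\rm H})}+\norm{\calA(\mDelta\mDelta^{\rm H})}{2}^2 + 2\innerprod{\calA^*(\ve)}{\mX-\vrho^\star}.
\end{equation*}
The first-order substitution is precisely what surfaces the noise as the rank-$(\le 2r)$ inner product $\innerprod{\calA^*(\ve)}{\mX-\vrho^\star}$, which is what makes the final error proportional to $\|\calA^*(\ve)\|$.

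Then I would invoke the $(2r,\delta_{2r})$-RIP (together with its polarized inner-product form) to replace every $\norm{\calA(\cdot)}{2}^2$ and $\innerprod{\calA(\cdot)}{\calA(\cdot)}$ by the corresponding Frobenius quantities up to factors $1\pm\delta_{2r}$, writing $d=\norm{\mX-\vrho^\star}{F}$. The noise term is bounded by $\innerprod{\calA^*(\ve)}{\mX-\vrho^\star}\le \sqrt{2r}\,\|\calA^*(\ve)\|\,d$ since $\mX-\vrho^\star$ has rank at most $2r$, and the factor $\tfrac{D}{K}$ carried along by the RIP normalization produces the $D/K$ scaling in the final bound. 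The remaining ingredient is a set of geometric lemmas (adapted from the Burer--Monteiro matrix-sensing literature to complex PSD factors under optimal alignment) bounding $\innerprod{\mX-\vrho^\star}{\mDelta\mDelta^{\rm H}}$ and $\norm{\mDelta\mDelta^{\rm H}}{F}^2$ by absolute-constant multiples of $d^2$; these estimates are scale-free (no $\sigma_r(\vrho^\star)$ enters), which is why the final bound carries no condition-number dependence. Collecting everything yields a quadratic inequality $a\,d^2\le b\,\sqrt{r}\,\tfrac{D}{K}\|\calA^*(\ve)\|\,d$ whose positive leading coefficient $a$ hinges on $\delta_{2r}$ being small; solving it gives $d\le (b/a)\sqrt{r}\,\tfrac{D}{K}\|\calA^*(\ve)\|$, and tracking $a,b$ as explicit functions of $\delta_{2r}$ reproduces $h(\delta_{2r})$.

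The hard part will be the constant bookkeeping needed to keep the coefficient of $d^2$ strictly positive. After the RIP and geometric estimates, the $\mDelta\mDelta^{\rm H}$ contributions compete against $\norm{\calA(\mX-\vrho^\star)}{2}^2$, and only the smallness $\delta_{2r}\le 0.09$ keeps the net coefficient positive --- this threshold is exactly the boundary at which the denominator $1.5-15.7\delta_{2r}$ of $h$ stays bounded away from zero. Obtaining the tightest constants (hence the precise form of $h$) requires carefully optimizing the RIP/geometry trade-offs, and handling the rank-$3r$ matrix $(\mX-\vrho^\star)+\mDelta\mDelta^{\rm H}$ using only $(2r,\delta_{2r})$-RIP by splitting it into rank-$2r$ and rank-$r$ pieces and polarizing. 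Finally, the complex Wirtinger Hessian computation itself needs care to confirm that its PSD condition is genuinely equivalent to the real directional inequality used above.
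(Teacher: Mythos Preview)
Your overall strategy---evaluate the Wirtinger Hessian along $\mDelta=\mU-\mU^\star\mR$, combine with the first-order condition, and invoke RIP plus geometric factorization lemmas---matches the paper's. The implementation differs in two noteworthy ways.

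First, you keep the exact Hessian form $\tfrac{D}{K}\|\calA(\mU\mDelta^{\rm H}+\mDelta\mU^{\rm H})\|_2^2+\tfrac{2D}{K}\innerprod{\calA^*(g)}{\mDelta\mDelta^{\rm H}}$ and use the identity $\mU\mDelta^{\rm H}+\mDelta\mU^{\rm H}=(\mX-\vrho^\star)+\mDelta\mDelta^{\rm H}$, whereas the paper upper-bounds the Hessian by $\tfrac{4D}{K}\|\calA(\mU\mDelta^{\rm H})\|_2^2$ before proceeding. Incidentally, since $\mU\mDelta^{\rm H}+\mDelta\mU^{\rm H}=[\mU\;\mDelta][\mDelta\;\mU]^{\rm H}$ already has rank at most $2r$, your concern about a rank-$3r$ obstruction is unnecessary: you can apply the $(2r,\delta_{2r})$-RIP directly to this symmetric form.

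Second, and more substantively, the paper exploits the first-order condition \emph{differently}: it tests $\nabla_{\mU^*}f=0$ against $\mZ=(\mX-\vrho^\star)\mU^{{\rm H}\dagger}$ to obtain the projected bound $\norm{(\mX-\vrho^\star)\calP_{\mU}}{F}\le 2\delta_{2r}d+\tfrac{D\sqrt{2r}}{K}\|\calA^*(\ve)\|$, and then feeds this into \Cref{lem:bound WDelta}, which controls $\|\mU\mDelta^{\rm H}\|_F^2$ by a combination of $d^2$ and $\norm{(\mX-\vrho^\star)\calP_{\mU}}{F}^2$. Squaring the projected bound produces a $\|\calA^*(\ve)\|^2$ term, so the final inequality is a genuine quadratic $a\,d^2-b\,d-c\le 0$ with $c\propto\|\calA^*(\ve)\|^2$; the stated $h(\delta_{2r})$ is precisely the quadratic-formula root, which is why it contains a square root. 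Your route, which uses the first-order condition only to simplify $\innerprod{\calA^*(g)}{\mDelta\mDelta^{\rm H}}=-\innerprod{\calA^*(g)}{\mX-\vrho^\star}$ and then appeals to a scale-free bound $\|\mDelta\mDelta^{\rm H}\|_F\lesssim d$, yields an inequality with \emph{only} a linear noise term. This is perfectly valid (the bound $\|\mDelta\mDelta^{\rm H}\|_F^2\le 2d^2$ under best alignment is standard), but it produces a rational $h$ of the form $b/a$ rather than the specific square-root expression in \eqref{h-delta-function}. So your claim that ``tracking $a,b$ \ldots reproduces $h(\delta_{2r})$'' is not quite right: you would prove the theorem with a different, structurally simpler $h$.
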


In words, \Cref{Strict saddle points theorem of f} implies that any critical point $(\mU,\mU^*)$ that is not close to the target solution, i.e., for which $
\norm{\mU\mU^{\rm H} - \vrho^\star}{F} \geq D\sqrt{r} h(\delta_{2r}) \frac{\|\calA^*(\ve)\|}{K}$, must be a strict saddle point. As various iterative algorithms can escape strict saddles \cite{lee2019first}, this property ensures the convergence to a solution with the recovery guarantee in \eqref{condition of all second-order critical points in the main paper}. The term $h(\delta_{2r})$ in the upper bound, as illustrated in Figure~\ref{Plot of h_delta}, is monotonically increasing with respect to $\delta_{2r}$, indicating that a smaller $\delta_{2r}$ could give a relatively better recovery as per \eqref{condition of all second-order critical points in the main paper}. The term $\|\calA^*(\ve)\|/K$ in \eqref{condition of all second-order critical points in the main paper} can be upper bounded by the following result.

\begin{figure}[thbp]
\centering
\includegraphics[width=8cm, keepaspectratio]%
{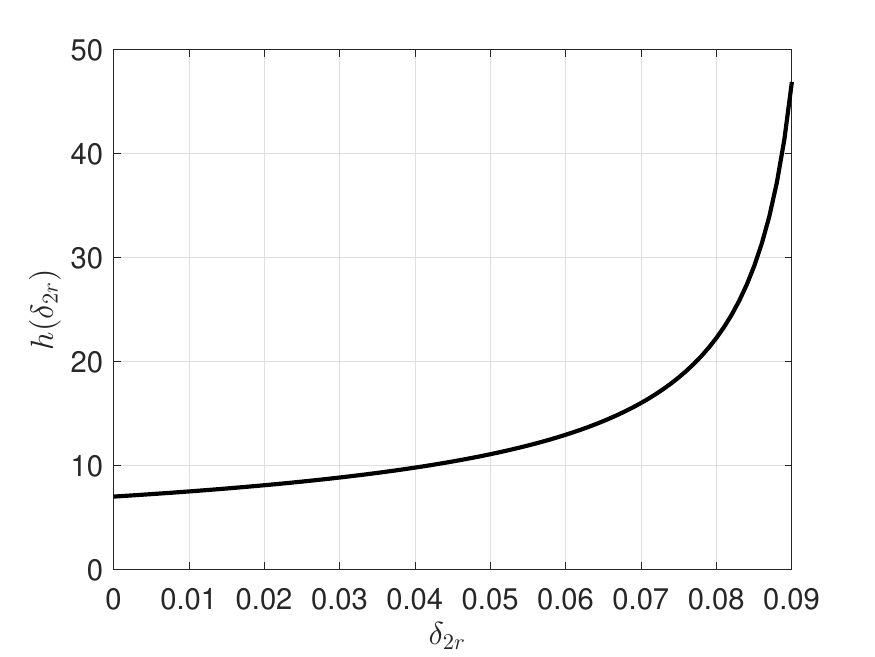}
\caption{Plot of $h(\delta_{2r})$ in \eqref{h-delta-function}.}
\label{Plot of h_delta}
\end{figure}

\begin{lemma}
\label{upper bound A* e}
Assuming $M \leq O(K/n)$, then with probability at least $1 - e^{-\Omega(n /(1 + \sqrt{Mn/K}) )}$, $\calA$ in \Cref{def:measurement-setting} satisfies
\begin{eqnarray}
\label{matrix Bernstein's inequality conclusion main paper}
\frac{\|\calA^*(\ve)\|}{K}  \lesssim \sqrt{\frac{\log D}{KM}}.
\end{eqnarray}
\end{lemma}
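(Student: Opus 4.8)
The plan is to recognize $\calA^*(\ve)=\sum_{k=1}^K e_k\mA_k$ as a sum of $K$ independent, zero-mean, Hermitian random matrices and to control its spectral norm with the matrix Bernstein inequality. The summands $\mtx{X}_k:=e_k\mA_k$ are independent because the sampling of $\mA_k$ and the $M$ repeated measurements producing $\wh y_k$ are carried out independently across $k$; they are Hermitian because each $\mA_k$ is Hermitian and $e_k$ is a real scalar. They are zero-mean because, conditioned on the draw $\mA_k$, the empirical observable $\wh y_k$ is an unbiased estimator of $y_k=\<\mA_k,\vrho^\star\>$ (\Cref{lem:MSE-empricial-Pauli}), so $\E[\mtx{X}_k\mid\mA_k]=\mA_k\,\E[e_k\mid\mA_k]=\mzero$.

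To invoke matrix Bernstein I would establish the two standard ingredients. For the uniform bound, since every $\mW_k$ has eigenvalues $\pm1$ we have $\|\mA_k\|=1$, and since $y_k,\wh y_k$ are signed combinations of probabilities lying in $[-1,1]$ we get $\|\mtx X_k\|=|e_k|\le 2=:R$ almost surely. For the variance, the key algebraic fact is $\mA_k^2=\mId_D$ (each $\sigma_i^2=\mId_2$, hence $\mW_k^2=\mId_D$), whence $\mtx X_k^2=e_k^2\mId_D$. Combining this with the conditional second-moment bound $\E[e_k^2\mid\mA_k]\le 2/M$ from \Cref{lem:MSE-empricial-Pauli} and the tower rule gives $\E[\mtx X_k^2]\preceq\tfrac{2}{M}\mId_D$, so the matrix variance statistic is $\sigma^2:=\bigl\|\sum_{k=1}^K\E[\mtx X_k^2]\bigr\|\le \tfrac{2K}{M}$.

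With $R=2$ and $\sigma^2\le 2K/M$ in hand, matrix Bernstein yields, for any $t\ge 0$,
\[
\P{\|\calA^*(\ve)\|\ge t}\le 2D\exp\!\left(\frac{-t^2/2}{\sigma^2+Rt/3}\right).
\]
I would then set $t=c\sqrt{K\log D/M}$, matching the target $\|\calA^*(\ve)\|/K\lesssim\sqrt{\log D/(KM)}$. A short computation shows the ratio of the two denominator terms is $Rt/(3\sigma^2)=\tfrac{c}{3}\sqrt{M\log D/K}=\Theta(\sqrt{Mn/K})$, so the tail exponent collapses to
\[
\frac{-t^2/2}{\sigma^2+Rt/3}=\frac{-c^2\log D/4}{\,1+\tfrac{c}{3}\sqrt{M\log D/K}\,}=-\Omega\!\left(\frac{n}{1+\sqrt{Mn/K}}\right),
\]
using $\log D=\Theta(n)$.

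The last step, which I expect to be the main obstacle, is to absorb the dimension prefactor $2D=e^{\Theta(n)}$ into this exponential tail while preserving the stated $\bigl(1+\sqrt{Mn/K}\bigr)$ form. This is exactly where the hypothesis $M\le O(K/n)$ enters: it guarantees $\sqrt{Mn/K}=O(1)$, so that $1+\tfrac{c}{3}\sqrt{M\log D/K}$ stays bounded and the exponent $\Theta\bigl(c^2 n/(1+\sqrt{Mn/K})\bigr)=\Theta(c^2 n)$ can be made to exceed $\log(2D)=\Theta(n)$ by a fixed multiple upon choosing the constant $c$ large enough; the surplus then retains the claimed $e^{-\Omega(n/(1+\sqrt{Mn/K}))}$ failure probability. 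Care is needed only in the bookkeeping of constants, since both the prefactor and the tail scale linearly in $n$, and in confirming that the two-level randomness is handled correctly when passing from the conditional bound $\E[e_k^2\mid\mA_k]$ to the unconditional variance statistic.
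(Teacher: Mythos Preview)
Your proposal is correct and follows essentially the same route as the paper: write $\calA^*(\ve)=\sum_k e_k\mA_k$, verify zero mean, the almost-sure bound $|e_k|\le 2$, and the variance bound $\sigma^2\le 2K/M$ via $\mA_k^2=\mId_D$ together with \Cref{lem:MSE-empricial-Pauli}, then apply matrix Bernstein and plug in $t=c\sqrt{K\log D/M}$. Your handling of the two-level randomness via the tower rule and your explicit discussion of how the hypothesis $M\le O(K/n)$ lets the constant $c$ absorb the $D$ prefactor are in fact more carefully articulated than the paper's own argument.
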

The proof is provided in Appendix~\ref{error bound of A*(e)}. \Cref{upper bound A* e} shows that the
term $\|\calA^*(\ve)\|/K$ caused by the statistical error in the empirical Pauli observable measurements depends only on the overall experimental resources $N = KM$, if ignoring the difference in the failure probability. This suggests that the recovery error remains comparable across various selections of $K$ and
$M$, provided that $N=KM$ remains constant. However, we note that the probability of failure may increase significantly when $K$ is excessively small. The assumption $M \leq O(K/n)$ stems from a technical difficulty in applying the matrix Bernstein’s inequality, but we believe this assumption can be removed by more sophisticated analysis.  By incorporating this result into \Cref{Strict saddle points theorem of f}, we can then obtain a recovery guarantee in terms of $K$ and $M$.

\begin{theorem}
\label{upper bound recovery error of QST for Pauli measurements}
Under the Pauli observable setting in \Cref{def:measurement-setting}, assume that $K \geq C\cdot \frac{1}{\delta_{2r}^2} Dr(\log D)^6$ for a positive constant $C$ such that $\calA$ satisfies $(2r,\delta_{2r})$-RIP with constant $\delta_{2r} \le 0.09$, as guaranteed by \Cref{thm:RIP-Pauli} with high probability. Supposing $M \leq O(K/n)$, then with probability at least $1 - e^{-C/2} - e^{-\Omega(n /(1 + \sqrt{Mn/K}) )} $, any second-order critical point $(\mU,\mU^*)$ in the optimization problem \eqref{The loss function in terms of U} satisfies
\begin{eqnarray}
\label{upper bound recovery error in the main paper}
\|\mU\mU^{\rm H} - \vrho^\star\|_F \lesssim h(\delta_{2r})\cdot \sqrt{\frac{D^2\log Dr}{KM}},
\end{eqnarray}
where $h(\delta_{2r})$ is defined in \eqref{h-delta-function}.
\end{theorem}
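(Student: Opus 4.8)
The plan is to assemble the theorem directly from three results already in place: the RIP guarantee for Pauli observables (\Cref{thm:RIP-Pauli}), the deterministic landscape bound on second-order critical points (\Cref{Strict saddle points theorem of f}), and the high-probability control of the noise term (\Cref{upper bound A* e}). The conclusion is essentially the composition of these statements, so the real work is to line up their hypotheses and perform a union bound over the two random events that arise from the same draw of $\mA_1,\ldots,\mA_K$.

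First I would invoke \Cref{thm:RIP-Pauli} with the rank parameter $2r$ in place of $r$ and isometry constant $\delta_{2r}$. That theorem requires $K \gtrsim \frac{1}{\delta_{2r}^2} D(2r)(\log D)^6$, so, after absorbing the factor $2$ into the constant, the stated hypothesis $K \geq C \frac{1}{\delta_{2r}^2} Dr(\log D)^6$ suffices; the rescaling $r\mapsto 2r$ of the constant is precisely what turns the failure exponent $e^{-C}$ into $e^{-C/2}$. Let $\calE_1$ denote the event that $\calA$ satisfies the $(2r,\delta_{2r})$-RIP with $\delta_{2r}\le 0.09$, so that $\P{\calE_1} \geq 1 - e^{-C/2}$. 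On $\calE_1$ the hypotheses of \Cref{Strict saddle points theorem of f} are met, hence every second-order critical point obeys the deterministic bound
\[
\|\mU\mU^{\rm H} - \vrho^\star\|_F \leq D\sqrt{r}\, h(\delta_{2r}) \frac{\|\calA^*(\ve)\|}{K}.
\]

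Next I would bring in \Cref{upper bound A* e}: under the assumption $M \leq O(K/n)$, on an event $\calE_2$ with $\P{\calE_2} \geq 1 - e^{-\Omega(n/(1+\sqrt{Mn/K}))}$ we have $\|\calA^*(\ve)\|/K \lesssim \sqrt{\log D/(KM)}$. Substituting this into the displayed bound on $\calE_1\cap\calE_2$ yields
\[
\|\mU\mU^{\rm H} - \vrho^\star\|_F \lesssim D\sqrt{r}\, h(\delta_{2r}) \sqrt{\frac{\log D}{KM}} = h(\delta_{2r}) \sqrt{\frac{D^2 r \log D}{KM}},
\]
which is exactly \eqref{upper bound recovery error in the main paper}, using $D\sqrt{r}=\sqrt{D^2 r}$. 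To finish, a union bound gives $\P{\calE_1\cap\calE_2} \geq 1 - e^{-C/2} - e^{-\Omega(n/(1+\sqrt{Mn/K}))}$, matching the claimed success probability.

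There is no serious obstacle once the two supporting results are granted; the genuinely hard work lives entirely upstream in \Cref{Strict saddle points theorem of f} and \Cref{upper bound A* e}. The only care needed here is bookkeeping: matching the RIP hypothesis, stated for rank $r$, to its application at rank $2r$ (which is what produces $e^{-C/2}$ rather than $e^{-C}$), and observing that $\calE_1$ and $\calE_2$ are both functions of the single random selection of the Pauli matrices, so the union bound—not an independence argument—is the appropriate tool for combining the two failure probabilities.
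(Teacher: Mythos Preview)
Your proposal is correct and matches the paper's approach exactly: the paper presents \Cref{upper bound recovery error of QST for Pauli measurements} as an immediate consequence of plugging \Cref{upper bound A* e} into \Cref{Strict saddle points theorem of f} under the RIP event from \Cref{thm:RIP-Pauli}, with the probabilities combined by a union bound. Your bookkeeping on the $e^{-C}\to e^{-C/2}$ shift from the $r\mapsto 2r$ rescaling is also right; the only minor slip is that $\calE_2$ depends on the random measurement outcomes as well as on the selected Pauli matrices, but since you correctly invoke a union bound rather than independence, this does not affect the argument.
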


\paragraph{Quality versus quantity of Pauli observable measurements} When ignoring the term $h(\delta_{2r})$, the upper bound for the recovery error $\|\mU\mU^{\rm H} - \vrho^\star \|_F$ in \eqref{upper bound recovery error in the main paper} decays with the overall experimental resources $N = KM$. Specifically, a favorable scaling of $1/KM$ for the mean-squared error (i.e., $\|\mU\mU^{\rm H} - \vrho^\star \|_F^2$) is maintained for all sufficiently large $K$ such that the RIP is satisfied.  On the other hand, the constant $\delta_{2r}$ decays with $K$ and $h(\delta_{2r})$ increases monotonically, albeit slowly, with $\delta_{2r}$ as depicted in \Cref{Plot of h_delta}. From this perspective, for fixed total experimental resources $N$, using a larger value of $K$ generally results in a smaller recovery error bound. In particular, the extreme case with a single-sample measurement (i.e., $M = 1$ and $K = N$), where each empirical Pauli observable measurement is extremely noisy, would give the best bound. This explains the successful application of single-sample measurements in recent work \cite{wang2020scalable}. We will provide experiments in \Cref{Numerical Experiments} to demonstrate the recovery performance for the different choices of $M$ and $K$.

For any $\epsilon>0$, \eqref{upper bound recovery error in the main paper} implies that $\|\mU\mU^{\rm H} - \vrho^\star \|_F\leq \epsilon$
when the number of state copies $KM \gtrsim  D^2(\log D)r/\epsilon^2$.
By using the inequality $\|\mU\mU^{\rm H} - \vrho^\star\|_1 \leq \sqrt{2r}\norm{\mU\mU^{\rm H} - \vrho^\star}{F}$ since $\rank(\mU\mU^{\rm H} - \vrho^\star) \le 2r$, we can ensure $\epsilon$-accurate recovery in the trace distance with $\Omega (D^2(\log D)r^2/\epsilon^2)$ state copies. When omitting the $\log D$ term, which might have been introduced for technical reasons, this sampling complexity is optimal for empirical Pauli observable measurements according to \cite{flammia2012quantum}. Finally, we note that while the solution $\mU\mU^{\rm H}$ of \eqref{The loss function in terms of U} may be non-physical as it may not have trace 1, we can easily obtain a physical state without compromising the recovery guarantee in \Cref{upper bound recovery error in the main paper}. Specifically, denoting by $\setS_+:=\{\vrho \in \C^{2^n\times 2^n}: \vrho \succeq \vzero, \trace(\vrho) = 1\}$ the set of physical states and $P_{\setS_+}$ the projection onto the set $\setS_+$ which can be efficiently computed by projecting the eigenvalues onto a simplex \cite{condat2016fast}, we have
\begin{eqnarray}
\label{upper bound recovery error in the main paper physical}
\|\calP_{\setS_+}(\mU\mU^{\rm H}) - \vrho^\star\|_F \leq \|\mU\mU^{\rm H} - \vrho^\star\|_F,
\end{eqnarray}
where the second inequality follows from the nonexpansiveness property of the convex set. This implies that the projection step ensures the state becomes physically valid while preserving or even improving the recovery guarantee.

\section{Wirtinger Gradient Descent with Linear Convergence}
\label{Convergence analysis of GD}

\Cref{upper bound recovery error of QST for Pauli measurements} provides an upper bound on the recovery error for any second-order critical point of the least squares objective with empirical Pauli observable measurements. This result offers favorable guarantees for many iterative algorithms, including the computationally efficient gradient descent, which can almost surely converge to a second-order critical point \cite{lee2019first}. However, this does not imply that gradient descent can efficiently find a second-order critical point. Indeed, gradient descent can be significantly slowed down by saddle points and may take exponential time to escape \cite{du2017gradient}. In this section, we study the optimization landscape near the target solution and show that the problem has a basin of attraction within which Wirtinger gradient descent has a fast convergence, with recovery error similar to \eqref{condition of all second-order critical points in the main paper}.

In particular, with an appropriate initialization $\mU_0$, we use the following Wirtinger gradient descent (GD):
\begin{eqnarray}
    \label{UpdatingOfU}
    {\mU}_t = \mU_{t-1} -\mu \nabla_{\mU^*} f(\mU_{t-1},\mU^*_{t-1}),
\end{eqnarray}
where $\mu$ is the step size. The complex conjugate part can also be updated as ${\mU}_t^* = \mU_{t-1}^* -\mu \nabla_{\mU} f(\mU_{t-1},\mU^*_{t-1})$, which is exactly the same as taking the complex conjugate of \eqref{UpdatingOfU} for the real-valued $f(\mU,\mU^*)$. Thus, one only needs to perform the gradient descent and study the performance for $\mU_t$.  To analyze the convergence, we need an appropriate metric to capture the distance between the factors $\mU_t$ and the ground-truth factor $\mU^\star$ that satisfies $\vrho^\star = \mU^\star\mU^{\star\H}$. Noting that for any unitary matrix $\mR \in \calU_r:=\{\mR\in \C^{r\times r}, \mR^\H\mR = \mId\}$, $\mU^\star \mR$ is also a factor of $\vrho^\star$ since $\mU^\star\mR\mR^{\rm H}\mU^{\star\rm H} = \vrho^\star$, we define the distance between any factor $\mU$ and $\mU^\star$ as \cite{Tu16}
\begin{eqnarray}
    \label{distance between two matrices}
    \dist(\mU,\mU^\star) = \|\mU-\mU^\star\mR \|_F, \ \text{where} \ \mR = \argmin_{\mR'\in\calU_r} \|\mU-\mU^\star\mR' \|_F.
\end{eqnarray}
Based on the distance defined above, we can analyze the convergence of the Wirtinger GD by
\begin{eqnarray}
    \label{expansion of distance: local convergence}
    \dist^2(\mU_t,\mU^\star) &\!\!\!\! = \!\!\!\!& \|\mU_{t}-\mU^\star\mR_{t} \|_F^2\nonumber\\
    &\!\!\!\!\leq\!\!\!\!& \|\mU_{t-1} -\mu \nabla_{\mU^*} f(\mU_{t-1}, \mU_{t-1}^*) - \mU^\star\mR_{t-1}\|_F^2\nonumber\\
    &\!\!\!\!=\!\!\!\!&\|\mU_{t-1} - \mU^\star\mR_{t-1}\|_F^2 + \mu^2 \|\nabla_{\mU^*} f(\mU_{t-1},\mU_{t-1}^*)\|_F^2\nonumber\\
    &\!\!\!\!\!\!\!\!&- 2\mu\Re{\<\mU_{t-1} - \mU^\star\mR_{t-1}, \nabla_{\mU^*} f(\mU_{t-1},\mU_{t-1}^*) \>}.
\end{eqnarray}
To enable a sufficient decrease of the distance, the term $\Re{\<\mU_{t-1} - \mU^\star\mR_{t-1}, \nabla_{\mU^*} f(\mU_{t-1},\mU_{t-1}^*) \>}$ needs to be sufficiently large, i.e., the search direction $-\nabla_{\mU^*} f(\mU_{t-1},\mU_{t-1}^*)$ needs to point towards to the target factor $\mU^\star \mR_{t-1}$, which can be guaranteed by strong convexity for convex functions. For real-valued matrix factorization problems, a regularity condition has been established to achieve that goal \cite[Lemma 5.7]{Tu16}. Here, we extend the analysis to the complex domain.

\begin{lemma} (Regularity Condition)
\label{lem:RegularityCondition_f}
Under the Pauli observable setting in \Cref{def:measurement-setting}, assume that $K \geq C\cdot \frac{1}{\delta_{3r}^2} Dr(\log D)^6$ for a positive constant $C$ such that $\calA$ satisfies $(3r,\delta_{3r})$-RIP with constant $\delta_{3r} \le 1/40$, as guaranteed by \Cref{thm:RIP-Pauli}. Then, with probability at least $1 - e^{-C/3}$, any $\mU$ that is close to $\mU^\star$ in the sense $
\dist(\mU,\mU^\star)\leq\frac{\sigma_r(\mU^\star)}{4}$ satisfies
\begin{eqnarray}
\label{RegularityCondition_Result_f}
&\!\!\!\!\!\!\!\!&\Re{\<\nabla_{\mU^*} f(\mU,\mU^*), \mU-\mU^\star\mR \>}\nonumber\\
&\!\!\!\!\geq\!\!\!\!&\frac{\sigma_r^2(\mU^\star)}{4}\|\mU-\mU^\star\mR \|_F^2+ \frac{2(1-4\delta_{3r})}{25(1+2\delta_{3r})^2}\|\nabla_{\mU^*} f(\mU,\mU^*)\|_F^2-\frac{11D^2r}{K^2}\|\calA^*(\ve)\|^2.
\end{eqnarray}
\end{lemma}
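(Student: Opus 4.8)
The strategy is to transplant the real-valued regularity-condition argument of \cite{Tu16} into the Wirtinger calculus while carrying the measurement error $\ve$ as an additive perturbation controlled separately. First I would record the gradient: since $\calA$ is real-valued and each $\mA_k$ is Hermitian, differentiating $f(\mU,\mU^*)=\frac{D}{2K}\|\calA(\mU{\mU^*}^\top)-\wh\vy\|_2^2$ in $\mU^*$ with $\mU$ held fixed gives
\[
\nabla_{\mU^*} f(\mU,\mU^*)=\frac{D}{K}\,\calA^*\!\big(\calA(\mU\mU^{\rm H})-\wh\vy\big)\mU=\underbrace{\frac{D}{K}\calA^*\calA(\mU\mU^{\rm H}-\vrho^\star)\mU}_{=:\,\mG}-\frac{D}{K}\calA^*(\ve)\mU ,
\]
where I used $\wh\vy=\calA(\vrho^\star)+\ve$ from \eqref{The measurement error in sec 3}. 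Setting $\mDelta=\mU-\mU^\star\mR$ for the optimal unitary $\mR$ in \eqref{distance between two matrices}, I split
\[
\Re{\<\nabla_{\mU^*} f(\mU,\mU^*),\,\mDelta\>}=\Re{\<\mG,\mDelta\>}-\frac{D}{K}\Re{\<\calA^*(\ve)\mU,\,\mDelta\>}
\]
and bound the noiseless and noisy parts separately.

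For the noiseless part I would prove the complex analogue of the bound in \cite{Tu16}. Substituting $\mU=\mU^\star\mR+\mDelta$ and $\vrho^\star=\mU^\star\mR\mR^{\rm H}\mU^{\star\rm H}$ yields
\[
\mU\mU^{\rm H}-\vrho^\star=\mU^\star\mR\mDelta^{\rm H}+\mDelta\mR^{\rm H}\mU^{\star\rm H}+\mDelta\mDelta^{\rm H},
\]
a matrix of rank at most $2r$, whereas $\mDelta\mU^{\rm H}$ has rank at most $r$. Writing $\Re{\<\mG,\mDelta\>}=\frac{D}{K}\Re{\<\calA(\mU\mU^{\rm H}-\vrho^\star),\,\calA(\mDelta\mU^{\rm H})\>}$ and invoking the polarization form of the $(3r,\delta_{3r})$-RIP from \Cref{thm:RIP-Pauli}---which is exactly why order $3r$ rather than $2r$ is needed, the two factors carrying ranks $2r$ and $r$---I would replace $\frac{D}{K}\calA^*\calA$ by the identity up to a $\delta_{3r}$-controlled error. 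The hypothesis $\dist(\mU,\mU^\star)\le\sigma_r(\mU^\star)/4$ enters twice: via Weyl's inequality it gives $\sigma_r(\mU)\ge\tfrac34\sigma_r(\mU^\star)$, and it tames the genuinely quartic term through $\|\mDelta\mDelta^{\rm H}\|_F\le\|\mDelta\|\,\|\mDelta\|_F\le\tfrac14\sigma_r(\mU^\star)\|\mDelta\|_F$. Combining these produces a restricted-strong-convexity piece with curvature $\tfrac14\sigma_r^2(\mU^\star)\|\mDelta\|_F^2$ together with a co-coercivity piece in $\|\mG\|_F^2$, the $(1-4\delta_{3r})$ and $(1+2\delta_{3r})^2$ factors arising from the RIP lower and upper constants applied to $\|\calA(\cdot)\|$; I expect the clean co-coercivity coefficient to come out as twice the stated one.

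For the noise part I would write $\frac{D}{K}\Re{\<\calA^*(\ve)\mU,\mDelta\>}=\frac{D}{K}\Re{\<\calA^*(\ve),\,\mDelta\mU^{\rm H}\>}$ and, using $\rank(\mDelta\mU^{\rm H})\le r$ with spectral/nuclear-norm duality and the basin bound $\|\mU\|=O(1)$ (valid since $\|\mU^\star\|=O(1)$ and $\|\mDelta\|\le\sigma_r(\mU^\star)/4$), bound it by a Cauchy--Schwarz estimate of order $\frac{D\sqrt{r}}{K}\|\calA^*(\ve)\|\,\|\mDelta\|_F$. A Young's inequality then trades this against the curvature and co-coercivity terms, leaving a residual of order $\frac{D^2 r}{K^2}\|\calA^*(\ve)\|^2$. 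To pass from the co-coercivity bound in $\mG$ to the full gradient $\nabla_{\mU^*} f=\mG-\frac{D}{K}\calA^*(\ve)\mU$, I would apply $\|\mG\|_F^2\ge\tfrac12\|\nabla_{\mU^*} f\|_F^2-\|\tfrac{D}{K}\calA^*(\ve)\mU\|_F^2$ together with $\|\tfrac{D}{K}\calA^*(\ve)\mU\|_F^2\le\frac{D^2}{K^2}\|\calA^*(\ve)\|^2\|\mU\|_F^2$ and $\|\mU\|_F^2\le r\|\mU\|^2$, which halves the co-coercivity constant to exactly $\tfrac{2(1-4\delta_{3r})}{25(1+2\delta_{3r})^2}$ and contributes a further $\frac{D^2 r}{K^2}\|\calA^*(\ve)\|^2$-order term. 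Gathering every noise contribution into a single $\tfrac{11D^2 r}{K^2}\|\calA^*(\ve)\|^2$ then gives \eqref{RegularityCondition_Result_f}.

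\textbf{Main obstacle.} The delicate step is the noiseless estimate with explicit constants: forcing the curvature coefficient to be exactly $\tfrac14\sigma_r^2(\mU^\star)$ requires a careful interplay among the RIP error $\delta_{3r}$, the lower bound $\sigma_r(\mU)\ge\tfrac34\sigma_r(\mU^\star)$, and the sign and magnitude control of the quartic term $\mDelta\mDelta^{\rm H}$ dominated by the closeness radius. Matching the precise fraction $\tfrac{2(1-4\delta_{3r})}{25(1+2\delta_{3r})^2}$ and the noise constant $11$ is then a matter of choosing the Young weights so that neither the curvature nor the co-coercivity term is degraded; this bookkeeping is where the Wirtinger setting, through the real-part operation $\Re{\cdot}$ acting on all the relevant inner products, must be handled with more care than the real-valued argument of \cite{Tu16}.
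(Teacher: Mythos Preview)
Your proposal is correct and follows essentially the same approach as the paper: split off the noise term $\frac{D}{K}\calA^*(\ve)\mU$, establish the complex analogue of the \cite{Tu16} regularity condition for the clean problem, use the $(3r,\delta_{3r})$-RIP to control the discrepancy between $\frac{D}{K}\calA^*\calA$ and the identity, and absorb all noise contributions via Young's inequality into a single $\frac{D^2r}{K^2}\|\calA^*(\ve)\|^2$ term. The paper's organization differs only cosmetically---it introduces the auxiliary function $F(\mU,\mU^*)=\tfrac12\|\mU\mU^{\rm H}-\vrho^\star\|_F^2$ explicitly, proves the purely algebraic regularity condition for $F$ as a separate lemma, and pivots through $\|\mU\mU^{\rm H}-\vrho^\star\|_F^2$ (rather than your $\|\mG\|_F^2$) before converting to $\|\nabla_{\mU^*} f\|_F^2$ via the bound $\|\nabla_{\mU^*} f\|_F^2\le\tfrac{25}{8}(1+2\delta_{3r})^2\|\mU\mU^{\rm H}-\vrho^\star\|_F^2+\tfrac{25D^2}{8K^2}\|\calA^*(\ve)\|^2$, which yields the coefficient $\tfrac{2(1-4\delta_{3r})}{25(1+2\delta_{3r})^2}$ directly without your halving step.
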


The proof is given in {Appendix} \ref{Proof_LocalConvergence_Conclusion 1}. Towards interpreting the right hand side (RHS) of \eqref{RegularityCondition_Result_f}, first recall  \Cref{upper bound A* e} and the subsequent discussion that the term $\|\calA^*(\ve)\|/K$ decays with the overall experimental resources on the order of $1/\sqrt{KM}$. On the other hand,
a larger $K$ can result in a smaller $\delta_{3r}$, and hence a larger $\frac{2(1-4\delta_{3r})}{25(1+2\delta_{3r})^2}$. This aligns with the observation in \Cref{upper bound recovery error of QST for Pauli measurements} that performance roughly depends on the entire experimental resources $KM$, but increasing $K$ can enhance performance. Note that the first two terms in the RHS of \eqref{RegularityCondition_Result_f} decay as $\mU$ approaches $\mU^\star$, while the third term is constant for any $\mU$. Thus, roughly speaking, the RHS is positive when $\mU$ is relatively far from $\mU^\star$, ensuring a sufficient decrease of one step of the Wirtinger GD update. However, it becomes negative when $\mU$ is close to $\mU^\star$, indicating that no further decrease can be ensured. Formally, plugging \Cref{lem:RegularityCondition_f} into \eqref{expansion of distance: local convergence} gives
\begin{eqnarray}
\label{LocalConvergence_Conclusion_3}
    \|\mU_{t}-\mU^\star\mR_{t} \|_F^2
    &\!\!\!\!\leq\!\!\!\!&\bigg(1-\frac{\sigma_r^2(\mU^\star)}{2}\mu\bigg)\|\mU_{t-1}-\mU^\star\mR_{t-1} \|_F^2+\frac{22D^2r \mu}{K^2}\|\calA^*(\ve)\|^2\nonumber\\
    &\!\!\!\!\leq\!\!\!\!& \bigg(1-\frac{\sigma_r^2(\mU^\star)}{2}\mu\bigg)^{t}\|\mU_{0}-\mU^\star\mR_{0} \|_F^2 + O\bigg(\frac{D^2(\log D)r}{\sigma_r^2(\mU^\star)KM}\bigg),
\end{eqnarray}
where the first inequality assumes $\mu\leq\frac{4(1-4\delta_{3r})}{25(1+2\delta_{3r})^2}$ and the second inequality uses \Cref{upper bound A* e}. We summarize this local convergence property of Wirtinger GD in the following corollary.
\begin{Corollary} (Local Convergence)
\label{LocalConvergence_Conclusion_1}
Under the same setup as in \Cref{lem:RegularityCondition_f}, suppose the Wirtinger GD \eqref{UpdatingOfU} starts with an initialization $\mU_0$ that satisfies
\begin{eqnarray}
\label{Initialization local convergence}
\|\mU_0-\mU^\star\mR_0 \|_F\leq\frac{\sigma_r(\mU^\star)}{4}
\end{eqnarray}
and uses step size $\mu\leq\frac{4(1-4\delta_{3r})}{25(1+2\delta_{3r})^2}$. Then with probability at least $1 - e^{-C/3} - e^{-\Omega(n /(1 + \sqrt{Mn/K}) )} $, we have
\begin{eqnarray}    \label{LocalConvergence_Conclusion_2}
    \|\mU_{t}-\mU^\star\mR_{t} \|_F^2 \leq \bigg(1-\frac{\sigma_r^2(\mU^\star)}{2}\mu\bigg)^{t}\|\mU_{0}-\mU^\star\mR_{0} \|_F^2 + O\bigg(\frac{D^2(\log D)r}{\sigma_r^2(\mU^\star)KM}\bigg).
\end{eqnarray}
\end{Corollary}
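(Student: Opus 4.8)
The plan is to convert the one-step regularity condition of \Cref{lem:RegularityCondition_f} into a linear contraction recursion for the squared distance $\dist^2(\mU_t,\mU^\star)$ and then unroll it as a geometric series. Because \eqref{RegularityCondition_Result_f} is only valid inside the ball $\dist(\mU,\mU^\star)\le\sigma_r(\mU^\star)/4$, the argument is fundamentally an induction: I must simultaneously prove the error recursion and certify that every iterate remains inside this basin.

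First I would establish the one-step bound. Assuming the inductive hypothesis $\dist(\mU_{t-1},\mU^\star)\le\sigma_r(\mU^\star)/4$, I substitute the lower bound \eqref{RegularityCondition_Result_f} for $\Re{\<\nabla_{\mU^*}f,\mU_{t-1}-\mU^\star\mR_{t-1}\>}$ into the expansion \eqref{expansion of distance: local convergence}; since $f$ is real-valued, the inner product there may be taken in either order. This replaces the cross term by three contributions, and collecting the $\|\nabla_{\mU^*}f\|_F^2$ terms leaves the coefficient $\mu\big(\mu-\tfrac{4(1-4\delta_{3r})}{25(1+2\delta_{3r})^2}\big)$, which is non-positive for the prescribed step size $\mu\le\tfrac{4(1-4\delta_{3r})}{25(1+2\delta_{3r})^2}$. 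Dropping this non-positive term yields the first line of \eqref{LocalConvergence_Conclusion_3}, i.e.\ a recursion $d_t^2\le a\,d_{t-1}^2+b$ with $a=1-\tfrac{\sigma_r^2(\mU^\star)}{2}\mu\in(0,1)$ and $b=\tfrac{22D^2r\mu}{K^2}\|\calA^*(\ve)\|^2$.

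Next I would close the induction and unroll. The recursion has fixed point $b/(1-a)$, so it suffices to verify $a\,(\sigma_r(\mU^\star)/4)^2+b\le(\sigma_r(\mU^\star)/4)^2$, i.e.\ $b\le(1-a)\sigma_r^2(\mU^\star)/16$; using \Cref{upper bound A* e} to replace $\|\calA^*(\ve)\|^2/K^2$ by $O(\log D/(KM))$, this is an implicit lower bound on the total copies of the form $KM\gtrsim D^2r\log D/\sigma_r^4(\mU^\star)$. Granting this, the inductive hypothesis propagates and the recursion is legitimate at every step. I would then bound the accumulated noise by the geometric sum $b(1+a+\cdots+a^{t-1})\le b/(1-a)=\tfrac{44D^2r}{\sigma_r^2(\mU^\star)K^2}\|\calA^*(\ve)\|^2$ and invoke \Cref{upper bound A* e} a final time to reach the stated error floor $O\big(D^2(\log D)r/(\sigma_r^2(\mU^\star)KM)\big)$, giving \eqref{LocalConvergence_Conclusion_2}. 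The probability statement then follows from a union bound over the RIP event of \Cref{lem:RegularityCondition_f} and the noise event of \Cref{upper bound A* e}.

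The step I expect to be the main obstacle is the in-basin induction rather than the algebra. The summary derivation \eqref{LocalConvergence_Conclusion_3} applies the regularity condition at every iterate but does not explicitly verify that the iterates never leave the neighborhood where that condition holds; making this rigorous requires comparing the steady-state error floor $b/(1-a)$ against the basin radius $\sigma_r(\mU^\star)/4$, which is precisely where an implicit lower bound on $KM$ (equivalently, a sufficiently small statistical error $\|\calA^*(\ve)\|/K$) must enter. Care is also needed with the rotation factors: the passage $\dist^2(\mU_t,\mU^\star)\le\|\mU_t-\mU^\star\mR_{t-1}\|_F^2$ uses that $\mR_t$ is the \emph{optimal} alignment for $\mU_t$ while $\mR_{t-1}$ is merely feasible, so the first relation in \eqref{expansion of distance: local convergence} is genuinely an inequality and not an identity.
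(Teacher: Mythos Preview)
Your proposal is correct and follows essentially the same route as the paper: the paper's entire proof is the two-line derivation \eqref{LocalConvergence_Conclusion_3}, which plugs \eqref{RegularityCondition_Result_f} into \eqref{expansion of distance: local convergence}, uses the step-size bound to discard the $\|\nabla_{\mU^*}f\|_F^2$ term, unrolls the resulting recursion, and invokes \Cref{upper bound A* e}. You have in fact been more careful than the paper, since you explicitly flag and address the in-basin induction (verifying $b/(1-a)\le\sigma_r^2(\mU^\star)/16$ so that every iterate stays where \Cref{lem:RegularityCondition_f} applies), a point the paper's derivation silently assumes.
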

As above discussed, the first term in the RHS of \eqref{LocalConvergence_Conclusion_2} indicates that Wirtinger GD achieves a linear convergence rate in the first phase (when $\mU_t$ has a large distance to the target), which is similar to the noiseless real-valued cases~\cite{Tu16}. The second term in the RHS of \eqref{LocalConvergence_Conclusion_2}, which dominates after a certain number of iterations, is consistent with the recovery guarantee in \Cref{upper bound recovery error of QST for Pauli measurements} except for the additional term $\sigma_r^2(\mU^\star)$ that only depends on the target state. Notice that based on the initialization condition $\|\mU_0-\mU^\star\mR_0 \|_F\leq\frac{\sigma_r(\mU^\star)}{4}$, we can obtain $\|\mU_{t}\|\leq \|\mU_{t} - \mU^\star\mR_{t} \| + \|\mU^\star\mR_{t}\|\leq \frac{5}{4}$ and further derive
\begin{eqnarray}
    \label{relationship between nonphysical and physical}
     \|\vrho_t - \vrho^\star \|_F^2
    &\!\!\!\! = \!\!\!\!& \|\mU_{t}\mU_{t}^{\rm H}- \mU_{t}\mR_{t}^{\rm H}{\mU^\star}^{\rm H} +\mU_{t}\mR_{t}^{\rm H}{\mU^\star}^{\rm H} -\mU^\star{\mU^\star}^{\rm H} \|_F^2 \nonumber\\
      &\!\!\!\!\leq\!\!\!\!& (2\|\mU_{t}\| + 2\|\mU^\star\|)\|\mU_{t}-\mU^\star\mR_{t} \|_F^2 \leq O\bigg(\frac{D^2(\log D)r}{\sigma_r^2(\mU^\star)KM}\bigg).
\end{eqnarray}
Combing \eqref{upper bound recovery error in the main paper physical} and \eqref{relationship between nonphysical and physical}, we can conclude that the recovery error bound mentioned above is also applicable to the physical state.

\paragraph{Spectral initialization} To provide a good initialization that satisfies \eqref{Initialization local convergence}, we utilize the spectral initialization approach that has been widely used in the literature \cite{candes2015phase,Ma21TSP,tong2021accelerating}. Specifically,
\begin{eqnarray}
    \label{Spectral Initialization matrix}
 \mU_0 = \begin{bmatrix}(\lambda_1)_+^{1/2} \vc_1 & \cdots & (\lambda_r)_+^{1/2} \vc_r\end{bmatrix}, \quad \mC \mLambda \mC^{\rm H} = \frac{D}{K}\sum_{k=1}^K \wh y_k\mA_k,
\end{eqnarray}
where $(a)_+ = \max(a,0)$, $\vc_i$ denotes the $i$-th column of $\mC$, and $\mC\mLambda\mC^\H$ is the eigenvalue decomposition of $\frac{D}{K}\sum_{k=1}^K \wh y_k\mA_k$ with nonincreasing eigenvalues $\lambda_1\ge \cdots \ge \lambda_D$. The following result guarantees the performance of this initialization.

\begin{theorem} (Spectral Initialization)
\label{The spectral initialization in main paper}
Under the same setup as in \Cref{lem:RegularityCondition_f}, assuming $M\lesssim K/n$, then with probability at least $1 - e^{-C/3} - e^{-\Omega(n /(1 + \sqrt{Mn/K}) )}$, the spectral initialization \eqref{Spectral Initialization matrix} satisfies
\begin{eqnarray}
    \label{The final conclusion of intial conclusion analysis main paper}
    \|\mU_0-\mU^\star\mR_0 \|_F^2 \leq \frac{\bigg(4\delta_{3r}\|\vrho^\star\|_F + O\bigg(\sqrt{\frac{D^2(\log D)r}{KM}}\bigg)\bigg)^2}{2(\sqrt{2}-1)\sigma_r^2(\mU^\star)}.
\end{eqnarray}
\end{theorem}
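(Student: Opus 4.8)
The plan is to realize the spectral estimator as the best rank-$r$ positive semidefinite approximation of the proxy matrix $\mX := \frac{D}{K}\sum_{k=1}^K \wh y_k \mA_k = \frac{D}{K}\calA^*(\wh\vy)$ and to transfer a perturbation bound on $\mX$ into a bound on the factor distance. First I would use $\wh\vy = \calA(\vrho^\star) + \ve$ to write
\begin{align}
\mX - \vrho^\star = \underbrace{\Big(\tfrac{D}{K}\calA^*\calA - \mathcal{I}\Big)(\vrho^\star)}_{=:\,\mE_1} + \underbrace{\tfrac{D}{K}\calA^*(\ve)}_{=:\,\mE_2},
\end{align}
where $\mathcal{I}$ is the identity map, $\mE_1$ is the RIP deviation and $\mE_2$ is the statistical-noise contribution already controlled by \Cref{upper bound A* e}. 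By construction in \eqref{Spectral Initialization matrix}, $\mU_0\mU_0^{\rm H} = \sum_{i=1}^r (\lambda_i)_+ \vc_i\vc_i^{\rm H}$ is exactly the best rank-$r$ PSD approximation of $\mX$; under the $(3r,\delta_{3r})$-RIP and the noise bound, $\|\mE\|$ with $\mE := \mE_1+\mE_2$ is smaller than $\sigma_r^2(\mU^\star)=\sigma_r(\vrho^\star)$, so by Weyl's inequality the top $r$ eigenvalues of $\mX$ are strictly positive and the truncation $(\cdot)_+$ is inactive on them.

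Next I would invoke a complex/Wirtinger analogue of the standard factor-distance lemma (cf.\ \cite{Tu16}): since both $\mU_0\mU_0^{\rm H}$ and $\vrho^\star = \mU^\star\mU^{\star\rm H}$ are rank-$r$ and PSD,
\begin{align}
\dist^2(\mU_0,\mU^\star) \;\le\; \frac{1}{2(\sqrt{2}-1)\,\sigma_r^2(\mU^\star)}\,\big\|\mU_0\mU_0^{\rm H} - \vrho^\star\big\|_F^2 ,
\end{align}
which already produces the denominator in the claimed bound. It then remains to show $\|\mU_0\mU_0^{\rm H}-\vrho^\star\|_F \le 4\delta_{3r}\|\vrho^\star\|_F + O(\sqrt{D^2(\log D)r/(KM)})$. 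The key structural fact is that $\mathbf{\Delta} := \mU_0\mU_0^{\rm H}-\vrho^\star$ has $\rank(\mathbf{\Delta})\le 2r$, so writing $\|\mathbf{\Delta}\|_F = \langle \mathbf{Z}_0,\mathbf{\Delta}\rangle$ with $\mathbf{Z}_0 = \mathbf{\Delta}/\|\mathbf{\Delta}\|_F$ (rank $\le 2r$, unit Frobenius norm) and splitting $\mathbf{\Delta} = (\mU_0\mU_0^{\rm H}-\mX)+(\mX-\vrho^\star)$, I would bound the $\mX-\vrho^\star$ part by (i) the RIP inner-product inequality $|\langle \mathbf{Z}_0,\mE_1\rangle|\le \delta_{3r}\|\vrho^\star\|_F$, valid because $\rank(\mathbf{Z}_0)+\rank(\vrho^\star)\le 3r$, so \emph{no} $\sqrt{r}$ factor is incurred on the RIP term; and (ii) nuclear/spectral duality $|\langle \mathbf{Z}_0,\mE_2\rangle|\le \|\mathbf{Z}_0\|_*\,\|\mE_2\|\le \sqrt{2r}\,\tfrac{D}{K}\|\calA^*(\ve)\|$, which together with \Cref{upper bound A* e} yields the $O(\sqrt{D^2(\log D)r/(KM)})$ noise term.

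The main obstacle is the truncation term $\langle \mathbf{Z}_0,\mU_0\mU_0^{\rm H}-\mX\rangle$. Since $\mU_0\mU_0^{\rm H}-\mX = -P_{\mathcal{T}^\perp}\mX P_{\mathcal{T}^\perp}$, where $\mathcal{T}$ is the top-$r$ eigenspace of $\mX$ and $P_{\mathcal{T}^\perp}$ the complementary projector, this term reduces to the two-sided leakage $P_{\mathcal{T}^\perp}\vrho^\star P_{\mathcal{T}^\perp}$ of $\vrho^\star$ out of $\mathcal{T}$. I would control it with a Davis--Kahan/Wedin subspace-perturbation argument driven by the eigengap $\sigma_r^2(\mU^\star)$ together with $\|P_{\mathcal{T}^\perp}\mX P_{\mathcal{T}^\perp}\|\le\|\mE\|$ (Weyl); writing $\vrho^\star = P_{\mathcal{S}}\vrho^\star P_{\mathcal{S}}$ for its range $\mathcal{S}$ makes the two-sided leakage $\|P_{\mathcal{T}^\perp}\vrho^\star P_{\mathcal{T}^\perp}\|\lesssim \|\mE\|^2/\sigma_r^2(\mU^\star)$, i.e.\ second order in the perturbation and hence subdominant, while the best-approximation inequality $\|\mU_0\mU_0^{\rm H}-\mX\|\le\|\mX-\vrho^\star\|$ supplies the remaining constant, yielding the factor $4$ once the leading terms are combined. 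The delicate point is to carry out this leakage control so that it does \emph{not} reintroduce a spurious $\sqrt{r}$ into the $\delta_{3r}\|\vrho^\star\|_F$ term. Finally I would impose $M\lesssim K/n$ to apply \Cref{upper bound A* e} and union-bound its failure probability $e^{-\Omega(n/(1+\sqrt{Mn/K}))}$ with the RIP failure probability $e^{-C/3}$ from \Cref{thm:RIP-Pauli}, giving the stated success probability; a secondary technical step is verifying that the real-valued factor-distance lemma of \cite{Tu16} transfers verbatim to the complex setting, with the optimal alignment $\mR_0\in\calU_r$ playing the role of its real orthogonal counterpart.
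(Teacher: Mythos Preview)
Your overall structure is correct and matches the paper: reduce to bounding $\|\mU_0\mU_0^{\rm H}-\vrho^\star\|_F$, then convert to the factor distance via the complex extension of \cite[Lemma~5.4]{Tu16} (this is \Cref{The relationship between two distances} in the paper), and union-bound the RIP and noise events. The difference is in how $\|\mU_0\mU_0^{\rm H}-\vrho^\star\|_F$ is controlled.

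The paper avoids Davis--Kahan entirely by working in the \emph{restricted Frobenius norm} $\|\mX\|_{F,s}=\max_{\rank(\mY)\le s,\ \|\mY\|_F\le 1}|\langle \mX,\mY\rangle|$. Since $\rank(\mU_0\mU_0^{\rm H}-\vrho^\star)\le 2r$, one has $\|\mU_0\mU_0^{\rm H}-\vrho^\star\|_F=\|\mU_0\mU_0^{\rm H}-\vrho^\star\|_{F,2r}$, and the quasi-optimality of the truncated eigendecomposition (the singular values of $\mX-\mU_0\mU_0^{\rm H}$ are $\sigma_{r+1}(\mX),\sigma_{r+2}(\mX),\ldots$, and $\sigma_{r+i}(\mX)\le\sigma_i(\mX-\vrho^\star)$ by Weyl) gives the one-line inequality
\[
\|\mU_0\mU_0^{\rm H}-\vrho^\star\|_{F,2r}\ \le\ 2\,\Big\|\tfrac{D}{K}\calA^*(\wh\vy)-\vrho^\star\Big\|_{F,2r}.
\]
The right-hand side is then bounded exactly as you propose for the $\mX-\vrho^\star$ part: the RIP inner-product inequality (\Cref{RIP CONDITION FRO THE Matrix SENSING OTHER PROPERTY}) against a rank-$2r$ test matrix gives $2\delta_{3r}\|\vrho^\star\|_F$, and nuclear/spectral duality against $\frac{D}{K}\calA^*(\ve)$ gives $\sqrt{2r}\,\frac{D}{K}\|\calA^*(\ve)\|$, after which \Cref{upper bound A* e} finishes. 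Multiplying by the factor $2$ produces the stated $4\delta_{3r}\|\vrho^\star\|_F$.

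Your route---dualize with $\mZ_0$, split off the truncation piece $\mU_0\mU_0^{\rm H}-\mX$, and control the latter by Davis--Kahan---can be pushed through, but it is exactly where your ``delicate point'' bites: pairing $\mZ_0$ with $\mU_0\mU_0^{\rm H}-\mX$ via $\|\mZ_0\|_*\cdot\|\cdot\|$ or via best-approximation in spectral norm reintroduces a $\sqrt{2r}$ on $\|\mE_1\|\le 2\delta_{3r}\|\vrho^\star\|_F$, and removing it forces you to recycle the RIP bound on $P_{\mathcal T^\perp}\mZ_0P_{\mathcal T^\perp}$ and treat the leakage $P_{\mathcal T^\perp}\vrho^\star P_{\mathcal T^\perp}$ as a second-order term. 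The restricted-Frobenius-norm argument sidesteps all of this in a single inequality and yields the constant $4$ directly; your path would need several extra steps to arrive at the same place.
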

The proof is provided in {Appendix} \ref{Proof_Upper bound of noisy spectral initialization}.
The RHS of \eqref{The final conclusion of intial conclusion analysis main paper} decays with $K$ and $M$, which can be chosen relatively large to ensure the spectral initialization satisfies the requirement in \eqref{Initialization local convergence}. We again observe a similar phenomenon in how $K$ and $M$ affect the RHS of \eqref{The final conclusion of intial conclusion analysis main paper}: the first term $\delta_{3r}$ decays with $K$, while the second term $O(\sqrt{\frac{D^2(\log D)r}{KM}})$ only decreases with the total number of state copies.

\section{Numerical Experiments}
\label{Numerical Experiments}

In this section, we present numerical experiments on quantum state tomography, focusing on low-rank density matrices, to demonstrate the applicability of our theoretical results. To begin, we randomly generate $K$ Pauli matrices and a rank-$r$ density matrix $\vrho^\star = \mU^\star {\mU^\star}^{\rm H}\in\C^{2^n\times 2^n}$, where $\mU^\star = \frac{ \mA^\star + i \cdot \mB^\star}{\|\mA^\star + i \cdot \mB^\star\|_F}\in \C^{2^n\times r}$ with the entries of $\mA^\star$ and $\mB^\star$ being drawn i.i.d.\ from the standard normal distribution. For all the following experiments, we simulate each over 20 Monte Carlo independent trials and compute the average over these 20 trials.

\begin{figure}[!ht]
\centering
\subfigure[]{
\begin{minipage}[t]{0.45\textwidth}
\centering
\includegraphics[width=6.5cm]{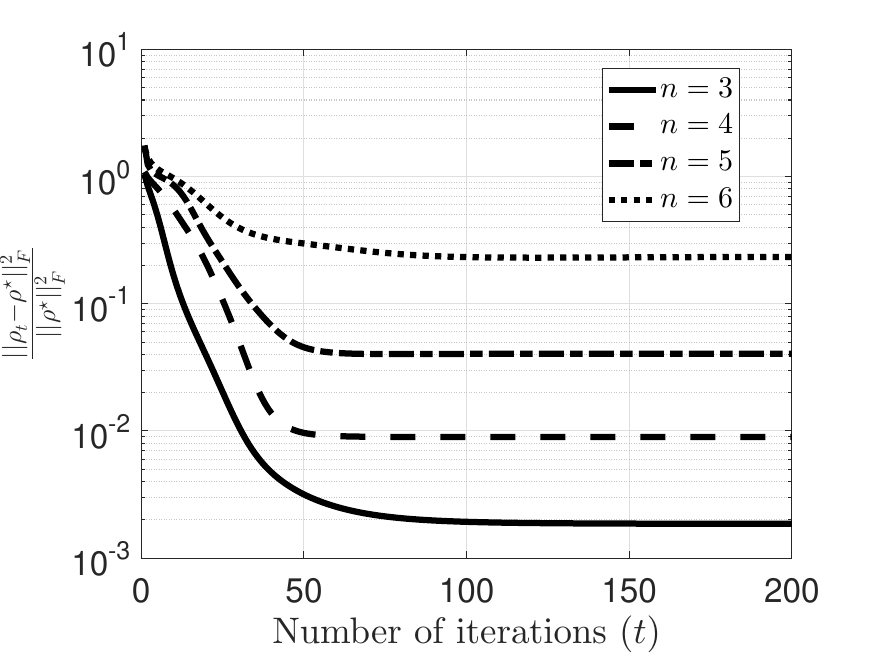}
\end{minipage}
\label{QST different n}
}
\subfigure[]{
\begin{minipage}[t]{0.45\textwidth}
\centering
\includegraphics[width=6.5cm]{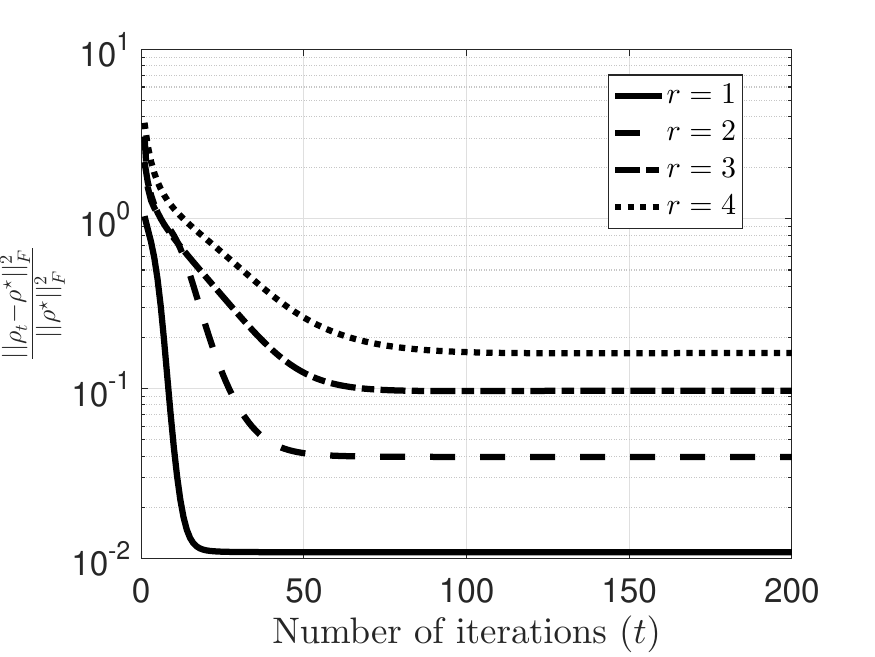}
\end{minipage}
\label{QST different r}
}
\caption{Performance comparison of Wirtinger GD, (a) for different $n$ with $r = 2$, $\mu = 0.3$, $K = 2000$ and $M = 100$, (b) for different $r$ with $n = 5$, $\mu = 0.3$, $K = 2000$ and $M = 100$.}
\end{figure}

\begin{figure}[!ht]
\centering
\subfigure[]{
\begin{minipage}[t]{0.45\textwidth}
\centering
\includegraphics[width=6.5cm]{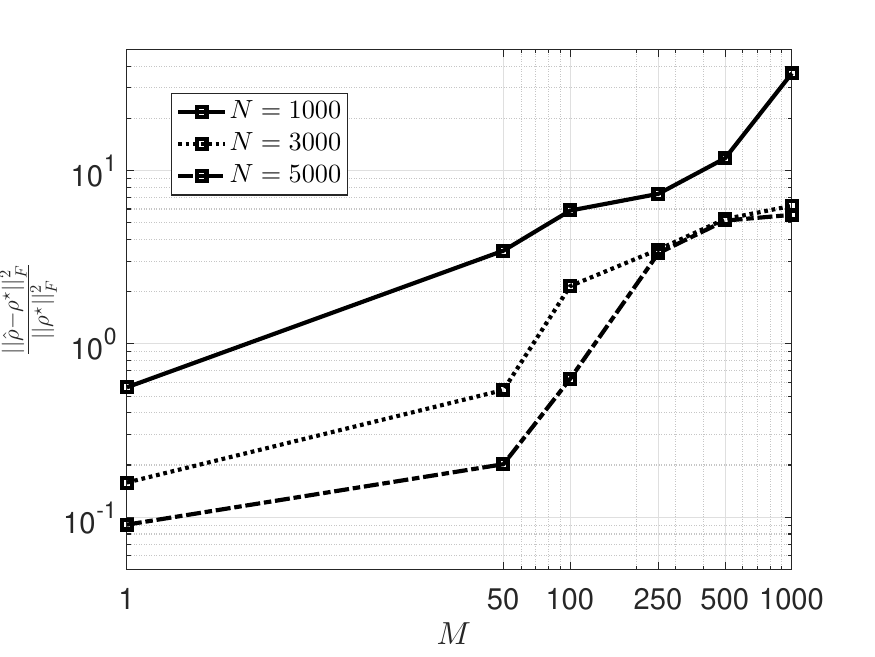}
\end{minipage}
\label{QST different n4r1}
}
\subfigure[]{
\begin{minipage}[t]{0.45\textwidth}
\centering
\includegraphics[width=6.5cm]{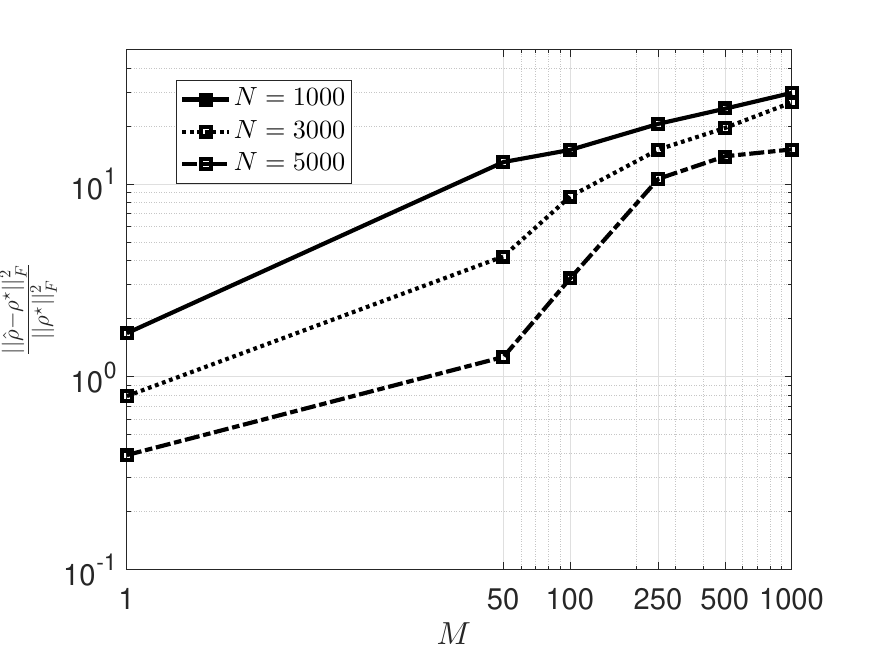}
\end{minipage}
\label{QST different n4r2}
}
\caption{Performance comparison of Wirtinger GD for different $N$ and $M$ with (a) $n = 4$, $r = 1$, (b) $n = 4$, $r = 2$.}
\end{figure}

\paragraph{Convergence of Wirtinger GD} In the first set of experiments, we evaluate the performance of  Wirtinger GD for various values of $n$ and $r$. Figures~\ref{QST different n} and \ref{QST different r} illustrate the convergence behavior of the algorithm in minimizing the loss function defined in \eqref{The loss function in terms of U}.
We notice that as the values of $n$ or $r$ increase, the convergence rate of the  Wirtinger GD  decreases while recovery errors increase, which is aligned with the findings in \Cref{LocalConvergence_Conclusion_1}.

\paragraph{Quality versus quantity of Pauli observable measurements}
In the second set of experiments, we investigate the trade-off between the number of POVMs and the number of repeated measurements by choosing different $M$ when the total number of state copies $N = KM$ is fixed. To ensure convergence, we set step sizes $\mu = 0.05$ for all the experiments. We plot the results for different choices of $r$ in Figures~\ref{QST different n4r1}-\ref{QST different n4r2}.
As anticipated, when $N$ is fixed, overall we observe a reduction in the recovery error as $M$ decreases (or $K$ increases). This is because a larger $K$  produces more measurements and reduces the RIP constant of the measurement operator in \eqref{The requirement in K of Pauli RIP}, overcoming the issue of higher statistical error in the measurements. This finding corroborates the results presented in \Cref{upper bound recovery error of QST for Pauli measurements} and the practical use of single-sample measurements \cite{wang2020scalable, huang2020predicting}.

\begin{figure}[!ht]
\centering
\includegraphics[width=6.5cm]{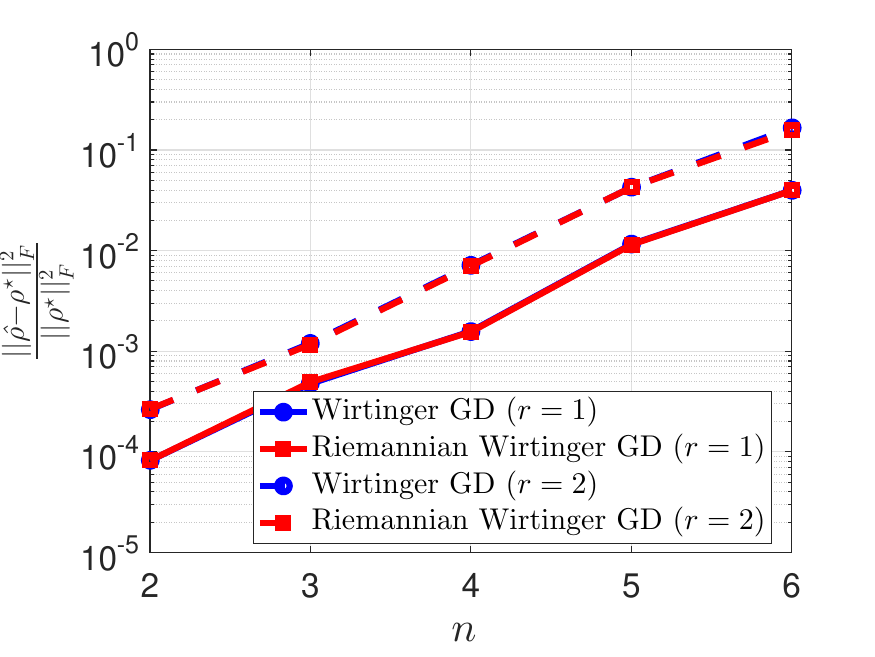}
\label{QST different algorithms local}
\caption{Performance comparison of different algorithms with $\mu = 0.1$, $K = 4000$ and $M = 50$.}
\end{figure}

\paragraph{Incorporating sphere constraint} In the previous analysis, we have omitted the unit trace constraint for the factor in \eqref{The low-rank density matrix} because imposing the unit Frobenius norm does not notably improve the performance. To support this argument, in the third set of experiments, we compare the approaches with and without this constraint. Specifically, we impose the unit trace constraint on the factors $\{\mU: \|\mU\|_F = 1  \}$ in \eqref{The loss function in terms of U} and then employ the Riemannian Wirtinger GD to solve the corresponding problem, with updates
\begin{eqnarray}
    \label{Riemannian GD}
    \wh {\mU}_t = \mU_{t-1} -\mu \calP_{T_{\mU} \text{Sp}}(\nabla_{\mU^*} f(\mU_{t-1},\mU^*_{t-1})) \ \ \text{and} \ \  {\mU}_t = \frac{\wh {\mU}_t}{\|\wh {\mU}_t\|_F},
\end{eqnarray}
where $\mu$ is the step size and $\calP_{T_{\mU} \text{Sp}}(\mV) = \mV - \<\mU, \mV \>\mU$ denotes the projection onto the tangent space $T_{\mU} \text{Sp} = \{\mQ\in\C^{D\times r}: \<\mU,\mQ\> = 0  \}$, giving a Riemannian gradient.
As shown in Figure~\ref{QST different algorithms local}, we observe that the two algorithms, Wirtinger GD and Riemannian Wirtinger GD, achieve on-par performance.

\begin{figure}[!ht]
\centering
\subfigure[]{
\begin{minipage}[t]{0.45\textwidth}
\centering
\includegraphics[width=6.5cm]{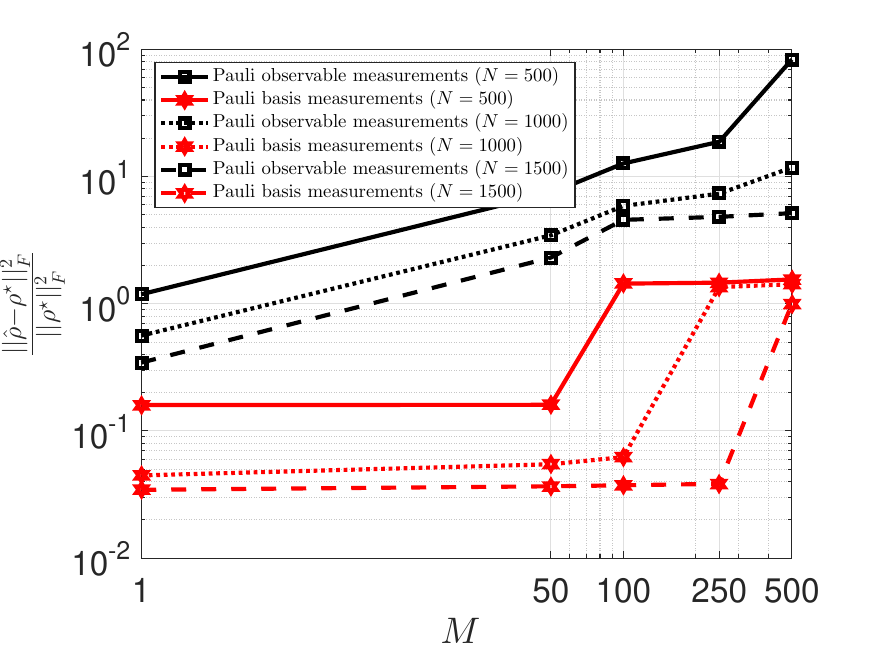}
\end{minipage}
\label{QST Local POVM different n4r1}
}
\subfigure[]{
\begin{minipage}[t]{0.45\textwidth}
\centering
\includegraphics[width=6.5cm]{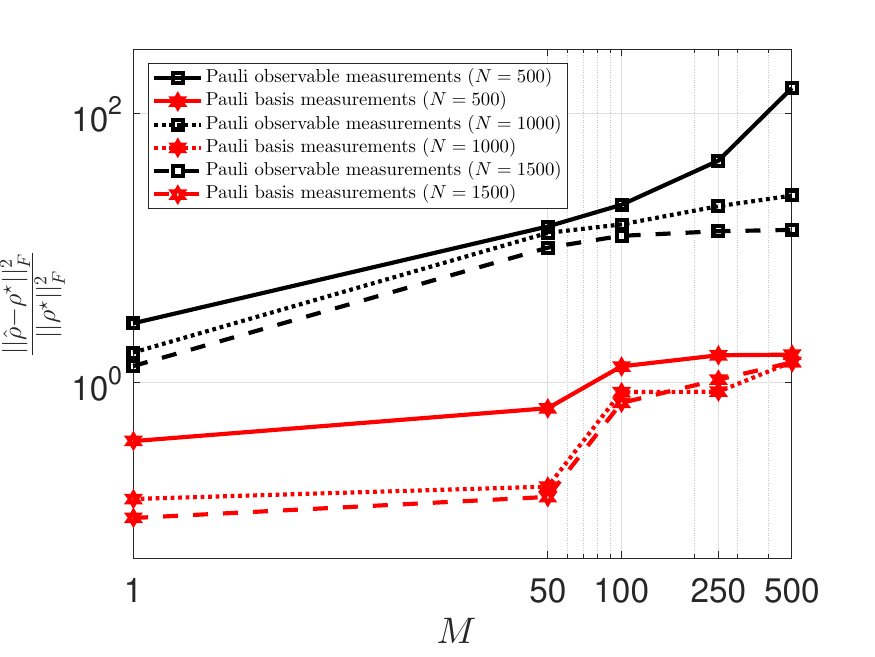}
\end{minipage}
\label{QST Local POVM different n4r2}
}
\caption{Performance comparison of the Wirtinger GD using empirical Pauli observable measurements and Pauli basis measurements with $\mu = 1$ for different $N$, $M$ and (a) $n = 4$, $r = 1$, (b) $n = 4$, $r = 2$.}
\end{figure}

\paragraph{Directly using Pauli basis measurements for QST}
Our work focused on the use of the expectation values of $n$-body Pauli observables in low-rank quantum state tomography to study the trade-off between the number of POVMs and the number of repeated measurements since the RIP with Pauli
observables facilitates the analysis. Note that the Pauli observables are actually measured using the local Pauli basis, one could also recover the target state directly from the empirical probabilities of the Pauli basis measurement (see \eqref{empirical Pauli basis measurements}).

Since Pauli basis measurements are also linear measurements of $\vrho^\star$, as described in \eqref{defi of Pauli basis measurements}, we can formulate a similar least squares minimization problem \eqref{The loss function in terms of U} and use the same Wirtinger GD to solve the problem. We conduct the numerical experiments under the same setting as the second set of experiments. By keeping the total number of measurements $N = KM$ constant, it becomes apparent that the Pauli basis measurements outperform Pauli observable measurements. One plausible explanation for this result is that the Pauli basis measurements involve $KD$ measurements, whereas the Pauli observable measurements involve only $K$  coefficients which are linear combinations of the Pauli basis measurements. On the other hand, with a fixed $N$, the recovery error associated with Pauli basis measurements also increases as $M$ grows. While we expect the analysis of the trade-off between the quality and quantity of measurements can be extended to Pauli basis measurements, one challenge is the lack of the RIP condition \cite{liu2011universal} for the Pauli basis measurements. We leave the investigation to future studies.

\section{Conclusion}
\label{conclusion}

In this paper, we investigate the trade-off between the number of POVMs and the number of repeated measurements that control the quality of measurements for quantum state tomography, focusing on low-rank states and Pauli observable measurements for theoretical analysis. By exploiting the matrix factorization approach, we first study the global landscape of the factorized problem and show that every second-order critical point is a good estimator of the target state, with better recovery achieved by using a greater number of measurement settings, despite the large statistical error in each measurement. This finding suggests the advantage of using single-sample measurements when the total number of state copies is kept constant. Additionally, we prove that Wirtinger gradient descent converges locally at a linear rate. Numerical simulation results support our theoretical findings.

\section*{Acknowledgment}
We acknowledge funding support from NSF Grants Nos. PHY-2112893, CCF-2106834, CCF-2241298 and ECCS-2409701 as well as the W. M. Keck Foundation. We thank the Ohio Supercomputer Center for providing the computational resources needed in carrying out this work.

\appendices

\section{Proof of \Cref{Strict saddle points theorem of f}}
\label{Noisy Matrix Sensing}
\begin{proof}
Recall the objective $f(\mU, \mU^*)  =  \frac{D}{2K}\big\|\calA(\mU\mU^{\rm H}) -  \wh \vy \big\|_2^2$ in \eqref{The loss function in terms of U}. We first derive its Wirtinger gradient as follows:
\begin{eqnarray}
    \label{Gradient form}
    \nabla f(\mU,\mU^*)=\begin{bmatrix}
    \nabla_{\mU} f(\mU,\mU^*) \\
          \nabla_{\mU^*} f(\mU,\mU^*)
        \end{bmatrix}=\frac{D}{K}\sum_{k=1}^K\begin{bmatrix}(\< \mA_k, \mU\mU^{\rm H}\> -  \wh y_k)\mA_k^* \mU^*\\ (\< \mA_k, \mU\mU^{\rm H}\> -  \wh y_k)\mA_k \mU  \end{bmatrix},
\end{eqnarray}
where we utilize the fact $\mA_k = \mA_k^{\rm H}$. It is noteworthy that within the framework of complex derivatives, the function $f(\mU)$ is written as $f(\mU, \mU^*)$ in the conjugate coordinates $(\mU, \mU^*)$. Therefore, we can independently compute the complex partial derivative $\nabla_{\mU} f(\mU,\mU^*)$ and the complex conjugate partial derivative $\nabla_{\mU^*} f(\mU,\mU^*)$, treating the complex variable $\mU$ and its complex conjugate $\mU^*$ as two independent variables.

Now we assume that $(\mU,\mU^*)$ is a critical point, i.e., $\nabla_{\mU} f(\mU,\mU^*) = {\bm 0}$ and $\nabla_{\mU^*} f(\mU,\mU^*) = {\bm 0}$. Let $\calP_{\mU} = \mU^{{\rm H}\dagger} \mU^{\rm H}$ be the orthogonal projection onto the range space of $\mU$.
Then, for any $\mZ\in\C^{D\times r}$, we have
\begin{eqnarray}
\label{The first order information}
   0 &&\!\!\!\!\!\!\!\!\!\!= \innerprod{\nabla_{\mU^*} f(\mU, \mU^*)}{\mZ} = \frac{D}{K}\innerprod{\calA^*\calA(\mU\mU^{\rm H} - \vrho^\star)}{\mZ\mU^{\rm H}} - \frac{D}{K}\innerprod{\calA^*(\ve)}{\mZ\mU^{\rm H}} \nonumber\\
  \Longrightarrow &&\!\!\!\!\!\!\!\!\!\!\innerprod{\mU\mU^{\rm H} - \vrho^\star}{\mZ\mU^{\rm H}} = \innerprod{\mU\mU^{\rm H} - \vrho^\star}{\mZ\mU^{\rm H}} - \frac{D}{K}\innerprod{\calA^*\calA(\mU\mU^{\rm H} - \vrho^\star)}{\mZ\mU^{\rm H}} + \frac{D}{K}\innerprod{\calA^*(\ve)}{\mZ\mU^{\rm H}}\nonumber\\
  \Longrightarrow
  &&\!\!\!\!\!\!\!\!\!\!\abs{\innerprod{\mU\mU^{\rm H} - \vrho^\star}{\mZ\mU^{\rm H}}} \le 2\delta_{2r} \norm{\mU\mU^{\rm H} - \vrho^\star}{F}\norm{\mZ\mU^{\rm H}}{F} + \frac{D}{K}\abs{\innerprod{\calA^*(\ve)}{\mZ\mU^{\rm H}} } \nonumber\\
\Longrightarrow &&\!\!\!\!\!\!\!\!\!\! \norm{(\mU\mU^{\rm H} - \vrho^\star)\calP_{\mU}}{F}^2 \le 2\delta_{2r} \norm{\mU\mU^{\rm H} - \vrho^\star}{F}\norm{(\mU\mU^{\rm H} - \vrho^\star)\calP_{\mU}}{F} + \frac{D\sqrt{2r}}{K}\|\calA^*(\ve)\| \norm{(\mU\mU^{\rm H} - \vrho^\star)\calP_{\mU}}{F}\nonumber\\
  \Longrightarrow &&\!\!\!\!\!\!\!\!\!\!
  \norm{(\mU\mU^{\rm H} - \vrho^\star)\calP_{\mU}}{F} \le 2\delta_{2r} \norm{\mU\mU^{\rm H} - \vrho^\star}{F} + \frac{D\sqrt{2r}}{K}\|\calA^*(\ve)\|,
\end{eqnarray}
where $\calA^*$ is the adjoint operator of $\calA$ and is defined as $\calA^*({\bm x})=\sum_{k=1}^K x_k\mA_k$ in the first equation, $\ve$ is defined in \eqref{The measurement error in sec 3}, the first inequality follows \Cref{RIP CONDITION FRO THE Matrix SENSING OTHER PROPERTY}, and the second inequality uses  the substitutions $\mZ = (\mU\mU^{\rm H} - \vrho^\star)\mU^{{\rm H}\dagger}$ and $\abs{\innerprod{\calA^*(\ve)}{\mZ\mU^{\rm H}} }\leq \|\calA^*(\ve)\| \norm{(\mU\mU^{\rm H} - \vrho^\star)\calP_{\mU}}{*}\leq \sqrt{2r}\|\calA^*(\ve)\|\norm{(\mU\mU^{\rm H} - \vrho^\star)\calP_{\mU}}{F}$ according to the Von Neumann trace inequality.

To further characterize the locations of critical points, we need to leverage second derivative information to determine their signatures. By computing directional second derivatives, we can obtain the Wirtinger Hessian quadrature form for $\widetilde{\mDelta} = \begin{bmatrix} \mDelta \\ \mDelta^*  \end{bmatrix}\in\C^{2D\times r}$ as follows:
\begin{eqnarray}
    \label{Hessian form}
    &\!\!\!\!\!\!\!\!&\nabla^2 f(\mU,\mU^*)[\widetilde{\mDelta},\widetilde{\mDelta}]\nonumber\\
    &\!\!\!\!=\!\!\!\!&\begin{bmatrix} (\text{vec}(\mDelta))^{\rm H}  & (\text{vec}(\mDelta^*))^{\rm H}  \end{bmatrix}\begin{bmatrix}
    \frac{\partial^2 f(\mU,\mU^*)}{\partial{\text{vec}(\mU^*)}\partial{\text{vec}(\mU)^\top}} \frac{\partial^2 f(\mU,\mU^*)}{\partial{\text{vec}(\mU^*)}\partial{\text{vec}(\mU^*)^\top}} \\  \frac{\partial^2 f(\mU,\mU^*)}{\partial{\text{vec}(\mU)}\partial{\text{vec}(\mU)^\top}} \frac{\partial^2 f(\mU,\mU^*)}{\partial{\text{vec}(\mU)}\partial{\text{vec}(\mU^*)^\top}}
    \end{bmatrix}\begin{bmatrix} \text{vec}(\mDelta) \\ \text{vec}(\mDelta^*)  \end{bmatrix}\nonumber\\
    &\!\!\!\!=\!\!\!\!& \bigg\<\mDelta, \lim_{t\to 0}\frac{\nabla_{\mU^*} f(\mU+t\mDelta, \mU^*) - \nabla_{\mU^*} f(\mU,\mU^*)}{t} \bigg\> + \bigg\<\mDelta, \lim_{t\to 0}\frac{\nabla_{\mU^*} f(\mU, \mU^*+t\mDelta^*) - \nabla_{\mU^*} f(\mU,\mU^*)}{t} \bigg\>\nonumber\\
    &\!\!\!\!\!\!\!\!&+\bigg\<\mDelta^*, \lim_{t\to 0}\frac{\nabla_{\mU} f(\mU+t\mDelta, \mU^*) - \nabla_{\mU} f(\mU,\mU^*)}{t} \bigg\> + \bigg\<\mDelta^*, \lim_{t\to 0}\frac{\nabla_{\mU} f(\mU, \mU^*+t\mDelta^*) - \nabla_{\mU} f(\mU,\mU^*)}{t} \bigg\>\nonumber\\
    &\!\!\!\!=\!\!\!\!&\frac{2D}{K}\sum_{k=1}^K|\<\mA_k, \mU\mDelta^{\rm H} \>|^2 + \frac{D}{K}\sum_{k=1}^K\<\mA_k, \mU\mDelta^{\rm H} \>\<\mA_k, \mU\mDelta^{\rm H} \> + \frac{D}{K}\sum_{k=1}^K\<\mA_k, \mU\mDelta^{\rm H} \>^*\<\mA_k, \mU\mDelta^{\rm H} \>^*\nonumber\\
    &\!\!\!\!\!\!\!\!& + \frac{2D}{K}\innerprod{\calA^*(\calA(\mU\mU^{\rm H}) - \wh \vy) }{\mDelta\mDelta^{\rm H}},
\end{eqnarray}
where we utilize $\mA_k = \mA_k^{\rm H}$ and $\<\mA_k, \mU\mDelta^{\rm H} \> = \<\mA_k, \mDelta\mU^{\rm H} \>^*$ in the last line.

With \eqref{The first order information}, the Wirtinger Hessian at any critical point $\mU$ along the direction $\widetilde{\mDelta} = \mU - \mU^\star$ \footnote{To avoid carrying $\mR$ in our equations, we perform the change of variable $\mU^\star \gets \mU^\star\mR$.} is given by
\begin{eqnarray}
\label{Second order condition in the global optimality}
&\!\!\!\!\!\!\!\!& \nabla^2 f(\mU,\mU^*)[\widetilde{\mDelta},\widetilde{\mDelta}] \nonumber\\
&\!\!\!\!\leq \!\!\!\!& \frac{2D}{K}\innerprod{\calA^*(\calA(\mU\mU^{\rm H}) - \wh \vy) }{\mDelta\mDelta^{\rm H}}  + \frac{4D}{K}\|\calA(\mU\mDelta^{\rm H} )\|_2^2 \nonumber\\
&\!\!\!\!= \!\!\!\!& \frac{2D}{K}\innerprod{\calA^*(\calA(\mU\mU^{\rm H}) - \wh \vy) }{(\mU - \mU^\star)(\mU - \mU^\star)^{\rm H}} + \frac{4D}{K} \norm{\calA(\mU(\mU - \mU^\star)^{\rm H})}{2}^2 \nonumber\\
&\!\!\!\!= \!\!\!\!& \frac{4D}{K} \norm{\calA(\mU(\mU - \mU^\star)^{\rm H})}{2}^2 - \frac{2D}{K}\innerprod{\calA^*(\calA(\mU\mU^{\rm H}) - \wh \vy) }{\mU\mU^{\rm H} - \vrho^\star} \nonumber\\
&\!\!\!\!= \!\!\!\!& \frac{4D}{K} \norm{\calA(\mU(\mU - \mU^\star)^{\rm H})}{2}^2 - \frac{2D}{K}\norm{\calA(\mU\mU^{\rm H} - \vrho^\star)}{2}^2 + \frac{2D}{K}\innerprod{\calA^*(\ve)  }{\mU\mU^{\rm H} - \vrho^\star}\nonumber\\
&\!\!\!\!\leq \!\!\!\!& 4(1+\delta_{2r})\norm{\mU(\mU - \mU^\star)^{\rm H}}{F}^2 - 2(1-\delta_{2r})\norm{\mU\mU^{\rm H} - \vrho^\star}{F}^2 + \frac{2\sqrt{2}D\sqrt{r}}{K}\|\calA^*(\ve)\| \norm{\mU\mU^{\rm H} - \vrho^\star}{F}\nonumber\\
&\!\!\!\!\leq \!\!\!\!& 4(1+\delta_{2r})\bracket{  \frac{1}{8}\norm{\mU\mU^{\rm H} - \vrho^\star}{F}^2 + \bracket{3 + \frac{1}{2(\sqrt{2}-1)}} \norm{(\mU\mU^{\rm H} - \vrho^\star)\calP_{\mU}}{F}^2  } \nonumber\\
&\!\!\!\!\!\!\!\!& - 2(1-\delta_{2r})\norm{\mU\mU^{\rm H} - \vrho^\star}{F}^2 + \frac{2\sqrt{2}D\sqrt{r}}{K}\|\calA^*(\ve)\| \norm{\mU\mU^{\rm H} - \vrho^\star}{F} \nonumber\\
&\!\!\!\!\leq \!\!\!\!& - \bracket{\frac{3}{2} - \frac{5}{2}\delta_{2r} - 32(1+\delta_{2r})\delta_{2r}^2 \bracket{3 + \frac{1}{2(\sqrt{2}-1)}} } \norm{\mU\mU^{\rm H} - \vrho^\star}{F}^2 \nonumber\\
&\!\!\!\!\!\!\!\!& + \frac{16D^2r}{K^2}(1+\delta_{2r}) \bracket{3 + \frac{1}{2(\sqrt{2}-1)}}\|\calA^*(\ve)\|^2 + \frac{2\sqrt{2}D\sqrt{r}}{K}\|\calA^*(\ve)\| \norm{\mU\mU^{\rm H} - \vrho^\star}{F}\nonumber\\
&\!\!\!\!\leq \!\!\!\!& - (1.5 - 15.7\delta_{2r}  ) \norm{\mU\mU^{\rm H} - \vrho^\star}{F}^2  + \frac{53.7 D^2r}{K^2}(1+\delta_{2r}) \|\calA^*(\ve)\|^2 + \frac{2\sqrt{2}D\sqrt{r}}{K}\|\calA^*(\ve)\| \norm{\mU\mU^{\rm H} - \vrho^\star}{F},\nonumber\\
\end{eqnarray}
where the first inequality follows  $\<\mA_k, \mU\mDelta^{\rm H} \>\<\mA_k, \mU\mDelta^{\rm H} \> + \<\mA_k, \mU\mDelta^{\rm H} \>^*\<\mA_k, \mU\mDelta^{\rm H} \>^* \leq 2|\<\mA_k, \mU\mDelta^{\rm H} \>|^2$, the second equality utilizes the first-order optimality condition $\nabla_{\mU^*} f(\mU,\mU^*) = 0$, the second inequality  applies \Cref{The requirement in K of Pauli RIP}, the third inequality uses \Cref{lem:bound WDelta}, the fourth inequality uses \eqref{The first order information}, and the last line uses the assumption  $\delta_{2r}\le 0.09$.

Now the right hand side of  \eqref{Second order condition in the global optimality} is negative when
\begin{eqnarray}
\label{condition of strict saddle points}
\norm{\mU\mU^{\rm H} - \vrho^\star}{F} \geq \underbrace{\frac{\sqrt{2}+\sqrt{82.55- 762.54\delta_{2r}- 843.09\delta_{2r}^2}}{1.5-15.7\delta_{2r}}}_{h(\delta_{2r})} \frac{D\sqrt{r}}{K} \|\calA^*(\ve)\|,
\end{eqnarray}
ensuring that $\lambda_{\min}(\nabla^2 f(\mU,\mU^*))<0$ and hence these critical points are strict saddle points.

In other words, when both the first- and second-order optimality conditions are satisfied at $\mU$, the Wirtinger Hessian matrix is PSD and the right hand side of  \eqref{Second order condition in the global optimality} is non-negative, implying that $\mU$ satisfies
\begin{eqnarray}
\label{condition of all second-order critical points}
\norm{\mU\mU^{\rm H} - \vrho^\star}{F} \leq h(\delta_{2r}) \frac{D\sqrt{r}}{K} \|\calA^*(\ve)\|.
\end{eqnarray}
\end{proof}

\section{Proof of \Cref{upper bound A* e}}
\label{error bound of A*(e)}

\begin{proof}
To bound
\begin{eqnarray}
\label{expansion of A*(e)}
 \|\calA^*(\ve)\| =  \bigg\|\sum_{k=1}^K e_k \mA_k\bigg\| =  \bigg\|\sum_{k=1}^K \parans{\sum_{j=1}^D \alpha_{k,j} \wh p_{k,j} - \innerprod{\mA_k}{\vrho^\star} } \mA_k \bigg\|,
\end{eqnarray}
we first utilize $\E[f_{k,j}] = Mp_{k,j}$ to get
\begin{eqnarray}
\label{first condition of matrix Bernstein inequality}
 &\!\!\!\!\!\!\!\!&\E\bigg[{\bigg(\sum_{j=1}^D \alpha_{k,j} \wh p_{k,j} - \innerprod{\mA_k}{\vrho^\star}\bigg)\mA_k}\bigg]\nonumber\\
 &\!\!\!\!=\!\!\!\!&\E_{f_{k,j}}\bigg[{\bigg(\sum_{j=1}^D \alpha_{k,j} \frac{f_{k,j}}{M} - \sum_{j=1}^D \alpha_{k,j} p_{k,j} \bigg)\mA_k} \bigg|\mA_k\bigg]\E[\mA_k]  = {\bm 0}.
\end{eqnarray}
It follows from $\|\mA_k\| = 1$ and $\abs{ \sum_{j=1}^D \alpha_{k,j} \wh p_{k,j} - \innerprod{\mA_k}{\vrho^\star} } \le 2$ that
\begin{eqnarray}
\label{second condition of matrix Bernstein inequality}
 \bigg\|\bigg(\sum_{j=1}^D \alpha_{k,j} \wh p_{k,j} - \innerprod{\mA_k}{\vrho^\star}\bigg)\mA_k \bigg\|\leq 2.
\end{eqnarray}
Moreover, with $\E[\mA_k^2] = \mId_{D}$ and $\E\left[ \parans{ \sum_{j=1}^D \alpha_{k,j} \wh p_{k,j} - \innerprod{\mA_k}{\vrho^\star} }^2\right]  \le \frac{2}{M}$ which follows \Cref{lem:MSE-empricial-pmfs}, we have
\begin{eqnarray}
\label{third condition of matrix Bernstein inequality}
    &&\bigg\| \sum_{k=1}^K \E\left[ \parans{ \parans{\sum_{j=1}^D \alpha_{k,j} \wh p_{k,j} - \innerprod{\mA_k}{\vrho^\star} } \mA_k }^2\right] \bigg\|\nonumber\\
    &\!\!\!\! \le\!\!\!\!& \sum_{k=1}^K \E_{\wh p_{k,j}}\bigg[ \parans{ \sum_{j=1}^D \alpha_{k,j} \wh p_{k,j} - \innerprod{\mA_k}{\vrho^\star} }^2 \bigg|\mA_k \bigg] \|\E[\mA_k^2]\|  \le  \frac{2K}{M}.
\end{eqnarray}

Now, plugging \eqref{first condition of matrix Bernstein inequality}, \eqref{second condition of matrix Bernstein inequality} and \eqref{third condition of matrix Bernstein inequality} into the matrix Bernstein's inequality in \Cref{lem:Bernstein} gives that
\begin{eqnarray}
\label{matrix Bernstein's inequality}
\P{ \bigg\|\sum_{k=1}^K \parans{\sum_{j=1}^D \alpha_{k,j} \wh p_{k,j} - \innerprod{\mA_k}{\vrho^\star} } \mA_k  \bigg\| \ge t } \le D \exp\bracket{ \frac{-t^2}{4K/M + \frac{4}{3}t }  }.
\end{eqnarray}

Using $t = c\sqrt{K\log D/M}$ with a constant $c$ and $M \leq O(\frac{K}{n})$ in the above inequality leads to
\begin{eqnarray}
\label{matrix Bernstein's inequality conclusion}
&\!\!\!\!\!\!\!\!&\P{ \|\calA^*(\ve)\|  \ge c\sqrt{K\log D/M} }\nonumber\\
&\!\!\!\!=\!\!\!\!&\P{ \bigg\|\sum_{k=1}^K \parans{\sum_{j=1}^D \alpha_{k,j} \wh p_{k,j} - \innerprod{\mA_k}{\vrho^\star} } \mA_k \bigg\| \ge c\sqrt{K\log D/M} }\nonumber\\
&\!\!\!\! \le \!\!\!\!& D \exp\bracket{ \frac{-c^2\log D}{4 + \frac{4}{3}c \sqrt{\frac{M\log D}{K}} }  }.
\end{eqnarray}
Hence, we have $\|\calA^*(\ve)\|  \leq c\sqrt{K\log D/M}$ with probability $1 - e^{-\Omega(n /(1 + \sqrt{Mn/K}) )}$.
\end{proof}

\section{Proof of \Cref{lem:RegularityCondition_f}}
\label{Proof_LocalConvergence_Conclusion 1}

\begin{proof}
We first define one loss function  $F(\mU, \mU^*)=\frac{1}{2}\|\mU\mU^{\rm H}-\mU^\star{\mU^\star}^{\rm H}\|_F^2$. Then we analyze
\begin{eqnarray}
\label{fMinusF_equation}
&\!\!\!\!\!\!\!\!&\Re{\bigg\<\nabla_{\mU^*} F(\mU,\mU^*)-\nabla_{\mU^*} f(\mU,\mU^*), \mU-\mU^\star\mR \bigg\>}\nonumber\\
&\!\!\!\!=\!\!\!\!& \Re{\bigg\<(\mU\mU^{\rm H}-\mU^\star{\mU^\star}^{\rm H})\mU-\frac{D}{K}\sum_{k=1}^K\<\mA_k, \mU\mU^{\rm H}-\mU^\star{\mU^\star}^{\rm H} \>\mA_k \mU+\frac{D}{K}\sum_{k=1}^K e_k\mA_k\mU, \mU-\mU^\star\mR \bigg\>}\nonumber\\
&\!\!\!\!=\!\!\!\!& \Re{\<(\mU\mU^{\rm H}-\mU^\star{\mU^\star}^{\rm H}), (\mU-\mU^\star\mR)\mU^{\rm H}  \>} - \frac{D}{K}\Re{\<\calA(\mU\mU^{\rm H}-\mU^\star{\mU^\star}^{\rm H}), \calA((\mU-\mU^\star\mR)\mU^{\rm H})  \> }\nonumber\\
&\!\!\!\!\!\!\!\!&+\Re{\bigg\<  \frac{D}{K}\sum_{k=1}^K e_k\mA_k, (\mU-\mU^\star\mR)\mU^{\rm H} \bigg\>}\nonumber\\
&\!\!\!\!\leq\!\!\!\!& 2\delta_{3r}\|\mU\mU^{\rm H}-\mU^\star{\mU^\star}^{\rm H}\|_F\|(\mU-\mU^\star\mR)\mU^{\rm H}\|_F+\frac{D\sqrt{r}}{K}\|(\mU-\mU^\star\mR)\mU^{\rm H}\|_F\|\calA^*(\ve)\|\nonumber\\
&\!\!\!\!=\!\!\!\!&\delta_{3r}\|\mU\mU^{\rm H}-\mU^\star{\mU^\star}^{\rm H}\|_F^2+(\frac{1}{40}+\delta_{3r})\|(\mU-\mU^\star\mR)\mU^{\rm H}\|_F^2+\frac{10D^2r}{K^2}\|\calA^*(\ve)\|^2,
\end{eqnarray}
where the first inequality uses \Cref{RIP CONDITION FRO THE Matrix SENSING OTHER PROPERTY} and $\<\sum_{k=1}^K e_k\mA_k\mU, \mU-\mU^\star\mR \>\leq\|\calA^*(\ve)\|\|(\mU-\mU^\star\mR)\mU^{\rm H}\|_* \leq\sqrt{r}\|(\mU-\mU^\star\mR)\mU^{\rm H}\|_F\|\calA^*(\ve)\|$ based on the Von Neumann trace inequality.

Moreover, we can also derive
\begin{eqnarray}
\label{f_FMinusF_F_equation}
&\!\!\!\!\!\!\!\!&\|\nabla_{\mU^*} f(\mU,\mU^*)\|_F-\|\nabla_{\mU^*} F(\mU,\mU^*)\|_F\nonumber\\
&\!\!\!\!\leq\!\!\!\!& \|\nabla_{\mU^*} f(\mU,\mU^*) - \nabla_{\mU^*} F(\mU,\mU^*)\|_F\nonumber\\
&\!\!\!\!=\!\!\!\!&\bigg\|\frac{D}{K}\sum_{k=1}^K\<\mA_k, \mU\mU^{\rm H}-\mU^\star{\mU^\star}^{\rm H} \>\mA_k \mU-\frac{D}{K}\sum_{k=1}^K e_k\mA_k\mU- (\mU\mU^{\rm H}-\mU^\star{\mU^\star}^{\rm H})\mU \bigg\|_F\nonumber\\
&\!\!\!\!\leq\!\!\!\!&\bigg\|\frac{D}{K}\sum_{k=1}^K\<\mA_k, \mU\mU^{\rm H}-\mU^\star{\mU^\star}^{\rm H} \>\mA_k \mU-(\mU\mU^{\rm H}-\mU^\star{\mU^\star}^{\rm H})\mU\bigg\|_F+\bigg\|\frac{D}{K}\sum_{k=1}^K e_k\mA_k\mU\bigg\|_F\nonumber\\
&\!\!\!\!\leq\!\!\!\!&\frac{5\delta_{3r}}{2}\|\mU\mU^{\rm H}-\mU^\star{\mU^\star}^{\rm H}\|_F+\frac{5D}{4K}\|\calA^*(\ve)\|,
\end{eqnarray}
where the last line follows \Cref{RIP CONDITION FRO THE Matrix SENSING OTHER PROPERTY} and $\|\mU\|_F\leq\frac{5}{4}$ due to $\|\mU-\mU^\star\mR \|_F\leq\frac{\sigma_r(\mU^\star)}{4}$ and $\|\mU^\star\|_F=1$.

Considering that $\|\nabla_{\mU^*} F(\mU,\mU^*)\|_F\leq\frac{5}{4}\|\mU\mU^{\rm H}-\mU^\star{\mU^\star}^{\rm H}\|_F$, we have
\begin{eqnarray}
\label{BoundOf_f_square}
\|\nabla_{\mU^*} f(\mU,\mU^*)\|_F^2\leq \frac{25}{8}(1+2\delta_{3r})^2\|\mU\mU^{\rm H}-\mU^\star{\mU^\star}^{\rm H}\|_F^2+\frac{25D^2}{8K^2}\|\calA^*(\ve)\|^2.
\end{eqnarray}

Combining \Cref{lem:RegularityCondition_F} and \eqref{fMinusF_equation}, we finally have
\begin{eqnarray}
\label{RegularityCondition_Result}
&\!\!\!\!\!\!\!\!&\Re{\<\nabla_{\mU^*} f(\mU,\mU^*), \mU-\mU^\star\mR \>}\nonumber\\
&\!\!\!\!=\!\!\!\!&\Re{\<\nabla_{\mU^*} f(\mU,\mU^*) - \nabla_{\mU^*} F(\mU,\mU^*), \mU-\mU^\star\mR \>}+\Re{\<\nabla_{\mU^*} F(\mU,\mU^*), \mU-\mU^\star\mR \>} \nonumber\\
&\!\!\!\!\geq\!\!\!\!&-\delta_{3r}\|\mU\mU^{\rm H}-\mU^\star{\mU^\star}^{\rm H}\|_F^2-\bigg(\frac{1}{40}
+\delta_{3r}\bigg)\|(\mU-\mU^\star\mR)\mU^{\rm H}\|_F^2-\frac{10D^2r}{K^2}\|\calA^*(\ve)\|^2\nonumber\\
&\!\!\!\!\!\!\!\!&+\frac{1}{4}\|\mU\mU^{\rm H}-\mU^\star{\mU^\star}^{\rm H} \|_F^2+\frac{1}{20}\|(\mU-\mU^\star\mR)\mU^{\rm H} \|_F^2+\frac{\sigma_r^2(\mU^\star)}{4}\|\mU-\mU^\star\mR \|_F^2\nonumber\\
&\!\!\!\!\geq\!\!\!\!& \bigg(\frac{1}{4} - \delta_{3r}\bigg)\|\mU\mU^{\rm H}-\mU^\star{\mU^\star}^{\rm H} \|_F^2+\frac{\sigma_r^2(\mU^\star)}{4}\|\mU-\mU^\star\mR \|_F^2-\frac{10D^2r}{K^2}\|\calA^*(\ve)\|^2\nonumber\\
&\!\!\!\!\geq\!\!\!\!&\frac{\sigma_r^2(\mU^\star)}{4}\|\mU-\mU^\star\mR \|_F^2+\frac{2(1-4\delta_{3r})}{25(1+2\delta_{3r})^2}\|\nabla_{\mU^*} f(\mU,\mU^*)\|_F^2-\frac{11D^2r}{K^2}\|\calA^*(\ve)\|^2,
\end{eqnarray}
where the second and last inequalities respectively follow $\delta_{3r}\leq\frac{1}{40}$ and \eqref{BoundOf_f_square}.

\end{proof}

\section{Proof of \Cref{The spectral initialization in main paper}}
\label{Proof_Upper bound of noisy spectral initialization}

\begin{proof}
Following \cite{zhang2021preconditioned}, we begin by introducing the restricted Frobenius norm, defined as
\begin{eqnarray}
\label{The defi of restricted F norm}
  \|\mX \|_{F,r} = \max_{\mY\in\C^{m\times n}, \ \|\mY\|_F\leq 1, \atop \text{rank}(\mY) = r}|\<\mX, \mY \>|
\end{eqnarray}
for any $\mX\in\C^{m\times n}$.
Next, we analyze
\begin{eqnarray}
    \label{Upper bound of noisy spectral initialization proof}
    \|\vrho_0 - \vrho^\star \|_F &\!\!\!\!=\!\!\!\!& \|\mU_0 \mU_0^{\rm H} - \vrho^\star\|_{F,2r}\nonumber\\
    &\!\!\!\!\leq\!\!\!\!& 2\bigg\|\frac{D}{K}\sum_{k=1}^K \wh y_k\mA_k -   \vrho^\star  \bigg\|_{F,2r}\nonumber\\
    &\!\!\!\!\leq\!\!\!\!&2 \bigg\|\frac{D}{K}\calA^*(\calA(\vrho^\star))  -  \vrho^\star\bigg\|_{F,2r} + \frac{2D}{K}\|\calA^*(\ve) \|_{2r}\nonumber\\
    &\!\!\!\!\leq\!\!\!\!& 2\max_{\mY\in\C^{D\times D}, \ \|\mY\|_F\leq 1, \atop \text{rank}(\mY) = 2r} \bigg\< \bigg(\frac{D}{K} \calA^* \calA - \mathcal{I} \bigg)(\vrho^\star), \mY \bigg\> + \frac{2D\sqrt{2r}}{K}\|\calA^*(\ve) \|\nonumber\\
    &\!\!\!\!\leq\!\!\!\!&4\delta_{3r}\|\vrho^\star\|_F + O\bigg(\sqrt{\frac{D^2(\log D)r}{KM}}\bigg),
\end{eqnarray}
where the first inequality follows from the quasi-optimality property of truncated eigenvalue decomposition projection \cite{Oseledets11} and  the last line uses \Cref{RIP CONDITION FRO THE Matrix SENSING OTHER PROPERTY} and \Cref{upper bound A* e} which are simultaneously satisfied with probability $1 - e^{-C/3} - e^{-\Omega(n /(1 + \sqrt{Mn/K}) )} $.

Furthermore, based on \Cref{The relationship between two distances}, we can obtain
\begin{eqnarray}
    \label{The final conclusion of intial conclusion analysis in appendix}
    \|\mU_0-\mU^\star\mR_0 \|_F^2 &\!\!\!\!\leq\!\!\!\!& \frac{1}{2(\sqrt{2}-1)\sigma_r^2(\mU^\star)}\|\vrho_0 - \vrho^\star \|_F^2\nonumber\\
    &\!\!\!\!\leq\!\!\!\!& \frac{\bigg(4\delta_{3r}\|\vrho^\star\|_F + O\bigg(\sqrt{\frac{D^2(\log D)r}{KM}}\bigg)\bigg)^2}{2(\sqrt{2}-1)\sigma_r^2(\mU^\star)},
\end{eqnarray}
where $\mR_0 = \argmin_{\mR'\in\calU_r}\|\mU_0 - \mU^\star\mR'\|_F$.

\end{proof}

\section{Auxiliary Materials}
\label{sec: auxiliary materials}

\begin{lemma} Given a set of probabilities $(p_1,\dots, p_D)$, we can generate empirical probabilities $\wh p_q = \frac{\# \text{ of $q$-th output} }{M}, \ \forall q \in[D]:=\{1,\ldots,D\}$ by repeating the measurement process $M$ times and taking the average of the outcomes. So $(\wh p_1,\dots, \wh p_D )$ follows the multinomial distribution $\operatorname{Multinomial}(M, \vp)$ with the covariance matrix $\mSigma$. Elements of $\mSigma$ are $\Sigma_{i,k} = \begin{cases} \frac{p_i(1-p_i)}{M}, & i=k, \\ -\frac{p_ip_k}{M}, & i\neq k. \end{cases}$.
Then
\begin{eqnarray}
\E \parans{\sum_{q=1}^D \alpha_{q} \parans{ \wh p_q - p_q} }^2  \le \frac{2-1/D}{M}
\end{eqnarray}
holds for any  $\alpha_q = \pm 1, q = 1,\ldots, D$.
\label{lem:MSE-empricial-pmfs}\end{lemma}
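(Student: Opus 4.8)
The plan is to observe that, because each $\wh p_q$ is an unbiased estimator of $p_q$ (the sample mean over the $M$ i.i.d.\ trials of a multinomial), the expression on the left-hand side is precisely the variance of the linear statistic $S = \sum_{q=1}^D \alpha_q \wh p_q$. Writing $\valpha = (\alpha_1,\ldots,\alpha_D)^\top$, this variance is the quadratic form $\E(S - \E S)^2 = \valpha^\top \mSigma \valpha$, where $\mSigma$ is exactly the multinomial covariance matrix given in the statement. So the entire task reduces to bounding $\valpha^\top \mSigma \valpha$ from above, using only the constraints $\sum_q p_q = 1$ and $\alpha_q^2 = 1$.

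The cleanest route is to factor the covariance as $\mSigma = \tfrac{1}{M}\bigl(\diag(\vp) - \vp\vp^\top\bigr)$, which one checks reproduces both the diagonal entries $p_i(1-p_i)/M$ and the off-diagonal entries $-p_ip_k/M$. Then $\valpha^\top \diag(\vp)\,\valpha = \sum_q \alpha_q^2 p_q = \sum_q p_q = 1$, while $\valpha^\top \vp\vp^\top \valpha = (\valpha^\top \vp)^2 = \bigl(\sum_q \alpha_q p_q\bigr)^2 \geq 0$, so that
\begin{equation*}
\valpha^\top \mSigma \valpha = \frac{1}{M}\Bigl(1 - \bigl(\textstyle\sum_q \alpha_q p_q\bigr)^2\Bigr) \leq \frac{1}{M} \leq \frac{2 - 1/D}{M},
\end{equation*}
which already establishes the claim (indeed with room to spare). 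If one instead prefers to track the stated constant $2-1/D$ directly, I would expand $\valpha^\top\mSigma\valpha$ into its diagonal and off-diagonal parts: the diagonal part equals $\tfrac{1}{M}(1 - \sum_q p_q^2)$, and for the off-diagonal part I would pass to absolute values, $\bigl|\tfrac{1}{M}\sum_{q\neq l}\alpha_q\alpha_l p_q p_l\bigr| \leq \tfrac{1}{M}\sum_{q\neq l} p_q p_l = \tfrac{1}{M}(1 - \sum_q p_q^2)$. Bounding the first part via $\sum_q p_q^2 \geq 1/D$ (Cauchy--Schwarz applied to $\sum_q p_q = 1$) and the second crudely by $1 - \sum_q p_q^2 \leq 1$ yields the sum $\tfrac{1}{M}(1 - 1/D) + \tfrac{1}{M} = \tfrac{2-1/D}{M}$.

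There is essentially no hard obstacle here; the proof is a direct moment computation for the multinomial. The only two points requiring a little care are, first, the indefinite sign of the off-diagonal terms $\alpha_q\alpha_l$, which forces the passage to absolute values in the second route (the first route sidesteps this entirely by keeping the exact square $(\sum_q \alpha_q p_q)^2$); and second, invoking $\sum_q p_q^2 \geq 1/D$ to extract the $1/D$ improvement. I would present the compact factored-covariance identity as the main argument, since it gives the sharp value $\tfrac{1}{M}(1 - (\sum_q \alpha_q p_q)^2)$ from which every stated bound follows immediately.
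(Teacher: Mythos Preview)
Your proposal is correct. Your second route---splitting into diagonal and off-diagonal parts, taking absolute values on the cross terms, and invoking $\sum_q p_q^2 \ge 1/D$---is essentially the paper's own argument; the only difference is that the paper keeps both copies of $\sum_q p_q^2$ together to obtain the slightly tighter constant $(2-2/D)/M$, whereas you bound one copy by $1/D$ and drop the other, landing on exactly the stated $(2-1/D)/M$.

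Your first route, however, is genuinely sharper and cleaner than the paper's. By writing $\mSigma = \tfrac{1}{M}(\diag(\vp)-\vp\vp^\top)$ and using $\alpha_q^2=1$, you obtain the \emph{exact} value $\tfrac{1}{M}\bigl(1-(\sum_q \alpha_q p_q)^2\bigr)\le \tfrac{1}{M}$, which improves the paper's bound by a factor of roughly $2$. The paper never exploits the fact that the off-diagonal contribution assembles into a perfect square; by passing to absolute values immediately it loses the cancellation between the $-\vp\vp^\top$ block and part of the diagonal. What the paper's approach buys is that it works verbatim if the $\alpha_q$ were merely bounded rather than exactly $\pm 1$, but since the lemma assumes $\alpha_q=\pm 1$, your factored-covariance identity is the preferable argument.
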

\begin{proof}
Based on the covariance matrix of the multinomial distribution, we have
\begin{eqnarray}
 \E\left[ (\wh p_q - p_q) (\wh p_{q'} - p_{q'})\right] =  \begin{cases} -\frac{1}{M} p_q p_{q'}, & q \neq q', \\  \frac{1}{M} p_q (1 - p_q) , & q = q', \end{cases}
\end{eqnarray}
and further derive
\begin{eqnarray}
\E \parans{\sum_{q=1}^D \alpha_{j} \parans{ \wh p_q - p_q} }^2 & \!\!\!\!=\!\!\!\! &\sum_{q=1}^D \sum_{q' = 1}^D \alpha_{q} \alpha_{q'} \E\left[ \parans{ \wh p_q - p_q }  \parans{ \wh p_{q'} - p_{q'}} \right]\nonumber\\
& \!\!\!\!\le \!\!\!\!& \sum_{q=1}^D \sum_{q' = 1}^D \abs{ \E\left[ \parans{ \wh p_q - p_q }  \parans{ \wh p_{q'} - p_{q'}} \right] }\nonumber\\
& \!\!\!\!=\!\!\!\!& \sum_{q\neq q'} \frac{1}{M} p_q p_{q'} + \sum_{q=1}^D \frac{1}{M} p_q (1 - p_q)\nonumber\\
& \!\!\!\!\le\!\!\!\!& \frac{2-2/D}{M},
\end{eqnarray}
where the last inequality follows $\sum_{q\neq q'}  p_q p_{q'} -\sum_{q=1}^D p_q^2 = 1 - 2\sum_{q=1}^D p_q^2 \leq 1 - \frac{2}{D} $ and $ \sum_{q=1}^D p_q =   1 $.

\end{proof}

\begin{lemma}
\label{RIP CONDITION FRO THE Matrix SENSING OTHER PROPERTY}
Suppose that $\calA$ obeys the $(2r,\delta_{2r})$-RIP in \Cref{thm:RIP-Pauli}. Then for any $\vrho_1,\vrho_2\in\C^{D\times D}$ of rank at most $r$, one has
\begin{eqnarray}
    \label{RIP CONDITION FRO THE Matrix SENSING OTHER PROPERTY_1}
    \bigg|\frac{D}{K}\<\calA(\vrho_1),\calA(\vrho_2)\> - \<\vrho_1,\vrho_2\>\bigg|\leq 2\delta_{2r}\|\vrho_1\|_F\|\vrho_2\|_F,
\end{eqnarray}
or equivalently,
\begin{eqnarray}
    \label{RIP CONDITION FRO THE Matrix SENSING OTHER PROPERTY_1_1}
    \bigg|\bigg\<\bigg(\frac{D}{K}\calA^*\calA - \mathcal{I}\bigg)(\vrho_1), \vrho_2 \bigg\>\bigg|\leq 2\delta_{2r}\|\vrho_1\|_F\|\vrho_2\|_F,
\end{eqnarray}
where $\calA^*$ is the adjoint operator of $\calA$ and is defined as $\calA^*({\bm x})=\sum_{k=1}^K x_k\mA_k$.

In addition, we also have
\begin{eqnarray}
    \label{RIP CONDITION FRO THE Matrix SENSING OTHER PROPERTY_2}
    \bigg\|\frac{D}{K}\sum_{k=1}^K\<\mA_k,\vrho\>\mA_k\mB- \vrho\mB\bigg\|_F\leq 2\delta_{2r}\|\vrho\|_F\|\mB\|,
\end{eqnarray}
where $\vrho\in\C^{D\times D}$ and $\mB\in\C^{D\times r}$.
\end{lemma}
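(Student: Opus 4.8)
The plan is to obtain all three inequalities from the $(2r,\delta_{2r})$-RIP of \Cref{thm:RIP-Pauli}: the first two by a polarization argument, and the operator-norm bound \eqref{RIP CONDITION FRO THE Matrix SENSING OTHER PROPERTY_2} from \eqref{RIP CONDITION FRO THE Matrix SENSING OTHER PROPERTY_1_1} by duality.

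For \eqref{RIP CONDITION FRO THE Matrix SENSING OTHER PROPERTY_1}, I would introduce the sesquilinear form $Q(\mX,\mY) = \frac{D}{K}\<\calA(\mX),\calA(\mY)\> - \<\mX,\mY\>$, which satisfies $Q(\mX,\mX) = \frac{D}{K}\|\calA(\mX)\|_2^2 - \|\mX\|_F^2$, so that the RIP \eqref{The definition of Unfolding} reads $|Q(\mX,\mX)|\le \delta_{2r}\|\mX\|_F^2$ for every $\mX$ of rank at most $2r$. Since $\vrho_1$ and $\vrho_2$ each have rank at most $r$, every combination $\vrho_1 + i^m\vrho_2$ (with $i=\sqrt{-1}$ and $m=0,1,2,3$) has rank at most $2r$. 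Feeding these four matrices into the complex polarization identity $Q(\vrho_1,\vrho_2)=\tfrac14\sum_{m=0}^3 i^{m} Q(\vrho_1+i^{m}\vrho_2,\vrho_1+i^{m}\vrho_2)$ and using $\sum_{m=0}^3 i^m=0$ to cancel the cross terms yields $|Q(\vrho_1,\vrho_2)|\le \delta_{2r}(\|\vrho_1\|_F^2+\|\vrho_2\|_F^2)$. Normalizing $\vrho_1,\vrho_2$ to unit Frobenius norm and rescaling then gives exactly $2\delta_{2r}\|\vrho_1\|_F\|\vrho_2\|_F$; the factor of two, relative to the two-term real polarization, is precisely the cost of the four-term complex identity forced by the complex entries of $\vrho_1,\vrho_2$.

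The restatement \eqref{RIP CONDITION FRO THE Matrix SENSING OTHER PROPERTY_1_1} is immediate from $\<\calA(\vrho_1),\calA(\vrho_2)\> = \<\calA^*\calA(\vrho_1),\vrho_2\>$, the defining property of the adjoint. For \eqref{RIP CONDITION FRO THE Matrix SENSING OTHER PROPERTY_2} I would set $\mM=(\frac{D}{K}\calA^*\calA-\mathcal{I})(\vrho)$ (with $\vrho$ of rank at most $r$, as the lemma is invoked) and use the variational formula $\|\mM\mB\|_F=\max_{\|\mW\|_F\le 1}\Re{\<\mW,\mM\mB\>}$. The cyclic property of the trace gives $\<\mW,\mM\mB\>=\<\mW\mB^{\rm H},\mM\>$, and the key observation is that $\mW\mB^{\rm H}$ has rank at most $r$, since it factors through the $r$ columns of $\mW$ (equivalently the $r$ rows of $\mB^{\rm H}$). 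Hence \eqref{RIP CONDITION FRO THE Matrix SENSING OTHER PROPERTY_1_1} applies with $\vrho_1=\vrho$ and $\vrho_2=\mW\mB^{\rm H}$ and bounds $|\<\mW\mB^{\rm H},\mM\>|$ by $2\delta_{2r}\|\vrho\|_F\|\mW\mB^{\rm H}\|_F$; combined with the submultiplicative estimate $\|\mW\mB^{\rm H}\|_F\le \|\mW\|_F\|\mB\|\le \|\mB\|$, the maximization over $\mW$ closes the argument.

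The trace manipulations, the normalization, and the spectral-norm estimate are all routine. The two steps that require genuine care are the rank bookkeeping---guaranteeing that every matrix handed to the RIP has rank at most $2r$, which is what dictates the four-term complex polarization and the resulting factor of two---and the recognition in the last part that $\mW\mB^{\rm H}$ is automatically low-rank, so that the hypothesis of \eqref{RIP CONDITION FRO THE Matrix SENSING OTHER PROPERTY_1_1} is actually met. I expect the complex polarization to be the only ingredient that departs from the standard real matrix-sensing treatment; everything else transfers verbatim.
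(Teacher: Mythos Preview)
Your proposal is correct and matches the paper's approach: the paper likewise normalizes $\vrho_1,\vrho_2$ to unit Frobenius norm and applies the four-term complex polarization identity, bounding each of the four terms by the RIP to reach $2\delta_{2r}$. For \eqref{RIP CONDITION FRO THE Matrix SENSING OTHER PROPERTY_2} the paper simply cites \cite{Ma21TSP} rather than giving details, so your explicit duality argument via the rank-$r$ factor $\mW\mB^{\rm H}$ is in fact more self-contained than what the paper provides.
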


\begin{proof}
To conveniently analyze \Cref{RIP CONDITION FRO THE Matrix SENSING OTHER PROPERTY}. we assume that $\|\vrho_1\|_F=\|\vrho_2\|_F=1$ for $\vrho_1,\vrho_2\in\C^{D\times D}$.
We can expand $|\<\calA(\vrho_1), \calA(\vrho_2)  \>|$ as follows:
\begin{eqnarray}
    \label{Proof of RIP other properties 1}
    &\!\!\!\!\!\!\!\!&\bigg|\frac{D}{K}\<\calA(\vrho_1), \calA(\vrho_2)  \> - \< \vrho_1,\vrho_2 \>\bigg| \nonumber\\
    &\!\!\!\!=\!\!\!\!& \frac{1}{4}\bigg|\frac{D}{K}\bigg(\|\calA(\vrho_1 + \vrho_2)\|_2^2 - \|\calA(\vrho_1 - \vrho_2)\|_2^2 - i\|\calA(\vrho_1+i\vrho_2)\|_2^2 + i\|\calA(\vrho_1 - i\vrho_2)\|_2^2 \bigg)\nonumber\\
    &\!\!\!\!\!\!\!\!& - \bigg( \|\vrho_1 + \vrho_2\|_F^2 - \|\vrho_1 - \vrho_2\|_F^2 - i\|\vrho_1+i\vrho_2\|_F^2 + i\|\vrho_1 - i\vrho_2\|_F^2  \bigg)  \bigg|\nonumber\\
    &\!\!\!\!\leq \!\!\!\!& \frac{1}{4} \bigg( \bigg| \frac{D}{K}\|\calA(\vrho_1 + \vrho_2)\|_2^2 - \|\vrho_1 + \vrho_2\|_F^2 \bigg| + \bigg| \frac{D}{K}\|\calA(\vrho_1 - \vrho_2)\|_2^2 - \|\vrho_1 - \vrho_2\|_F^2  \bigg|  \nonumber\\
    &\!\!\!\! \!\!\!\!&  + \bigg|\frac{D}{K}\calA(\vrho_1+i\vrho_2)\|_2^2  - \|\vrho_1+i\vrho_2\|_F^2 \bigg| +  \bigg|\frac{D}{K}\|\calA(\vrho_1 - i\vrho_2)\|_2^2 -  \|\vrho_1 - i\vrho_2\|_F^2  \bigg| \bigg)\nonumber\\
    &\!\!\!\!\leq \!\!\!\!& \frac{\delta_{2r}}{4} (\|\vrho_1 + \vrho_2\|_F^2  + \|\vrho_1 - \vrho_2\|_F^2 + \|\vrho_1+i\vrho_2\|_F^2 +\|\vrho_1 - i\vrho_2\|_F^2  ) \nonumber\\
    &\!\!\!\! =  \!\!\!\!& 2\delta_{2r}.
\end{eqnarray}

Now, we can obtain
\begin{eqnarray}
    \label{Proof of RIP other properties 2}
\bigg|\frac{D}{K}\<\calA(\vrho_1),\calA(\vrho_2)\> - \<\vrho_1,\vrho_2\>\bigg|\leq 2\delta_{2r}\|\vrho_1\|_F\|\vrho_2\|_F.
\end{eqnarray}

Furthermore, by referring to \cite{Ma21TSP} and utilizing equation \eqref{Proof of RIP other properties 2}, we can directly obtain the following result:
\begin{eqnarray}
    \label{RIP CONDITION FRO THE Matrix SENSING OTHER PROPERTY_2_proof}
    \bigg\|\frac{D}{K}\sum_{k=1}^K\<\mA_k,\vrho\>\mA_k\mB- \vrho\mB\bigg\|_F\leq 2\delta_{2r}\|\vrho\|_F\|\mB\|,
\end{eqnarray}
where $\vrho\in\C^{D\times D}$ and $\mB\in\C^{D\times r}$.

\end{proof}

\begin{lemma}
\label{Two distances relationship}
Let $\mU, \mY\in\C^{D\times r}$. Additionally, let $\mU^{\rm H}\mY = \mY^{\rm H}\mU$ be a PSD matrix. Then, we have
\begin{eqnarray}
    \label{Two distances relationship1}
\|(\mU - \mY)\mU^{\rm H}\|_F^2\leq \frac{1}{2(\sqrt{2} -1)}\|\mU\mU^{\rm H} - \mY\mY^{\rm H}\|_F^2.
\end{eqnarray}
\end{lemma}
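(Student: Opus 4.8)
The plan is to prove the equivalent statement $\norm{\mU\mU^{\rm H} - \mY\mY^{\rm H}}{F}^2 \ge 2(\sqrt{2}-1)\norm{(\mU-\mY)\mU^{\rm H}}{F}^2$ by collapsing the matrix inequality onto a sharp scalar one. Writing $\mDelta = \mU-\mY$ and $\mE = \mU\mU^{\rm H}-\mY\mY^{\rm H}$, the substitution $\mY = \mU-\mDelta$ gives the Hermitian decomposition $\mE = \mDelta\mU^{\rm H} + \mU\mDelta^{\rm H} - \mDelta\mDelta^{\rm H}$. I then introduce the $r\times r$ Gram blocks $\mG = \mU^{\rm H}\mU$, $\mQ = \mU^{\rm H}\mDelta$, and $\mB = \mDelta^{\rm H}\mDelta$. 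The hypothesis that $\mU^{\rm H}\mY = \mY^{\rm H}\mU$ is PSD supplies exactly the structure I need: since $\mG$ is Hermitian, $\mQ = \mQ^{\rm H}$ is Hermitian, and $\mU^{\rm H}\mY = \mG - \mQ \succeq 0$. Moreover the full Gram matrix $\begin{bmatrix}\mU & \mDelta\end{bmatrix}^{\rm H}\begin{bmatrix}\mU & \mDelta\end{bmatrix} = \begin{bmatrix}\mG & \mQ \\ \mQ & \mB\end{bmatrix}$ is PSD.

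Next I would expand both Frobenius norms in terms of these blocks using only the cyclicity of the trace together with $\trace((\mU^{\rm H}\mDelta)^2)=\norm{\mQ}{F}^2$ (valid because $\mQ$ is Hermitian). This yields the two identities $\norm{\mE}{F}^2 = 2\norm{\mDelta\mU^{\rm H}}{F}^2 + 2\norm{\mQ}{F}^2 - 4\trace(\mB\mQ) + \norm{\mB}{F}^2$ and $\norm{\mDelta\mU^{\rm H}}{F}^2 = \trace(\mB\mG)$. Using the algebraic identity $2\norm{\mQ}{F}^2 - 4\trace(\mB\mQ) + \norm{\mB}{F}^2 = 2\norm{\mQ-\mB}{F}^2 - \norm{\mB}{F}^2$, the entire claim reduces to the single matrix inequality
\begin{eqnarray}
(4-2\sqrt{2})\,\trace(\mB\mG) + 2\norm{\mQ-\mB}{F}^2 - \norm{\mB}{F}^2 \ge 0. \nonumber
\end{eqnarray}

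The main obstacle is that $\mG$, $\mQ$, and $\mB$ need not commute, so the inequality cannot be simultaneously diagonalized. I would resolve this by diagonalizing only $\mB = \mV\,\diag(\beta_1,\dots,\beta_r)\,\mV^{\rm H}$ with $\beta_i\ge 0$ and passing to its eigenbasis, writing $\widetilde\mG = \mV^{\rm H}\mG\mV$ and $\widetilde\mQ = \mV^{\rm H}\mQ\mV$. Then $\trace(\mB\mG) = \sum_i \beta_i \widetilde G_{ii}$, $\norm{\mB}{F}^2 = \sum_i \beta_i^2$, and $\norm{\mQ-\mB}{F}^2 = \sum_i(\widetilde Q_{ii}-\beta_i)^2 + \sum_{i\ne j}|\widetilde Q_{ij}|^2$. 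Discarding the nonnegative off-diagonal sum leaves a sum of scalar terms, one per index $i$. The PSD facts become purely scalar constraints: congruence gives $\widetilde G_{ii}\ge 0$; the condition $\mG-\mQ\succeq 0$ forces $\widetilde Q_{ii}\le \widetilde G_{ii}$; and the $2\times 2$ principal submatrices $\begin{bmatrix}\widetilde G_{ii} & \widetilde Q_{ii} \\ \widetilde Q_{ii} & \beta_i\end{bmatrix}\succeq 0$ force $\widetilde Q_{ii}^2 \le \widetilde G_{ii}\beta_i$.

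The argument then closes with the elementary, and sharp, scalar inequality: for reals $g,b\ge 0$ and $q$ satisfying $q\le g$ and $q^2\le gb$, one has $(4-2\sqrt{2})bg + 2(q-b)^2 - b^2 \ge 0$. I would verify this by a one-variable optimization over $q$: when $b\le g$ the minimizer is $q=b$, giving $b\big((4-2\sqrt{2})g-b\big)\ge 0$; when $b>g$ the feasible maximizer is $q=g$, giving the perfect square $(\sqrt{2}\,g-b)^2\ge 0$. Applying this term by term and summing over $i$ establishes the reduced inequality, hence the lemma. The equality case $b=\sqrt{2}\,g$, $q=g$ confirms that the constant $2(\sqrt{2}-1)$ is tight, which I would note as a sanity check that the approach cannot be improved.
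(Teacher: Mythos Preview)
Your proof is correct, but it takes a genuinely different route from the paper's. The paper also writes $\mDelta=\mU-\mY$ and expands $\|\mU\mU^{\rm H}-\mY\mY^{\rm H}\|_F^2$, but then completes the square differently: it isolates $\|\mDelta^{\rm H}\mDelta-\sqrt{2}\,\mU^{\rm H}\mDelta\|_F^2\ge 0$ (in your notation, $\|\mB-\sqrt{2}\mQ\|_F^2$) and simply drops it. What remains is $2\trace(\mU^{\rm H}\mU\,\mDelta^{\rm H}\mDelta)-2(2-\sqrt{2})\trace(\mU^{\rm H}\mDelta\,\mDelta^{\rm H}\mDelta)$, which after substituting $\mU^{\rm H}\mDelta=\mU^{\rm H}\mU-\mU^{\rm H}\mY$ equals $2(\sqrt{2}-1)\trace(\mG\mB)+2(2-\sqrt{2})\trace((\mG-\mQ)\mB)$. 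The second term is nonnegative because $\mG-\mQ=\mU^{\rm H}\mY\succeq 0$ and $\mB\succeq 0$, and the result follows immediately. So the paper uses only two matrix-level facts---one completed square and one PSD trace inequality---with no diagonalization and no scalar case analysis. Your approach is more elaborate: you complete a different square (keeping $\|\mQ-\mB\|_F^2$ as an exact term rather than a discard), then diagonalize $\mB$ and invoke an extra constraint $q^2\le gb$ coming from the full Gram matrix of $[\mU\ \mDelta]$, which the paper never needs. What your route buys is an explicit identification of the equality case ($b=\sqrt{2}g$, $q=g$), confirming the constant $2(\sqrt{2}-1)$ is sharp; what the paper's route buys is brevity and the observation that only the single hypothesis $\mU^{\rm H}\mY\succeq 0$ is really doing work.
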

\begin{proof}
First we define $\mDelta = \mU - \mY$  and then expand the right hand side in \eqref{Two distances relationship1}.
\begin{eqnarray}
    \label{Two distances relationship2}
    &\!\!\!\!\!\!\!\!&\|\mU\mU^{\rm H} - \mY\mY^{\rm H}\|_F^2 \nonumber\\
    &\!\!\!\!=\!\!\!\!& \|\mU\mDelta^{\rm H} + \mDelta \mU^{\rm H} -\mDelta \mDelta^{\rm H}\|_F^2 \nonumber\\
    &\!\!\!\!=\!\!\!\!&\trace(\mDelta\mU^{\rm H}\mU\mDelta^{\rm H} + \mU\mDelta^{\rm H}\mDelta\mU^{\rm H}+\mDelta\mDelta^{\rm H}\mDelta\mDelta^{\rm H} +2\mDelta\mU^{\rm H}\mDelta\mU^{\rm H} \nonumber\\
    &\!\!\!\!\!\!\!\!&- 2\mDelta\mDelta^{\rm H}\mDelta\mU^{\rm H}-2\mDelta\mDelta^{\rm H}\mU\mDelta^{\rm H})\nonumber\\
    &\!\!\!\!=\!\!\!\!&2\trace(\mU^{\rm H}\mU\mDelta^{\rm H}\mDelta) +\|\mDelta^{\rm H}\mDelta\|_F^2 +\|\sqrt{2}\mU^{\rm H}\mDelta\|_F^2-2\sqrt{2}\Re{\<\mDelta^{\rm H}\mDelta,  \mU^{\rm H}\mDelta \>}\nonumber\\
    &\!\!\!\!\!\!\!\!&+2\sqrt{2}\Re{\<\mDelta^{\rm H}\mDelta,  \mU^{\rm H}\mDelta \>} -4\Re{\trace(\mDelta\mDelta^{\rm H}\mDelta\mU^{\rm H})}\nonumber\\
    &\!\!\!\!=\!\!\!\!&2\trace(\mU^{\rm H}\mU\mDelta^{\rm H}\mDelta) + \|\mDelta^{\rm H}\mDelta - \sqrt{2}\mU^{\rm H}\mDelta\|_F^2 - 2(2-\sqrt{2})\Re{\trace(\mU^{\rm H}\mDelta\mDelta^{\rm H}\mDelta)}\nonumber\\
    &\!\!\!\!\geq\!\!\!\!&2\trace(\mU^{\rm H}\mU\mDelta^{\rm H}\mDelta)  - 2(2-\sqrt{2})\Re{\trace(\mU^{\rm H}\mDelta\mDelta^{\rm H}\mDelta)}\nonumber\\
    &\!\!\!\!=\!\!\!\!& 2\Re{\trace(((\sqrt{2} - 1)\mU^{\rm H}\mU + (2-\sqrt{2})\mU^{\rm H}\mY )\mDelta^{\rm H}\mDelta)}\nonumber\\
    &\!\!\!\!\geq\!\!\!\!&2(\sqrt{2} - 1)\Re{\trace(\mU^{\rm H}\mU\mDelta^{\rm H}\mDelta)}\nonumber\\
    &\!\!\!\!=\!\!\!\!&2(\sqrt{2} - 1)\|(\mU - \mY)\mU^{\rm H}\|_F^2,
\end{eqnarray}
where the third equation follows $\mDelta^{\rm H}\mU = (\mU - \mY)^{\rm H}\mU = \mU^{\rm H}\mU - \mU^{\rm H}\mY = \mU^{\rm H}\mDelta$. In addition, we use $\trace(\mA\mB)\geq 0$ for PSD matrices $\mA$ and $\mB$ in the second inequality.
\end{proof}

Following \Cref{Two distances relationship}, we can directly extend the result from the real case in \cite[Lemma 3.6]{li2019non} to the complex case.
\begin{lemma}\label{lem:bound WDelta}
For any matrices $\mC,\mD \in\C^{D\times r}$, let $\mP_{\mC}$ be the orthogonal projector onto the range of $\mC$. Let $\mR = \argmin_{\mR'\in\calU_r}\|\mC - \mD\mR'\|_F$ where $\calU_r$ denotes the unitary matrix of size $r$. Then, we have
\begin{eqnarray}
	\|\mC\left(\mC - \mD\mR\right)^{\rm H} \|_F^2 \leq \frac{1}{8}\|\mC\mC^\H - \mD\mD^\H \|_F^2 + \bracket{3+\frac{1}{2(\sqrt{2} - 1)}} \|(\mC\mC^\H - \mD\mD^\H)\mP_{\mC}\|_F^2.
\end{eqnarray}
\end{lemma}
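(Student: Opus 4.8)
The plan is to transcribe the real-valued argument of \cite[Lemma 3.6]{li2019non} to the Hermitian setting, with \Cref{Two distances relationship} playing the role of the one estimate that is genuinely sensitive to working over $\C$. Throughout write $\mY := \mD\mR$, $\mDelta := \mC - \mY$, and $\mH := \mC\mC^{\rm H} - \mD\mD^{\rm H} = \mC\mC^{\rm H} - \mY\mY^{\rm H}$, and use $\|\mC\mDelta^{\rm H}\|_F = \|\mDelta\mC^{\rm H}\|_F$. First I would record the consequence of the optimal alignment: since $\mR$ minimizes $\|\mC - \mD\mR'\|_F$ over $\mR' \in \calU_r$, the singular value decomposition of $\mD^{\rm H}\mC$ forces $\mC^{\rm H}\mY = \mY^{\rm H}\mC \succeq 0$. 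This is exactly the hypothesis of \Cref{Two distances relationship} with $\mU = \mC$; applying that lemma, together with its version with the roles of $\mC$ and $\mY$ swapped (legitimate because $\mY^{\rm H}\mC$ is also PSD), bounds $\|\mDelta\mC^{\rm H}\|_F$ and $\|\mDelta\mY^{\rm H}\|_F$, and hence $\|\mDelta\mDelta^{\rm H}\|_F$, by fixed multiples of $\|\mH\|_F$.

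Next I would split $\mC\mDelta^{\rm H}$ into its Hermitian and skew-Hermitian parts. These are orthogonal under the real part of the Frobenius inner product, so $\|\mC\mDelta^{\rm H}\|_F^2 = \tfrac14\|\mC\mDelta^{\rm H} + \mDelta\mC^{\rm H}\|_F^2 + \tfrac14\|\mC\mDelta^{\rm H} - \mDelta\mC^{\rm H}\|_F^2$, and the Hermitian part simplifies through the identity $\mC\mDelta^{\rm H} + \mDelta\mC^{\rm H} = \mH + \mDelta\mDelta^{\rm H}$. The structural point that I would exploit is that both $\mH + \mDelta\mDelta^{\rm H}$ and $\mC\mDelta^{\rm H} - \mDelta\mC^{\rm H}$ have a vanishing $(\mId - \mP_{\mC})(\cdot)(\mId - \mP_{\mC})$ block: this follows from $(\mId - \mP_{\mC})\mC = 0$ and $\mC^{\rm H}(\mId - \mP_{\mC}) = 0$, with the $\mY\mY^{\rm H}$ contributions of $\mH$ and of $\mDelta\mDelta^{\rm H}$ cancelling on that block.

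I would therefore decompose each of the two matrices into its $\mP_{\mC}(\cdot)\mP_{\mC}$ block and its two off-diagonal blocks. Using $\mP_{\mC}\mC = \mC$ and $\mC^{\rm H}\mP_{\mC} = \mC^{\rm H}$, the Hermitian diagonal block reduces to $\mP_{\mC}\mH\mP_{\mC}$ plus a $\mDelta\mDelta^{\rm H}$-correction, and the off-diagonal blocks reduce to $(\mId - \mP_{\mC})\mH\mP_{\mC}$ plus $\mDelta\mY^{\rm H}$-corrections. Because $\mH$ is Hermitian, $\|\mP_{\mC}\mH(\mId - \mP_{\mC})\|_F = \|(\mId - \mP_{\mC})\mH\mP_{\mC}\|_F$, so every $\mH$-block is a sub-block of $\mH\mP_{\mC}$ and is controlled by $\|\mH\mP_{\mC}\|_F = \|(\mC\mC^{\rm H} - \mD\mD^{\rm H})\mP_{\mC}\|_F$; the residual $\mDelta\mDelta^{\rm H}$- and $\mDelta\mY^{\rm H}$-terms are controlled by $\|\mH\|_F$ via the first step. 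Assembling the blocks with the weighted inequality $\|\mX + \mZ\|_F^2 \le (1+t)\|\mX\|_F^2 + (1 + t^{-1})\|\mZ\|_F^2$ and tuning $t$ then collects the $\mH\mP_{\mC}$ contributions into the coefficient $3 + \tfrac{1}{2(\sqrt{2}-1)}$ and the corrections into the coefficient $\tfrac18$ on $\|\mH\|_F^2$.

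I expect the main obstacle to be bookkeeping rather than any new idea. The delicate piece is the skew-Hermitian diagonal block $\mP_{\mC}(\mC\mDelta^{\rm H} - \mDelta\mC^{\rm H})\mP_{\mC}$, which does not reduce to a block of $\mH$; I would bound it using the Hermitian relation $\mC^{\rm H}\mDelta = \mDelta^{\rm H}\mC$ that the PSD alignment condition supplies, which is precisely where the optimality of $\mR$ re-enters. One then has to retain all cross-terms with their correct signs so that the cancellations forced by this condition drive the full-norm coefficient down to the stated $\tfrac18$ (a crude triangle-inequality treatment of $\|\mDelta\mC^{\rm H}\|_F$ alone already costs $\tfrac{1}{2(\sqrt{2}-1)} > \tfrac18$), and choose the Young weight $t$ to land exactly on the two constants. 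The only genuinely complex-domain care is to work throughout with real parts of Hermitian inner products and to use Hermitian/skew-Hermitian orthogonality in place of the symmetric/antisymmetric splitting of the real proof; \Cref{Two distances relationship} absorbs the remaining complex content, so the block estimates carry over directly from \cite[Lemma 3.6]{li2019non}.
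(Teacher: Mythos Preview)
Your proposal is correct and matches the paper's approach exactly: the paper's proof of this lemma consists of a single sentence stating that, in light of \Cref{Two distances relationship}, the result follows by directly extending \cite[Lemma 3.6]{li2019non} from the real to the complex case, which is precisely the program you outline. Your write-up in fact supplies considerably more detail than the paper itself does.
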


\begin{lemma}(Matrix Bernstein's inequality \cite[Theorem 6.6.1]{tropp2015introduction}) Let $\mA_1,\ldots,\mA_n\in\C^{D\times D}$ be  independent, random, Hermitian matrices, and assume that each one is uniformly bounded:
\begin{eqnarray}
\E[\mA_q] = {\bm 0} \ \ \text{and} \ \ \|\mA_q\| \le R, \ \ \forall q = 1,\ldots, Q.
\end{eqnarray}
Then, for any $t>0$,
\begin{eqnarray}
\P{\bigg\|\sum_{q=1}^Q \mA_q\bigg\| \ge t } \le D \exp\bracket{ \frac{-t^2}{2 \|\sum_{q=1}^Q \E[\mA_q^2]\| + \frac{2R}{3}t }  }.
\end{eqnarray}
\label{lem:Bernstein}\end{lemma}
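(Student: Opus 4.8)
The statement is the standard matrix Bernstein inequality, and the plan is to prove it by the matrix Laplace transform (matrix Chernoff) method. Since $\nrm{\mY} = \max\{\lambda_{\max}(\mY), \lambda_{\max}(-\mY)\}$ for the Hermitian matrix $\mY := \sum_{q=1}^Q \mA_q$, I would first reduce the spectral-norm tail to a one-sided eigenvalue tail: a union bound gives $\P{\nrm{\mY} \ge t} \le \P{\lambda_{\max}(\mY)\ge t} + \P{\lambda_{\max}(-\mY)\ge t}$, and because $-\mA_q$ obeys the same mean-zero condition, norm bound $R$, and second moment $\E[(-\mA_q)^2] = \E[\mA_q^2]$, it suffices to bound $\P{\lambda_{\max}(\mY)\ge t}$ (the two-sided contribution being absorbed into the dimensional prefactor). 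For this eigenvalue tail, Markov's inequality applied to $e^{\theta\lambda_{\max}(\mY)} = \lambda_{\max}(e^{\theta\mY}) \le \trace e^{\theta\mY}$ yields, for every $\theta>0$, the master bound $\P{\lambda_{\max}(\mY)\ge t} \le e^{-\theta t}\,\E[\trace e^{\theta\mY}]$.

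The two genuinely matrix-valued steps control $\E[\trace e^{\theta\mY}]$. First I would invoke the subadditivity of the matrix cumulant generating function: by Lieb's concavity theorem (that $\mX \mapsto \trace\exp(\mH + \log\mX)$ is concave on the positive-definite cone), together with the independence of the $\mA_q$ and Jensen's inequality applied one factor at a time, one obtains $\E[\trace e^{\theta\mY}] \le \trace\exp\bracket{\sum_{q=1}^Q \log\E[e^{\theta\mA_q}]}$. Second, I would bound each summand's matrix mgf: starting from the scalar fact that $z \mapsto (e^{z}-z-1)/z^2$ is increasing, so that $e^{\theta x} \le 1 + \theta x + g(\theta)x^2$ for all $x \le R$ with $g(\theta) = (e^{\theta R}-\theta R-1)/R^2$, the transfer rule for matrix functions together with $\E[\mA_q]={\bm 0}$ gives $\E[e^{\theta\mA_q}] \preceq \mId + g(\theta)\E[\mA_q^2]$, and since $\log(\mId+\mM)\preceq\mM$, this yields $\log\E[e^{\theta\mA_q}] \preceq g(\theta)\E[\mA_q^2]$.

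Summing these operator inequalities and using monotonicity of the trace exponential gives $\E[\trace e^{\theta\mY}] \le \trace\exp\bracket{g(\theta)\sum_{q}\E[\mA_q^2]} \le D\exp\bracket{g(\theta) v}$, where $v := \nrm{\sum_q \E[\mA_q^2]}$ and the last step uses $\trace\exp(\mM)\le D\exp(\lambda_{\max}(\mM))$ for a PSD matrix $\mM$. Substituting into the master bound leaves the scalar optimization $\P{\lambda_{\max}(\mY)\ge t} \le D\inf_{\theta>0}\exp\bracket{-\theta t + g(\theta)v}$; bounding $g(\theta)\le \tfrac{\theta^2/2}{1-R\theta/3}$ on $0<\theta<3/R$ and choosing $\theta = t/(v+Rt/3)$ produces $\P{\lambda_{\max}(\mY)\ge t} \le D\exp\bracket{\tfrac{-t^2}{2v + \frac{2R}{3}t}}$, which is the claimed bound, and combining with the one-sided argument above delivers the spectral-norm version.

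The main obstacle is the pair of matrix-analytic facts underlying the middle paragraph: the subadditivity of the cumulant generating function (which is exactly where Lieb's concavity theorem is needed, since the naive scalar identity $\E[e^{\theta(\mA+\mB)}]=\E[e^{\theta\mA}]\,\E[e^{\theta\mB}]$ fails for noncommuting matrices) and the semidefinite mgf bound $\E[e^{\theta\mA_q}]\preceq\mId+g(\theta)\E[\mA_q^2]$, which must be established through operator-monotone and transfer-rule arguments rather than termwise scalar inequalities. Everything after these two lemmas is the routine scalar Bernstein optimization over $\theta$.
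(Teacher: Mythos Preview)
The paper does not prove this lemma; it is simply quoted from Tropp's monograph \cite{tropp2015introduction} without argument. Your proposal is correct and is precisely the standard matrix Laplace transform proof given there (Lieb-based subadditivity of the matrix cumulant generating function, semidefinite mgf bound via the transfer rule, then the scalar Bernstein optimization in $\theta$).

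One minor point worth flagging: your union bound over $\lambda_{\max}(\mY)$ and $\lambda_{\max}(-\mY)$ naturally produces a prefactor of $2D$, not $D$; you cannot literally ``absorb'' the factor of $2$ into the dimensional constant without changing it. In fact Tropp's Theorem 6.6.1 is stated for the one-sided tail $\P{\lambda_{\max}(\mY)\ge t}$ with prefactor $d$, so the paper's two-sided statement with prefactor $D$ is itself slightly loose (or implicitly one-sided). This discrepancy is immaterial for the paper's applications, but if you write the proof out you should either state the one-sided version or carry the factor of $2$.
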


\begin{lemma} \label{lem:RegularityCondition_F}
Consider the loss function $F(\mU,\mU^*)=\frac{1}{2}\|\mU\mU^{\rm H}-\mU^\star{\mU^\star}^{\rm H}\|_F^2$ \ for $\mU, \mU^\star\C^{D\times r}$.
For all $\mU$ satisfying the condition
\begin{eqnarray}
\label{RegularityCondition_F_Requirement}
\|\mU-\mU^\star\mR \|\leq\frac{\sigma_r(\mU^\star)}{4},
\end{eqnarray}
we have the following conclusion:
\begin{eqnarray}
\label{RegularityCondition_F_Conclusion}
&\!\!\!\!\!\!\!\!&\Re{\<\nabla_{\mU^*} F(\mU,\mU^*), \mU-\mU^\star\mR \>}\nonumber\\
&\!\!\!\!\geq\!\!\!\!& \frac{1}{4}\|\mU\mU^{\rm H}-\mU^\star{\mU^\star}^{\rm H} \|_F^2+\frac{1}{20}\|(\mU-\mU^\star\mR)\mU^{\rm H} \|_F^2+\frac{\sigma_r^2(\mU^\star)}{4}\|\mU-\mU^\star\mR \|_F^2.
\end{eqnarray}
\end{lemma}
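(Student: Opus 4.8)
The plan is to mirror the real-valued regularity-condition argument of \cite{Tu16}, recast through Wirtinger calculus with the complex Procrustes alignment playing the role of the real one. Throughout write $\mDelta := \mU - \mU^\star\mR$ and $\mH := \mU\mU^{\rm H} - \mU^\star{\mU^\star}^{\rm H}$, so that the quantity to be bounded is $\Re{\innerprod{\nabla_{\mU^*} F(\mU,\mU^*)}{\mDelta}}$. First I would record the Wirtinger gradient of the noiseless factorized loss, namely $\nabla_{\mU^*} F(\mU,\mU^*) = (\mU\mU^{\rm H} - \mU^\star{\mU^\star}^{\rm H})\mU = \mH\mU$, which is the $\calA = \mathcal{I}$, noise-free specialization of the gradient computed in Appendix~\ref{Noisy Matrix Sensing}. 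Hence the target reduces to lower bounding $\Re{\innerprod{\mH}{\mDelta\mU^{\rm H}}}$.

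The structural input that makes everything clean is the optimal alignment. Since $\mR = \argmin_{\mR'\in\calU_r}\|\mU - \mU^\star\mR'\|_F$, the matrix $\mU^{\rm H}\mU^\star\mR$ is Hermitian (indeed PSD); this is exactly the hypothesis underlying \Cref{Two distances relationship}, and it forces $\mU^{\rm H}\mDelta = \mDelta^{\rm H}\mU$ to be Hermitian. I would exploit this twice. From $\mU^\star\mR = \mU - \mDelta$ one gets the elementary identity $\mDelta\mU^{\rm H} + \mU\mDelta^{\rm H} = \mH + \mDelta\mDelta^{\rm H}$; since $\mDelta\mU^{\rm H} - \mU\mDelta^{\rm H}$ is skew-Hermitian while $\mH$ is Hermitian, the skew part drops out of the real inner product, giving $\Re{\innerprod{\mH}{\mDelta\mU^{\rm H}}} = \tfrac12\|\mH\|_F^2 + \tfrac12\Re{\innerprod{\mH}{\mDelta\mDelta^{\rm H}}}$. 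Next, substituting $\mH = \mU\mDelta^{\rm H} + \mDelta\mU^{\rm H} - \mDelta\mDelta^{\rm H}$ everywhere and using Hermiticity of $\mU^{\rm H}\mDelta$ (so that $\trace((\mU^{\rm H}\mDelta)^2) = \|\mU^{\rm H}\mDelta\|_F^2$ and the two cubic traces coincide), the cubic terms cancel and the quartic terms collapse, yielding the clean identity
\begin{eqnarray}
\Re{\innerprod{\nabla_{\mU^*} F(\mU,\mU^*)}{\mDelta}} = \tfrac14\|\mH\|_F^2 + \tfrac12\|\mDelta\mU^{\rm H}\|_F^2 + \tfrac12\|\mU^{\rm H}\mDelta\|_F^2 - \tfrac14\|\mDelta^{\rm H}\mDelta\|_F^2. \nonumber
\end{eqnarray}
I would emphasize that this cubic cancellation is the crux: a cruder bound on $\Re{\innerprod{\mH}{\mDelta\mDelta^{\rm H}}}$ leaves a term of order $\|\mDelta\|\,\|\mDelta\|_F^2$, which scales only like $\sigma_r(\mU^\star)$ and would swamp the desired $\sigma_r^2(\mU^\star)$ strong-convexity term; the identity above removes it exactly.

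The first summand already matches the target term $\tfrac14\|\mH\|_F^2$, so it remains to bound the last three. I would discard the nonnegative $\tfrac12\|\mU^{\rm H}\mDelta\|_F^2$ and extract restricted strong convexity from $\tfrac12\|\mDelta\mU^{\rm H}\|_F^2$ via the PSD trace inequality $\|\mDelta\mU^{\rm H}\|_F^2 = \trace(\mU^{\rm H}\mU\,\mDelta^{\rm H}\mDelta) \ge \sigma_r^2(\mU)\|\mDelta\|_F^2$, together with Weyl's inequality $\sigma_r(\mU) \ge \sigma_r(\mU^\star) - \|\mDelta\| \ge \tfrac34\sigma_r(\mU^\star)$ under the hypothesis $\|\mDelta\| \le \sigma_r(\mU^\star)/4$; the quartic is controlled by $\|\mDelta^{\rm H}\mDelta\|_F^2 \le \|\mDelta\|^2\|\mDelta\|_F^2 \le \tfrac1{9}\|\mDelta\mU^{\rm H}\|_F^2$, again using the same radius. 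After reserving $\tfrac1{20}\|\mDelta\mU^{\rm H}\|_F^2$ for the second target term, the leftover $\|\mDelta\mU^{\rm H}\|_F^2$ budget is balanced against the quartic to produce $\tfrac{\sigma_r^2(\mU^\star)}{4}\|\mDelta\|_F^2$. I expect this final constant bookkeeping to be the main obstacle: the radius $\sigma_r(\mU^\star)/4$ makes the margins genuinely tight, and a purely naive accounting lands slightly below $\tfrac14$, so one must either retain part of the $\tfrac12\|\mU^{\rm H}\mDelta\|_F^2$ contribution or absorb the quartic sharply into $\|\mDelta\mU^{\rm H}\|_F^2$ to close the stated coefficients. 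A secondary point to verify carefully is the PSD/Hermiticity of $\mU^{\rm H}\mU^\star\mR$ at the minimizing $\mR$, since it is precisely what annihilates the cross terms in the identity above.
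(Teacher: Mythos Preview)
Your identity
\[
\Re{\innerprod{\nabla_{\mU^*} F(\mU,\mU^*)}{\mDelta}} = \tfrac14\|\mH\|_F^2 + \tfrac12\|\mDelta\mU^{\rm H}\|_F^2 + \tfrac12\|\mU^{\rm H}\mDelta\|_F^2 - \tfrac14\|\mDelta^{\rm H}\mDelta\|_F^2
\]
is correct and is a genuinely cleaner intermediate step than what the paper does. The paper never isolates this identity; instead it writes the target inequality with free parameters $c_1,c_2,c_3$, expands everything directly in $\mU^\star$ and $\mDelta$, completes a square to kill the cubic term, and then solves for the parameter range. Both routes rest on the same structural fact you flag (Hermiticity and PSD-ness of $\mU^{\rm H}\mU^\star\mR$ at the Procrustes minimizer), and both exploit it to collapse the cubic cross terms; your version just makes the cancellation explicit rather than hiding it inside a completed square.

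Your caution about the constants is well placed, and you should be aware that of your two suggested fixes only one actually works. Option (b), absorbing the quartic ``sharply'' into $\|\mDelta\mU^{\rm H}\|_F^2$, cannot close the gap: the bound $\|\mDelta^{\rm H}\mDelta\|_F^2 \le \tfrac{1}{9}\|\mDelta\mU^{\rm H}\|_F^2$ is already tight at the boundary $\|\mDelta\| = \sigma_r(\mU^\star)/4$, and combining it with $\|\mDelta\mU^{\rm H}\|_F^2 \ge \tfrac{9}{16}\sigma_r^2(\mU^\star)\|\mDelta\|_F^2$ after reserving $\tfrac{1}{20}\|\mDelta\mU^{\rm H}\|_F^2$ yields only $\tfrac{19}{80}\sigma_r^2(\mU^\star)\|\mDelta\|_F^2$, strictly below the required $\tfrac14$. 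Option (a) does work, but not by simply lower-bounding $\|\mU^{\rm H}\mDelta\|_F^2$ (which can vanish when $\mDelta \perp \Range(\mU)$): you must re-expand both $\|\mDelta\mU^{\rm H}\|_F^2$ and $\|\mU^{\rm H}\mDelta\|_F^2$ via $\mU = \mU^\star\mR + \mDelta$ and use the resulting cross terms together. Doing so, one finds for instance $\tfrac{9}{20}\|\mDelta\mU^{\rm H}\|_F^2 + \tfrac12\|\mU^{\rm H}\mDelta\|_F^2 - \tfrac14\|\mDelta^{\rm H}\mDelta\|_F^2 \ge \tfrac{9}{20}\|\mDelta(\mU^\star\mR)^{\rm H}\|_F^2 - c\,\|\mDelta^{\rm H}\mDelta\|_F^2$ for a moderate constant $c$, and then $\|\mDelta(\mU^\star\mR)^{\rm H}\|_F^2 \ge \sigma_r^2(\mU^\star)\|\mDelta\|_F^2$ delivers the needed $\tfrac14$ with room to spare. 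At that point your argument has essentially converged to the paper's parametric expansion, just arrived at from a more transparent starting point.
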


\begin{proof}
Let $\mU, \mU^\star\in\C^{D\times r}$, and define $\mDelta = \mU - \mU^\star\mR$, where $\mR$ is the unitary matrix that minimizes $\|\mU - \mU^\star\mR\|_F$. We denote $\mA\mSigma\mB^{\rm H}$ as the singular value decomposition of ${\mU^\star}^{\rm H}\mU$, and we have the optimal $\mR = \mA\mB^{\rm H}$. Consequently, we can further obtain $\mU^{\rm H}\mU^\star\mR = \mB \mSigma \mB^{\rm H} = (\mU^\star\mR)^{\rm H}\mU$, which means that $\mU^{\rm H}\mU^\star\mR$ is a Hermitian PSD matrix. Additionally, we also have $\mDelta^{\rm H}\mU^\star\mR = \mU^{\rm H}\mU^\star\mR - \mR^{\rm H}{\mU^\star}^{\rm H}\mU^\star\mR = (\mU^\star\mR)^{\rm H}\mU - (\mU^\star\mR)^{\rm H}\mU^\star\mR = (\mU^\star\mR)^{\rm H}\mDelta$. Hence, $\mDelta^{\rm H}\mU^\star\mR$ is a Hermitian matrix. To avoid carrying $\mR$ in our equations, we perform the change of variable $\mU^\star \gets \mU^\star\mR$. Without loss of generality, we assume $\mR = \mId$ and have $\mU^{\rm H}\mU^\star\succeq 0$ and $\mDelta^{\rm H}\mU^\star = {\mU^\star}^{\rm H}\mDelta$.

Based on the fact
\begin{eqnarray}
\label{factorization regualrity condition 1}
&\!\!\!\!\!\!\!\!&\Re{\<\nabla_{\mU^*} F(\mU,\mU^*), \mU-\mU^\star \>} \nonumber\\
&\!\!\!\! =\!\!\!\!& \frac{1}{2}\trace\big((\nabla_{\mU^*} F(\mU,\mU^*))^{\rm H}(\mU-\mU^\star) \big) + \frac{1}{2}\trace\big((\mU-\mU^\star)^{\rm H}\nabla_{\mU^*} F(\mU,\mU^*)\big)
\end{eqnarray}
and $\nabla_{\mU^*} F(\mU,\mU^*) =(\mU\mU^{\rm H}-\mU^\star{\mU^\star}^{\rm H})\mU$,
we will instead prove
\begin{eqnarray}
\label{factorization regualrity condition 2}
&&\frac{1}{2}\trace((\mU\mU^{\rm H} - \mU^\star{\mU^\star}^{\rm H})\mDelta\mU^{\rm H}) + \frac{1}{2}\trace(\mU\mDelta^{\rm H}(\mU\mU^{\rm H} - \mU^\star{\mU^\star}^{\rm H})  )\nonumber\\
&&-c_1\|\mU\mU^{\rm H}-\mU^\star{\mU^\star}^{\rm H}\|_F^2 -c_2 \sigma_r^2(\mU^\star)\|\mU - \mU^\star\|_F^2 - c_3\|(\mU-\mU^\star)\mU^{\rm H} \|_F^2 \geq 0,
\end{eqnarray}
where $c_1,c_2,c_3$ are positive constants.

By $\mU = \mDelta + \mU^\star$, we can expand the LHS of \eqref{factorization regualrity condition 2} as following:
\begin{eqnarray}
\label{factorization regualrity condition 3}
&\!\!\!\!\!\!\!\!&\frac{1}{2}\trace((\mU\mU^{\rm H} - \mU^\star{\mU^\star}^{\rm H})\mDelta\mU^{\rm H}) + \frac{1}{2}\trace(\mU\mDelta^{\rm H}(\mU\mU^{\rm H} - \mU^\star{\mU^\star}^{\rm H})  )\nonumber\\
&\!\!\!\!\!\!\!\!&-c_1\|\mU\mU^{\rm H}-\mU^\star{\mU^\star}^{\rm H}\|_F^2 -c_2 \sigma_r^2(\mU^\star)\|\mU - \mU^\star\|_F^2 - c_3\|(\mU-\mU^\star)\mU^{\rm H} \|_F^2\nonumber\\
&\!\!\!\!=\!\!\!\!&(1-c_1 -c_3)\trace(\mDelta\mDelta^{\rm H}\mDelta\mDelta^{\rm H}) - c_2\sigma_r^2(\mU^\star) \trace(\mDelta^{\rm H} \mDelta) + (1-2c_1 -c_3)\trace({\mU^\star}^{\rm H}\mU^\star \mDelta^{\rm H}\mDelta )\nonumber\\
&\!\!\!\!\!\!\!\!&+(\frac{3}{2}-2c_1-c_3 )\trace(\mDelta\mDelta^{\rm H}\mDelta{\mU^\star}^{\rm H}) +(\frac{3}{2}-2c_1-c_3 )\trace(\mDelta^{\rm H}\mDelta\mDelta^{\rm H}\mU^\star)\nonumber\\
&\!\!\!\!\!\!\!\!&+(\frac{1}{2} - c_1)\trace(\mDelta{\mU^\star}^{\rm H}\mDelta{\mU^\star}^{\rm H}) +(\frac{1}{2} - c_1 )\trace(\mU^\star\mDelta^{\rm H}\mU^\star\mDelta^{\rm H}) \nonumber\\
&\!\!\!\!=\!\!\!\!&(1-c_1 -c_3)\trace(\mDelta\mDelta^{\rm H}\mDelta\mDelta^{\rm H}) - c_2\sigma_r^2(\mU^\star) \trace(\mDelta^{\rm H} \mDelta) + (1-2c_1 -c_3)\trace({\mU^\star}^{\rm H}\mU^\star \mDelta^{\rm H}\mDelta )\nonumber\\
&\!\!\!\!\!\!\!\!&+(3-4c_1-2c_3)\Re{\trace(\mDelta\mDelta^{\rm H}\mDelta{\mU^\star}^{\rm H})} +(1-2c_1)\Re{\trace(\mDelta{\mU^\star}^{\rm H}\mDelta{\mU^\star}^{\rm H})}\nonumber\\
&\!\!\!\!=\!\!\!\!&(1-c_1 -c_3)\trace(\mDelta\mDelta^{\rm H}\mDelta\mDelta^{\rm H}) - c_2\sigma_r^2(\mU^\star) \trace(\mDelta^{\rm H} \mDelta) + (1-2c_1 -c_3)\|\mDelta{\mU^\star}^{\rm H}\|_F^2 \nonumber\\
&\!\!\!\!\!\!\!\!& + (3-4c_1-2c_3)\trace(\mDelta\mDelta^{\rm H}\mDelta{\mU^\star}^{\rm H}) +(1-2c_1)\|{\mU^\star}^{\rm H}\mDelta\|_F^2\nonumber\\
&\!\!\!\!=\!\!\!\!& \bigg\|\frac{3-4c_1-2c_3}{2\sqrt{1-2c_1-c_3}}\mDelta\mDelta^{\rm H} + \sqrt{1-2c_1 -c_3}\mDelta{\mU^\star}^{\rm H} \bigg\|_F^2 +(1-2c_1)\|{\mU^\star}^{\rm H}\mDelta\|_F^2\nonumber\\
&\!\!\!\!\!\!\!\!& +\bigg(1-c_1 -c_3 - \frac{(3-4c_1-2c_3)^2}{4(1-2c_1-c_3)}\bigg)\trace(\mDelta\mDelta^{\rm H}\mDelta\mDelta^{\rm H}) - c_2\sigma_r^2(\mU^\star) \trace(\mDelta^{\rm H} \mDelta)
\end{eqnarray}
where the third equation follows from the fact that $\trace(\mDelta\mDelta^{\rm H}\mDelta{\mU^\star}^{\rm H})$  and $\trace(\mDelta{\mU^\star}^{\rm H}\mDelta{\mU^\star}^{\rm H})$ are real. This is due to the property that, by using $\mDelta^{\rm H}\mU^\star = {\mU^\star}^{\rm H}\mDelta$, we can respectively show that
\begin{eqnarray}
\label{property of F 1}
(\trace(\mDelta\mDelta^{\rm H}\mDelta{\mU^\star}^{\rm H}))^* = \trace(\mDelta^* \mDelta^\top \mDelta^* {\mU^\star}^\top) = \trace( \mDelta^{\rm H}\mDelta\mDelta^{\rm H}{\mU^\star}) =  \trace(\mDelta \mDelta^{\rm H}\mDelta{\mU^\star}^{\rm H}),
\end{eqnarray}
\begin{eqnarray}
\label{property of F 2}
\trace(\mDelta{\mU^\star}^{\rm H}\mDelta{\mU^\star}^{\rm H}) = \trace(\mDelta\mDelta^{\rm H}\mU^\star{\mU^\star}^{\rm H}) = \|{\mU^\star}^{\rm H}\mDelta\|_F^2.
\end{eqnarray}

To ensure $\eqref{factorization regualrity condition 3}\geq 0$, by the fact $1-c_1 -c_3 - \frac{(3-4c_1-2c_3)^2}{4(1-2c_1-c_3)}\leq 0$ for $0\leq c_1,c_3\leq 1$, we need to guarantee that
\begin{eqnarray}
\label{condition of final conclusion}
&\!\!\!\!\!\!\!\!&(1-2c_1)\|{\mU^\star}^{\rm H}\mDelta\|_F^2+(1-c_1 -c_3 - \frac{(3-4c_1-2c_3)^2}{4(1-2c_1-c_3)})\trace(\mDelta\mDelta^{\rm H}\mDelta\mDelta^{\rm H}) - c_2\sigma_r^2(\mU^\star) \trace(\mDelta^{\rm H} \mDelta)\nonumber\\
&\!\!\!\! \geq\!\!\!\!& (1-2c_1)\sigma_r^2(\mU^\star)\|\mDelta\|_F^2 - (c_1 + c_3 + \frac{(3-4c_1-2c_3)^2}{4(1-2c_1-c_3)}-1)\|\mDelta\|^2\|\mDelta\|_F^2 - c_2\sigma_r^2(\mU^\star) \|\mDelta\|_F^2\geq 0.
\end{eqnarray}
Note that \eqref{condition of final conclusion} can be satisfied when $\|\mDelta\|^2\leq \frac{1-2c_1-c_2}{c_1 + c_3 + \frac{(3-4c_1-2c_3)^2}{4(1-2c_1-c_3)}-1}\sigma_r^2(\mU^\star)\leq\frac{9}{47}\sigma_r^2(\mU^\star)\leq \frac{\sigma_r^2(\mU^\star)}{4}$ for $c_1=c_2 =\frac{1}{4}$ and $c_3=\frac{1}{20}$.

This completes the proof.

\end{proof}

By directly extending \cite[Lemma 5.4]{Tu16} to the complex case, we have
\begin{lemma}
\label{The relationship between two distances}
For any $\mU,\mX\in\C^{D\times r}$, we have
\begin{eqnarray}
    \label{Relationship between two distances}
    \|\mU - \mX\mR \|_F^2 \leq \frac{1}{2(\sqrt{2}-1)\sigma_r^2(\mX)} \|\mU\mU^{\rm H} - \mX\mX^{\rm H} \|_F^2,
\end{eqnarray}
where  $\mR\in\calU_r$.
\end{lemma}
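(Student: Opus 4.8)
The plan is to derive the inequality directly from \Cref{Two distances relationship}, combined with the first-order optimality condition of the orthogonal Procrustes problem that defines $\mR$. First I would fix the minimizer $\mR = \argmin_{\mR'\in\calU_r}\|\mU - \mX\mR'\|_F$ and set $\mY = \mX\mR$, so that $\mY\mY^{\rm H} = \mX\mX^{\rm H}$ and $\sigma_r(\mY) = \sigma_r(\mX)$ because $\mR$ is unitary. Writing the SVD $\mX^{\rm H}\mU = \mA\mSigma\mB^{\rm H}$, the optimal unitary is $\mR = \mA\mB^{\rm H}$, which makes $\mU^{\rm H}\mY = \mB\mSigma\mB^{\rm H}$ a Hermitian PSD matrix; this is precisely the hypothesis required by \Cref{Two distances relationship}. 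This verification of the complex Procrustes optimality condition is already carried out inside the proof of \Cref{lem:RegularityCondition_F}, so I would simply invoke it rather than repeat the SVD computation.

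The key observation is that both the hypothesis $\mU^{\rm H}\mY = \mY^{\rm H}\mU \succeq 0$ and the right-hand side $\|\mU\mU^{\rm H} - \mY\mY^{\rm H}\|_F$ of \Cref{Two distances relationship} are symmetric under the interchange $\mU \leftrightarrow \mY$. Applying \Cref{Two distances relationship} with the roles of $\mU$ and $\mY$ swapped therefore yields
\[
\|(\mU - \mY)\mY^{\rm H}\|_F^2 \leq \frac{1}{2(\sqrt{2}-1)}\|\mU\mU^{\rm H} - \mY\mY^{\rm H}\|_F^2.
\]
Setting $\mDelta = \mU - \mY$ and using the cyclic property of the trace gives $\|\mDelta\mY^{\rm H}\|_F^2 = \trace(\mY^{\rm H}\mY\,\mDelta^{\rm H}\mDelta)$. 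Since both $\mY^{\rm H}\mY$ and $\mDelta^{\rm H}\mDelta$ are PSD, the elementary bound $\trace(\mP\mQ) \geq \lambda_{\min}(\mP)\trace(\mQ)$ for PSD matrices (proved by writing $\trace(\mP\mQ) = \trace(\mQ^{1/2}\mP\mQ^{1/2})$) shows
\[
\trace(\mY^{\rm H}\mY\,\mDelta^{\rm H}\mDelta) \geq \sigma_r^2(\mY)\|\mDelta\|_F^2 = \sigma_r^2(\mX)\|\mU - \mX\mR\|_F^2.
\]
Chaining the last two displays and dividing through by $\sigma_r^2(\mX)$ gives the claim.

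The argument is short precisely because the heavy algebra has already been absorbed into \Cref{Two distances relationship}; the only genuinely new ingredient is the PSD trace inequality. The one point requiring care is to apply \Cref{Two distances relationship} in its $\mY^{\rm H}$-projected (swapped) form rather than the $\mU^{\rm H}$-projected form as literally stated, since it is this choice that produces the factor $\sigma_r^2(\mX)$ depending only on the target factor, instead of a bound involving $\mU^{\rm H}\mU$. I expect the main obstacle, such as it is, to be purely bookkeeping: confirming that the complex Procrustes minimizer genuinely delivers a Hermitian PSD product $\mU^{\rm H}\mY$ so that \Cref{Two distances relationship} is legitimately invoked. Everything else is routine, which is consistent with this being a direct extension of \cite[Lemma 5.4]{Tu16} from the real to the complex setting.
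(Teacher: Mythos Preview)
Your proposal is correct and is precisely the natural extension of \cite[Lemma~5.4]{Tu16} that the paper invokes; the paper itself gives no independent argument beyond that citation, so your explicit derivation via the swapped form of \Cref{Two distances relationship} together with the PSD trace inequality $\trace(\mY^{\rm H}\mY\,\mDelta^{\rm H}\mDelta)\ge\sigma_r^2(\mX)\|\mDelta\|_F^2$ is exactly what is needed. The only implicit assumption you should state is that $\mR$ denotes the Procrustes minimizer (otherwise the inequality is false, e.g.\ $\mU=\mX$, $\mR=-\mId$); this is indeed how the lemma is used in \Cref{Proof_Upper bound of noisy spectral initialization}.
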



\begin{thebibliography}{10}

\bibitem{nielsen2002quantum}
Michael~A Nielsen and Isaac Chuang.
\newblock Quantum computation and quantum information, 2002.

\bibitem{wang2012quantum}
Yazhen Wang.
\newblock Quantum computation and quantum information.
\newblock 2012.

\bibitem{gisin2007quantum}
Nicolas Gisin and Rob Thew.
\newblock Quantum communication.
\newblock {\em Nature photonics}, 1(3):165--171, 2007.

\bibitem{cozzolino2019high}
Daniele Cozzolino, Beatrice Da~Lio, Davide Bacco, and Leif~Katsuo Oxenl{\o}we.
\newblock High-dimensional quantum communication: benefits, progress, and
  future challenges.
\newblock {\em Advanced Quantum Technologies}, 2(12):1900038, 2019.

\bibitem{braginsky1995quantum}
Vladimir~B Braginsky, Vladimir~Borisovich Braginski{\u\i}, and Farid~Ya
  Khalili.
\newblock {\em Quantum measurement}.
\newblock Cambridge University Press, 1995.

\bibitem{brukner2017quantum}
{\v{C}}aslav Brukner.
\newblock On the quantum measurement problem.
\newblock {\em Quantum [Un] Speakables II: Half a Century of Bell's Theorem},
  pages 95--117, 2017.

\bibitem{bertrand1987tomographic}
Jacqueline Bertrand and Pierre Bertrand.
\newblock A tomographic approach to wigner's function.
\newblock {\em Foundations of Physics}, 17(4):397--405, 1987.

\bibitem{vogel1989determination}
K~Vogel and H~Risken.
\newblock Determination of quasiprobability distributions in terms of
  probability distributions for the rotated quadrature phase.
\newblock {\em Physical Review A}, 40(5):2847, 1989.

\bibitem{leonhardt1995quantum}
Ulf Leonhardt.
\newblock Quantum-state tomography and discrete wigner function.
\newblock {\em Physical review letters}, 74(21):4101, 1995.

\bibitem{hradil1997quantum}
Zdenek Hradil.
\newblock Quantum-state estimation.
\newblock {\em Physical Review A}, 55(3):R1561, 1997.

\bibitem{qin2024quantum}
Zhen Qin, Casey Jameson, Zhexuan Gong, Michael~B Wakin, and Zhihui Zhu.
\newblock Quantum state tomography for matrix product density operators.
\newblock {\em IEEE Transactions on Information Theory}, 70(7):5030--5056,
  2024.

\bibitem{liu2011universal}
Yi-Kai Liu.
\newblock Universal low-rank matrix recovery from pauli measurements.
\newblock {\em Advances in Neural Information Processing Systems}, 24, 2011.

\bibitem{guctua2020fast}
Madalin Gu{\c{t}}{\u{a}}, Jonas Kahn, Richard Kueng, and Joel~A Tropp.
\newblock Fast state tomography with optimal error bounds.
\newblock {\em Journal of Physics A: Mathematical and Theoretical},
  53(20):204001, 2020.

\bibitem{francca2021fast}
Daniel~Stilck Fran{\c{c}}a, Fernando~GS Brand{\~a}o, and Richard Kueng.
\newblock Fast and robust quantum state tomography from few basis measurements.
\newblock In {\em 16th Conference on the Theory of Quantum Computation,
  Communication and Cryptography (TQC 2021)}. Schloss Dagstuhl-Leibniz-Zentrum
  f{\"u}r Informatik, 2021.

\bibitem{voroninski2013quantum}
Vladislav Voroninski.
\newblock Quantum tomography from few full-rank observables.
\newblock {\em arXiv:1309.7669}, 2013.

\bibitem{KuengACHA17}
Richard Kueng, Holger Rauhut, and Ulrich Terstiege.
\newblock Low rank matrix recovery from rank one measurements.
\newblock {\em Applied and Computational Harmonic Analysis}, 42(1):88--116,
  2017.

\bibitem{wang2020scalable}
Jun Wang, Zhao-Yu Han, Song-Bo Wang, Zeyang Li, Liang-Zhu Mu, Heng Fan, and Lei
  Wang.
\newblock Scalable quantum tomography with fidelity estimation.
\newblock {\em Physical Review A}, 101(3):032321, 2020.

\bibitem{huang2020predicting}
Hsin-Yuan Huang, Richard Kueng, and John Preskill.
\newblock Predicting many properties of a quantum system from very few
  measurements.
\newblock {\em Nature Physics}, 16(10):1050--1057, 2020.

\bibitem{flammia2012quantum}
Steven~T Flammia, David Gross, Yi-Kai Liu, and Jens Eisert.
\newblock Quantum tomography via compressed sensing: error bounds, sample
  complexity and efficient estimators.
\newblock {\em New Journal of Physics}, 14(9):095022, 2012.

\bibitem{kim2023fast}
Junhyung~Lyle Kim, George Kollias, Amir Kalev, Ken~X Wei, and Anastasios
  Kyrillidis.
\newblock Fast quantum state reconstruction via accelerated non-convex
  programming.
\newblock In {\em Photonics}, volume~10, page 116. MDPI, 2023.

\bibitem{hsu2024quantum}
Ming-Chien Hsu, En-Jui Kuo, Wei-Hsuan Yu, Jian-Feng Cai, and Min-Hsiu Hsieh.
\newblock Quantum state tomography via nonconvex riemannian gradient descent.
\newblock {\em Physical Review Letters}, 132(24):240804, 2024.

\bibitem{zhang2017efficient}
Jiaojiao Zhang, Kezhi Li, Shuang Cong, and Haitao Wang.
\newblock Efficient reconstruction of density matrices for high dimensional
  quantum state tomography.
\newblock {\em Signal Processing}, 139:136--142, 2017.

\bibitem{kyrillidis2018provable}
Anastasios Kyrillidis, Amir Kalev, Dohyung Park, Srinadh Bhojanapalli,
  Constantine Caramanis, and Sujay Sanghavi.
\newblock Provable compressed sensing quantum state tomography via non-convex
  methods.
\newblock {\em npj Quantum Information}, 4(1):1--7, 2018.

\bibitem{brandao2020fast}
Fernando~GSL Brand{\~a}o, Richard Kueng, and Daniel~Stilck Fran{\c{c}}a.
\newblock Fast and robust quantum state tomography from few basis measurements.
\newblock {\em arXiv:2009.08216}, 2020.

\bibitem{jain2013low}
Prateek Jain, Praneeth Netrapalli, and Sujay Sanghavi.
\newblock Low-rank matrix completion using alternating minimization.
\newblock In {\em Proceedings of the forty-fifth annual ACM symposium on Theory
  of computing}, pages 665--674, 2013.

\bibitem{zheng2015convergent}
Qinqing Zheng and John Lafferty.
\newblock A convergent gradient descent algorithm for rank minimization and
  semidefinite programming from random linear measurements.
\newblock {\em Advances in Neural Information Processing Systems}, 28, 2015.

\bibitem{Tu16}
Stephen Tu, Ross Boczar, Max Simchowitz, Mahdi Soltanolkotabi, and Ben Recht.
\newblock Low-rank solutions of linear matrix equations via procrustes flow.
\newblock In {\em International Conference on Machine Learning}, pages
  964--973. PMLR, 2016.

\bibitem{charisopoulos2021low}
Vasileios Charisopoulos, Yudong Chen, Damek Davis, Mateo D{\'\i}az, Lijun Ding,
  and Dmitriy Drusvyatskiy.
\newblock Low-rank matrix recovery with composite optimization: good
  conditioning and rapid convergence.
\newblock {\em Foundations of Computational Mathematics}, 21(6):1505--1593,
  2021.

\bibitem{tong2021accelerating}
Tian Tong, Cong Ma, and Yuejie Chi.
\newblock Accelerating ill-conditioned low-rank matrix estimation via scaled
  gradient descent.
\newblock {\em Journal of Machine Learning Research}, 22(150):1--63, 2021.

\bibitem{li2020nonconvex}
Xiao Li, Zhihui Zhu, Anthony Man-Cho~So, and Rene Vidal.
\newblock Nonconvex robust low-rank matrix recovery.
\newblock {\em SIAM Journal on Optimization}, 30(1):660--686, 2020.

\bibitem{bhojanapalli2016global}
Srinadh Bhojanapalli, Behnam Neyshabur, and Nati Srebro.
\newblock Global optimality of local search for low rank matrix recovery.
\newblock {\em Advances in Neural Information Processing Systems}, 29, 2016.

\bibitem{park2017non}
Dohyung Park, Anastasios Kyrillidis, Constantine Carmanis, and Sujay Sanghavi.
\newblock Non-square matrix sensing without spurious local minima via the
  burer-monteiro approach.
\newblock In {\em Artificial Intelligence and Statistics}, pages 65--74. PMLR,
  2017.

\bibitem{wang2017unified}
Lingxiao Wang, Xiao Zhang, and Quanquan Gu.
\newblock A unified computational and statistical framework for nonconvex
  low-rank matrix estimation.
\newblock In {\em Artificial Intelligence and Statistics}, pages 981--990.
  PMLR, 2017.

\bibitem{ge2017no}
Rong Ge, Chi Jin, and Yi~Zheng.
\newblock No spurious local minima in nonconvex low rank problems: A unified
  geometric analysis.
\newblock In {\em International Conference on Machine Learning}, pages
  1233--1242. PMLR, 2017.

\bibitem{zhu2018global}
Zhihui Zhu, Qiuwei Li, Gongguo Tang, and Michael~B Wakin.
\newblock Global optimality in low-rank matrix optimization.
\newblock {\em IEEE Transactions on Signal Processing}, 66(13):3614--3628,
  2018.

\bibitem{li2019non}
Qiuwei Li, Zhihui Zhu, and Gongguo Tang.
\newblock The non-convex geometry of low-rank matrix optimization.
\newblock {\em Information and Inference: A Journal of the IMA}, 8(1):51--96,
  2019.

\bibitem{zhu2021global}
Zhihui Zhu, Qiuwei Li, Gongguo Tang, and Michael~B Wakin.
\newblock The global optimization geometry of low-rank matrix optimization.
\newblock {\em IEEE Transactions on Information Theory}, 67(2):1308--1331,
  2021.

\bibitem{candes2015phase}
Emmanuel~J Candes, Xiaodong Li, and Mahdi Soltanolkotabi.
\newblock Phase retrieval via wirtinger flow: Theory and algorithms.
\newblock {\em IEEE Transactions on Information Theory}, 61(4):1985--2007,
  2015.

\bibitem{sun2018geometric}
Ju~Sun, Qing Qu, and John Wright.
\newblock A geometric analysis of phase retrieval.
\newblock {\em Foundations of Computational Mathematics}, 18:1131--1198, 2018.

\bibitem{severini2005elements}
Thomas~A Severini.
\newblock {\em Elements of distribution theory}, volume~17.
\newblock Cambridge University Press, 2005.

\bibitem{haah2017sample}
J~Haah, AW~Harrow, Z~Ji, X~Wu, and N~Yu.
\newblock Sample-optimal tomography of quantum states.
\newblock {\em IEEE Transactions on Information Theory}, 63(9):5628--5641,
  2017.

\bibitem{burer2003nonlinear}
Samuel Burer and Renato~DC Monteiro.
\newblock A nonlinear programming algorithm for solving semidefinite programs
  via low-rank factorization.
\newblock {\em Mathematical programming}, 95(2):329--357, 2003.

\bibitem{burer2005local}
Samuel Burer and Renato~DC Monteiro.
\newblock Local minima and convergence in low-rank semidefinite programming.
\newblock {\em Mathematical programming}, 103(3):427--444, 2005.

\bibitem{zhang2017matrix}
Xian-Da Zhang.
\newblock {\em Matrix analysis and applications}.
\newblock Cambridge University Press, 2017.

\bibitem{lee2019first}
Jason~D Lee, Ioannis Panageas, Georgios Piliouras, Max Simchowitz, Michael~I
  Jordan, and Benjamin Recht.
\newblock First-order methods almost always avoid strict saddle points.
\newblock {\em Mathematical programming}, 176:311--337, 2019.

\bibitem{condat2016fast}
Laurent Condat.
\newblock Fast projection onto the simplex and the l 1 ball.
\newblock {\em Mathematical Programming}, 158(1):575--585, 2016.

\bibitem{du2017gradient}
Simon~S Du, Chi Jin, Jason~D Lee, Michael~I Jordan, Aarti Singh, and Barnabas
  Poczos.
\newblock Gradient descent can take exponential time to escape saddle points.
\newblock {\em Advances in neural information processing systems}, 30, 2017.

\bibitem{Ma21TSP}
Cong Ma, Yuanxin Li, and Yuejie Chi.
\newblock Beyond procrustes: Balancing-free gradient descent for asymmetric
  low-rank matrix sensing.
\newblock {\em IEEE Transactions on Signal Processing}, 69:867--877, 2021.

\bibitem{zhang2021preconditioned}
Jialun Zhang, Salar Fattahi, and Richard~Y Zhang.
\newblock Preconditioned gradient descent for over-parameterized nonconvex
  matrix factorization.
\newblock {\em Advances in Neural Information Processing Systems},
  34:5985--5996, 2021.

\bibitem{Oseledets11}
Ivan~V Oseledets.
\newblock Tensor-train decomposition.
\newblock {\em SIAM Journal on Scientific Computing}, 33(5):2295--2317, 2011.

\bibitem{tropp2015introduction}
Joel~A Tropp et~al.
\newblock An introduction to matrix concentration inequalities.
\newblock {\em Foundations and Trends{\textregistered} in Machine Learning},
  8(1-2):1--230, 2015.

\end{thebibliography}
%

\end{document}